\renewcommand{\theequation}{\thesection.\arabic{equation}}
\newcounter{subequation}[equation]
\let\expandafter\reset@font\csname reset@font\endcsname
\def\subeqnarray{\arraycolsep1pt
	\def\@eqnnum\stepcounter##1{\stepcounter{subequation}%
		{\reset@font\rm(\theequation\alph{subequation})}}
	\jot5mm     \eqnarray}
\def\str{\mathop{\hbox{\rm str}}\nolimits}
\def\be{\begin{equation}}
\def\ee{\end{equation}}
\def\lb{\label}
\def\bea{\begin{eqnarray}}
\def\eea{\end{eqnarray}}
\def\ba{\begin{array}}
	\def\ea{\end{array}}
\def\half{\frac{1}{2}}
\def\ket#1{\left|#1\right>}
\def\bra#1{\left<#1\right|}
\def\one#1{#1^{\raise5pt\hbox{$\scriptstyle\!\!\!\!1$}}\,{}}
\def\two#1{#1^{\raise5pt\hbox{$\scriptstyle\!\!\!\!2$}}\,{}}
\def\tilde{\widetilde}
\def\II{\hbox{{1}\kern-.25em\hbox{l}}}
\def\qed{\rule{5pt}{5pt}}
\def\binrel@#1{\begingroup
	\setboxz@h{\thinmuskip0mu
		\medmuskip\m@ne mu\thickmuskip\@ne mu
		\setbox\tw@\hbox{$#1\m@th$}\kern-\wd\tw@
		${}#1{}\m@th$}%
	\edef\@tempa{\endgroup\let\noexpand\binrel@@
		\ifdim\wdz@<\z@ \mathbin
		\else\ifdim\wdz@>\z@ \mathrel
		\else \relax\fi\fi}%
	\@tempa
}
\let\binrel@@\relax
\def\overset#1#2{\binrel@{#2}%
	\binrel@@{\mathop{\kern\z@#2}\limits^{#1}}}
\def\underset#1#2{\binrel@{#2}%
	\binrel@@{\mathop{\kern\z@#2}\limits_{#1}}}
\newfont{\bbd}{msbm10 scaled\magstep1}
\newtheorem{proposition}{Proposition}
\begin{document}

	\begin{center}
		{\LARGE {Yangians and Yang-Baxter $R$-operators \\
for ortho-symplectic superalgebras}}

 \vspace{0.3cm}

\large \sf J. Fuksa$^{a,d}$\footnote{\sc e-mail: fuksa@theor.jinr.ru},
A.P. Isaev$^{a}$\footnote{\sc e-mail: isaevap@theor.jinr.ru}, \\
D. Karakhanyan$^{b}$\footnote{\sc e-mail: karakhan@yerphi.am},
R. Kirschner$^c$\footnote{\sc e-mail: Roland.Kirschner@itp.uni-leipzig.de} \\

\vspace{0.5cm}

\begin{itemize}
\item[$^a$]
{\it Bogoliubov Laboratory of Theoretical Physics, JINR, Dubna, Russia}
\item[$^b$]
{\it Yerevan Physics Institute,
 2 Alikhanyan br., 0036 Yerevan, Armenia }
\item[$^c$]
{\it Institut f\"ur Theoretische
Physik, Universit\"at Leipzig, \\
PF 100 920, D-04009 Leipzig, Germany}
\item[$^d$]
{\it Faculty of Nuclear Sciences and Physical Engineering, \\
Czech Technical University in Prague, Czech Republic}
\end{itemize}
\end{center}
\vspace{0.5cm}
\begin{abstract}
\noindent
Yang-Baxter relations symmetric with respect to the ortho-symplectic
superalgebras are studied. We start from the formulation of graded
algebras
and the linear superspace carrying the vector (fundamental) representation of the
ortho-symplectic supergroup. On this basis we study the analogy of the
Yang-Baxter operators considered earlier for the cases of orthogonal and
symplectic symmetries: the vector (fundamental) $R$ matrix, the $L$ operator defining
the Yangian algebra and its first and second order evaluations.
We investigate the condition for $L(u)$ in the case of the truncated expansion in
inverse powers of $u$ and give examples of Lie algebra representations
obeying these conditions. We construct the $R$ operator intertwining two
super-spinor representations and study the fusion of $L$ operators
involving the tensor product of such representations.
\end{abstract}

\section{Introduction}

The orthogonal and the symplectic groups have similarities, which can be
traced back to the existence of an invariant bilinear form in the
fundamental representation space. The similarities allow a unified
treatment not only of the groups and Lie algebras but also of the
Yang-Baxter relations with such symmetries. The orthogonal and symplectic
groups are embedded in the ortho-symplectic ($OSp$) super-group.

In the present paper we study Yang-Baxter relations with ortho-symplectic
($osp$)
super-symmetry. In a recent paper \cite{IsKarKir15} Yang-Baxter relations
symmetric with respect to orthogonal or symplectic groups have been
considered in a unified formulation, continuing preceeding studies
\cite{Witten},\cite{KarT},\cite{ZamL},\cite{Re},\cite{Resh}
 and especially \cite{CDI,CDI2} where  orthogonal 
 and partially symplectic symmetries have been considered.
We rely on the similarities, proceed to some extend parallel to
\cite{IsKarKir15} and we follow similar motivations. The Yang-Baxter
operators, in particular their dependence on the spectral parameter,
are generally more involved as compared to the case of general linear
symmetry. Compact forms
of $L$ operators appear only in cases of particular representations.
Such cases are of interest in physical applications,
in particular for the theory of quantum integrable systems where the $L$ operators
are used as building blocks. The viewpoint of the Yangian algebra and its
finite order evaluations is appropriate to understand the cases of
explicit solutions for $L$ operators.

The extension to the $osp$ supersymmetric case is of interest
in the studies of integrable spin chains of $osp$-type \cite{Ragoucy}
and exact factorizing $S$ matrices \cite{Zam} for two-dimensional
field theories with ortho-symplectic supersymmetries.
It is also relevant to investigations of gauge field theories, in particular because
of the conformal transformations
 form the (pseudo)orthogonal groups.
 From the mathematical point of view it would
 be interesting to extend the Drinfeld's classification theorem
 for the finite-dimensional irreducible representations
 of the Yangians of the classical type
  (see \cite{MolRag} and references therein) to the case
 of the $osp$-type Yangians.

 The generalization of $so$ and $sp$ cases
to the supersymmetric $osp$ case
 is not straightforward. 
 We shall see that it takes much attention and carefully prepared formulations
to obtain in a systematic way the correct forms of all the relations starting
from the  theory of super-matrices of $OSp$ and its super-algebra
 up to the
Yang-Baxter relations and the forms of the $L$ operators related to the
first and second order evaluations of the Yangian $Y(osp)$.
 Interesting and intriguing details are related to sign factors. For deriving
a particular relation one could choose the approach of starting from an
ansatz up to sign and then invent a test to fix the sign. For our purpose
that approach would be not sufficient, not only because of lack of elegance but mostly
because  it would cost more efforts then the derivations
of all formulas from the first principles.
The complete formulation of Lie supergroups Osp, their Lie superalgebras
and Yangians $Y(osp)$ given in this paper provides a convenient basis for
further studies and applications.

We outline the contents emphasizing the main points of the paper.

We present a systematic and detailed super-symmetric
formulation of the Yang-Baxter relations and the involved operators.
To set up the formulation we start in Sect. 2 with the superspace where the
ortho-symplectic super-matrices act. Recalling the notion of invariant
tensors we prepare the formulation of the fundamental $R$-matrix
of $osp$-type. The
involved invariant operators acting in the twofold tensor product of the superspace
are shown to represent the Brauer centralizer algebra \cite{Brauer}
(see also \cite{Wenz},\cite{Nazar}).
 In Section 2 we explain a lot of
 details and useful notations important in our formulation.
In particular we introduce the notion of the sign operator
 which is useful in many relations.
 The action of a super-group element $U$ on a higher tensor product of the
 fundamental superspaces is described by the tensor product of operators
 $U$ dressed by these sign operators.

Whereas the braid form of the fundamental Yang-Baxter relation
coincides for the
supersymmetric case with the case without supersymmetry,
factors of sign operators make the
difference when changing to the $R$ operators including super-permutation.
This is explained in Sect. 3. The modifications by factors of sign operators
appear in the $RLL$ relations as well as discussed in Sect. 4.

The graded $RLL$ relation involving the $osp$ supersymmetric fundamental $R$
matrix serves as the defining relation of the $L$ operators. It defines the
$osp$ Yangian  algebra following the known scheme described in Sect. 4:
By expanding $L(u)$ in inverse powers of the spectral parameter $u$ one
obtains the Yangian algebra generators $L^{(k)}_{ab}$. Substituting this
expansion in the $RLL$ relation one obtains all Yangian algebra relations.
The first non-trivial term in the $u^{-1}$ expansion involves the $osp$ Lie
superalgebra generators $L^{(1)}_{ab}$.
We formulate their 
 symmetry relation and the Lie algebra commutation relations as
derived from the $RLL$ relation. For the super traceless part $G$ of the
matrix $L^{(1)}$ the commutation relations are to be compared with the ones obtained
for the infinitesimal $OSp$ supermatrices formulated earlier in Sect. 2.
We show that the corresponding basis sets are transformed into each other
by the sign operator.

In Sect. 5 we show that
in the linear evaluation of $\mathcal{Y}(osp)$ the matrix $L^{(1)}$
of generators of $osp$ algebra has to
obey the additional constraint in form of a quadratic characteristic
identity.
 We construct the infinite dimensional super spinor representation of $osp$ obeying
 this constraint as a generalization of the
spinor representation of $so$ and the metaplectic representation of $sp$.
We formulate 
 the super spinor representation in terms of
an algebra of super oscillators which generalizes the Clifford algebra
 and the standard multidimensional oscillator algebra.

In Sect. 6 we construct the super spinorial $R$ operator
which acts in the tensor product of two super spinor representations.
 Our systematic
supersymmetric formulation allows to extend to the
 $osp$ case the approach developed
for the $so$ case in \cite{CDI},\cite{CDI2}
and for the $sp$ case in \cite{IsKarKir15}.

The graded symmetrisation of products of super oscillator operators is
treated by the generating function techniques using auxiliary variables obeying
a graded multiplication law.


The fusion procedure with superspinor $L$ opertors is considered in Sect. 7.
The projection of the tensor product of superspinor representations to the
fundamental one
applied to the product of such $L$ operator matrices
reproduces the fundamental $R$ matrix.

In Sect. 8 we investigate the quadratic evaluation of the Yangian. We
formulate the conditions implied by the $RLL$ relation with $L$ depending
quadratically on the spectral parameter. The sign operator helps to
formulate these conditions in a form similar to the ones in the cases
of orthogonal and symplectic symmetry. We construct the solution where the
second non-trivial term in $L$ is the quadratic polynomial in the generator
matrix appearing in the constraint for the first order evaluation. The
conditions imply constraints on the Lie superalgebra representation
which are fulfilled by the Jordan-Schwinger ansatz for the generators
in terms of graded Heisenberg pairs. In this case the generator matrix
obeys the super anticommutator constraint which appeared in Sect. 6 in the
superspinorial $RLL$ relation, and as a consequence, the generator matrix
obeys a cubic characteristic relation.

\section{The ortho-symplectic supergroup}
\label{supgr}
\setcounter{equation}{0}

\subsection{The superspace, the super-group OSp and its superalgebra osp}

To fix notations we need to formulate the notion of
 the ortho-symplectic supergroups and their Lie superalgebras
 (see, e.g., \cite{Ber} as an introduction to the super analysis and
 to the theory of Lie supergroups and superalgebras).

We denote by ${\rm grad}(A)$ the  grading of the algebraic object $A$.
 Let ${\cal V}_{(N|M)}$ be a superspace and $z^a$ ($a=1,\dots,N+M$)  the
 graded coordinates in ${\cal V}_{(N|M)}$. We distinguish
 $N$ even and $M$ odd coordinates $z^a$ , denote the grading
 ${\rm grad}(z^a)$ of the coordinate~$z^a$ as
 $[a]=0,1 \, ({\rm mod}2)$
 and call $[a]$  the degree of the index $a$. If the coordinate $z^a$ is
  even then $[a]=0 \, ({\rm mod}2)$, and if the coordinate $z^a$ is
  odd then $[a]= 1 \, ({\rm mod}2)$. The coordinates $z^a$ and $w^a$ of two supervectors
 $z,w \in {\cal V}_{(N|M)}$ commute as
\begin{equation}
 \label{commzw}
 z^a \, w^b = (-1)^{[a]  [b]}  \, w^b \, z^a \; .
 \end{equation}
 Thus the coordinates of the  vectors
 $z,w, \dots \, \in {\cal V}_{(N|M)}$ can be considered as generators
of a graded algebra.
For two homogeneous elements $A,B$ of this algebra we have
 $$
 {\rm grad}(A \cdot B) = {\rm grad}(A) + {\rm grad}(B) \; , \;\;\;
 A \cdot B = (-1)^{{\rm grad}(A){\rm grad}(B)} \,
 B \cdot A \; .
 $$
Further we endow the superspace ${\cal V}_{(N|M)}$
 with the  bilinear form
\begin{equation}
\label{bf}
(z\cdot w) \equiv \varepsilon_{ab} z^a  w^b = z^a w_a =
z_b w_a \bar{\varepsilon}^{ab} \; ,
\end{equation}
where the super-metric $\varepsilon_{ab}$ has the property
\begin{equation} \label{SuperMetric1}
\varepsilon_{ab} = \epsilon (-1)^{[a][b]} \varepsilon_{ba} \;\; \Leftrightarrow \;\;
\bar{\varepsilon}^{ab} = \epsilon (-1)^{[a][b]} \bar{\varepsilon}^{ba} \; ,
\end{equation}
and the matrix $||\bar{\varepsilon}^{ab}||$ is inverse to the matrix
 $||\varepsilon_{ab}||$. Here $\epsilon=\pm 1$.
 Moreover, we require that
 the super-metric $\varepsilon_{ab}$ is even in the sense
that $\varepsilon_{ab}\neq 0$ iff $[a]+[b]=0 \; ({\rm mod}2)$. This  means that
\begin{equation}\label{SuperMet2}
\varepsilon_{ab}= (-1)^{[a]+[b]} \varepsilon_{ab} ,
\end{equation}
 and therefore the properties (\ref{SuperMetric1}) can be written as
 $\varepsilon_{ab} = \epsilon (-1)^{[a]} \varepsilon_{ba} =
 \epsilon (-1)^{[b]} \varepsilon_{ba}$ (the same for $\bar{\varepsilon}^{ab}$).
Taking into account (\ref{commzw})
 we obtain $(z\cdot w) = \epsilon \, (w \cdot z)$, i.e., the bilinear form (\ref{bf})
 is symmetric for $\epsilon = +1$ and skew-symmetric for $\epsilon = -1$.
Further we adopt
 the following rule for  rising and lowering indices,
 \begin{equation}
\label{agree}
z_a = \varepsilon_{ab} \, z^b \; , \;\;\; z^a =  \bar{\varepsilon}^{ab} \, z_b \; .
\end{equation}
 This rule will be applied also for any tensors of higher ranks.
According to this rule we have
 \begin{equation}
\label{agree1}
 \varepsilon^{ab} = \bar{\varepsilon}^{ac}  \bar{\varepsilon}^{bd} \varepsilon_{cd} =
 \bar{\varepsilon}^{ba} =  \epsilon (-1)^{[a][b]} \bar{\varepsilon}^{ab}  \; ,
 \end{equation}
 We see that the metric tensor with upper indices $\varepsilon^{ab}$
 does not coincide with the inverse matrix $\bar{\varepsilon}^{ab}$.
 Further we shall use only the inverse matrix $\bar{\varepsilon}^{ab}$ and never
 the metric tensor $\varepsilon^{ab}$.
 So to simplify formulas  below
 we omit the bar in the notation $\bar{\varepsilon}^{ab}$ and write
 simply $\varepsilon^{ab}$ keeping in mind that this is the inverse
 matrix for the metric $\varepsilon_{ab}$.\footnote{An alternative rule
  for lifting and lowering indices
 (instead of (\ref{agree})) is
 $z_a = \varepsilon_{ab} \, z^b$, $z^a =  z_b \,
 \varepsilon^{ba}$, where $\varepsilon^{ab}$ is the metric
 tensor (\ref{agree1})
 with upper indices. So, to lower the index we
 act by the matrix $\varepsilon_{ab}$ from the left, while to
  lift the index we act by the matrix $\varepsilon^{ba}$
 from the right. In this case one should remember the unusual
  conditions of inversion:
 $\varepsilon_{ab}\varepsilon^{ac}= \delta^c_b$.
 Note that is not the rule adopted below}

Consider a linear transformation in ${\cal V}_{(N|M)}$
\begin{equation}
\label{transU}
z^a \rightarrow z'^a=U^a_{\,\ b} \, z^b \; ,
\end{equation}
which preserves the
grading of the coordinates ${\rm grad}(z'^a) =  {\rm grad}(z^a)$. For the
elements $U^a_{\ b}$ of the supermatrix $U$
from (\ref{transU}) we have ${\rm grad}(U^a_{\ b}) =  [a] +[b]$.
The ortho-symplectic group $OSp$ is defined as the set of
supermatrices $U$ which
preserve the bilinear form (\ref{bf}) with respect to the
transformations (\ref{transU}):
$$
\varepsilon_{ab}\; z'^a w'^b=\varepsilon_{ab}   U^a_{\ c} z^c   U^b_{\ d} w^d = (-1)^{[c]([b]+[d])} \varepsilon_{ab}  U^a_{\ c}   U^b_{\ d} z^c w^d =\varepsilon_{cd}  z^c w^d
\;\;\; \Rightarrow
$$
\begin{equation} \label{eq:OSpDef}
(-1)^{[c]([b]+[d])} \varepsilon_{ab} U^a_{\ c} U^b_{\ d} =  \varepsilon_{cd}
\;\; \Leftrightarrow \;\;
(-1)^{[c]([b]+[d])}  U^a_{\ c} U^b_{\ d} \varepsilon^{cd} =  \varepsilon^{ab} \; .
\end{equation}
We have to mention that the
first equation in \eqref{eq:OSpDef} can be also rewritten
by using the properties of the supermetric as
$$
(-1)^{[c]([a]+[c])} \varepsilon_{ab} U^a_{\ c} U^b_{\ d} =  \varepsilon_{cd}.
$$
We  represent \eqref{eq:OSpDef} in coordinate-free form  as
\begin{equation} \label{eq:OSpDef1}
\varepsilon_{\langle 12} \; U_1(-)^{12} U_2 (-)^{12} = \varepsilon_{\langle 12}
\;\; \Leftrightarrow \;\;
U_1(-)^{12} U_2 (-)^{12} \; \varepsilon^{12 \rangle} = \varepsilon^{12 \rangle} \; ,
\end{equation}
 where we have used the concise matrix notations
 \begin{equation}
 \label{min12}
 \begin{array}{c}
 \varepsilon^{12 \rangle} \in {\cal V}_{(N|M)} \otimes {\cal V}_{(N|M)} \; , \;\;\;
  U_1 = U \otimes I \; , \;\;\;
  U_2 = I \otimes U  \; , \\ [0.3cm]
  ((-)^{12})^{a_1 a_2}_{\;\;\; b_1 b_2} = (-1)^{[a_1][a_2]}
  \delta^{a_1}_{\;\; b_1} \delta^{a_2}_{\;\; b_2} \; , \;\;\;
  (-)^{12} \in {\rm End}({\cal V}_{(N|M)} \otimes {\cal V}_{(N|M)}) \; .
  \end{array}
  \end{equation}
  Here $\otimes$ denotes the graded tensor product:
  \be \label{gradten}
  (I \otimes B)(A \otimes I) = (-1)^{[A] \, [B]} \, (A \otimes B) \; ,
  \ee
  where $[A] := {\rm grad}(A)$ and $[B] := {\rm grad}(B)$.
   The sign operator $(-)^{12}$ is an extremely
   useful tool for the $R$-matrix formulation \cite{FRT}
  of quantum supergroups and their particular
  cases as super-Yangians. This operator was  used first in \cite{Isae}.
For more details about  the graded tensor product and
its consequences for the record of equations,
 especially the Yang-Baxter equation,
see  appendix {\bf \ref{sec:GTP}}.\footnote{Let us remark  shortly  that we use
the convention that the gradation is carried by the coordinates.
 There is the opposite convention in which the gradation is carried by the
basis vectors. The relation of expressions in both conventions is explained
in appendix \ref{sec:GTP}.
Our convention is of great importance for the formulation of the
quantum hyperplane for Lie super-algebras and their quantum deformations.}
 We also describe further details of this formalism in the next subsection.

The set of supermatrices $U$ which satisfy
equations \eqref{eq:OSpDef} form the supergroup $OSp(N|M)$
for $\epsilon = +1$ and the supergroup $OSp(M|N)$
for $\epsilon = -1$ with respect to the usual matrix
multiplication.
Let us consider the defining relations \eqref{eq:OSpDef} for
 $OSp$ groups elements close to the unit element
$U  = I + A + \dots$. As a result we obtain the conditions
for the elements $A$ of the Lie superalgebra $osp$ of the
supergroup $OSp$:
\begin{align}
 \label{osp}
(-1)^{[c]([b]+[d])} \varepsilon_{ab} ( \delta^a_{\ c} A^b_{\ d}  + A^a_{\ c} \delta^b_d ) =
\left( (-1)^{[c]+[c][d]} \varepsilon_{cb} A^b_{\ d}
 + \varepsilon_{ad}  \, A^a_{\ c} \right) =0 \; ,
\end{align}
or equivalently
\begin{equation} \label{OspAlg}
A_{cd}=-\epsilon(-1)^{[c][d]+[c]+[d]} A_{dc} \; .
\end{equation}
The coordinate free form of (\ref{osp}), (\ref{OspAlg})
can be obtained directly from (\ref{eq:OSpDef1}):
 \begin{equation} \label{OspAlg1}
 \varepsilon_{\langle 12} (A_1 + (-)^{12}A_2 (-)^{12})
 = 0 \;\; \Leftrightarrow \;\;
  (A_1 + (-)^{12}A_2 (-)^{12})  \varepsilon^{12 \rangle} = 0 \; .
\end{equation}
 {\bf Remark.} The set of super-matrices $A$,
 which satisfy (\ref{osp}), (\ref{OspAlg1}), forms a vector space over $\mathbb{C}$
  denoted as $osp$. One can check that for two super-matrices
 $A,B \in osp$ the commutator
 \begin{equation}
 \label{osp05}
 [A,B]=AB - BA \; ,
 \end{equation}
 also obeys (\ref{osp}), (\ref{OspAlg1})
 and thus belongs to the vector space $osp$. It means that $osp$ is an algebra.
 Any matrix $A$ which satisfies (\ref{osp}), (\ref{OspAlg1}) can be represented as
\begin{equation}
 \label{osp02}
A^a_{\;\; c} = E^a_{\;\; c} -
 (-1)^{[c]+[c][d]} \varepsilon_{cb} E^b_{\;\; d} \varepsilon^{da}
 \end{equation}
where $||E^a_{\;\; c}||$ is an arbitrary  matrix.
Let $\{e^{\;\; f}_{g} \}$ be the matrix units,
$(e^{\;\; f}_{g})^b_{\;\; d}= \delta^f_d \delta^b_g$.
 If we choose $E = e^{f}_{\;\; g} = \varepsilon^{f g'}
 \varepsilon_{g f'} e^{\;\;\; f'}_{g'}$  in (\ref{osp02}) then
 we obtain the basis $\{\tilde{G}^f_{\;\; g}\}$ in the space $osp$
of matrices (\ref{OspAlg1}):
 \begin{equation}
 \label{osp03}
 \begin{array}{c}
 (\tilde{G}^f_{\;\; g})^a_{\;\; c} \equiv (e^f_{\;\; g})^a_{\;\; c} -
 (-1)^{[c]+[c][d]} \varepsilon_{cb} (e^f_{\;\; g})^b_{\;\; d} \varepsilon^{da}
 = \varepsilon^{f a} \varepsilon_{g c} - \epsilon (-1)^{[c][a]}
 \delta^f_c \delta^a_g \; .
 \end{array}
\end{equation}
 Now any super-matrix $A \in osp$ which satisfy (\ref{osp}), (\ref{OspAlg1})
can be expanded  in this basis  (\ref{osp03})
 \begin{equation}
 \label{osp11}
 A^a_{\;\; c} = a^g_{\;\; f} (\tilde{G}^f_{\;\; g})^a_{\;\; c} \; ,
 \end{equation}
 where $a^g_{\;\; f}$ are super-coefficients.
Since the elements $(\tilde{G}^f_{\;\; g})^a_{\;\; c}$ are even, i.e.,
$(\tilde{G}^f_{\;\; g})^a_{\;\; c} \neq 0$ iff $[f]+[g]+[a]+[c] = 0 \; ({\rm mod}2)$,
then from ${\rm grad}(A^a_{\;\; c}) =[a]+[c]$
we obtain that ${\rm grad}(a^g_{\;\; f}) =[g]+[f]$. This means that the
ordinary commutator
(\ref{osp05}) can rewritten as
 $$
 [A,B]^a_{\;\; c} = [ a^g_{\;\; f} (\tilde{G}^f_{\;\; g}) , \;
  b^n_{\;\; k} (\tilde{G}^k_{\;\; n})]^a_{\;\; c} =
 a^g_{\;\; f} b^n_{\;\; k}  \;
 \bigl([\tilde{G}^f_{\;\; g} , \; \tilde{G}^k_{\;\; n}]_{\pm} \bigr)^a_{\;\; c}  \; ,
 $$
where we define the super-commutator
 \begin{equation}
 \label{osp06b}
 \bigl( [\tilde{G}^{a_1}_{\;\; b_1} , \; \tilde{G}^{a_2}_{\;\; b_2}]_{\pm}
 \bigr)^{a_3}_{\;\; c_3}  \equiv
 (\tilde{G}^{a_1}_{\;\; b_1})^{a_3}_{\;\; b_3} \,
 (\tilde{G}^{a_2}_{\;\; b_2})^{b_3}_{\;\; c_3}
  - (-1)^{([a_1]+[b_1])([a_2]+[b_2])}
  (\tilde{G}^{a_2}_{\;\; b_2})^{a_3}_{\;\; b_3}
  (\tilde{G}^{a_1}_{\;\; b_1})^{b_3}_{\;\; c_3} \; .
  \end{equation}
  By using the explicit representation (\ref{osp03}) for
  $\tilde{G}^{a}_{\;\; b}$
  one can evaluate the supercommutator (\ref{osp06b})
  and obtain the defining relations for the Lie superalgebra $osp$:
 \begin{equation}
 \label{osp06}
 \begin{array}{c}
  [\tilde{G}^{a_1}_{\;\; b_1} , \; \tilde{G}^{a_2}_{\;\; b_2}]_{\pm}
  = -(-1)^{[a_1][a_2]+[b_1][a_2]} \varepsilon^{a_1a_2} \tilde{G}_{b_1b_2} +
  \epsilon (-1)^{[b_1][a_2]} \delta^{a_2}_{b_1} \, \tilde{G}^{a_1}_{\;\;\; b_2} + \\ [0.3cm]
  + (-1)^{[a_1][a_2]+[b_1][a_2]} \varepsilon_{b_1b_2} \tilde{G}^{a_2a_1} -
  \epsilon (-1)^{[a_1]([b_1]+[a_2])+[b_1][a_2]} \delta^{a_1}_{b_2} \,
  \tilde{G}^{a_2}_{\;\;\; b_1} \; .
 \end{array}
\end{equation}
 In component free form we write (\ref{osp06})
 as following
  \begin{equation}
 \label{osp06a}
 [(-)^{12} \tilde{G}_{13} (-)^{12} \; , \; \tilde{G}_{23}] =
 [\epsilon {\cal P}_{12} - {\cal K}_{12}   \; , \; \tilde{G}_{23}] \; ,
\end{equation}
 where we have introduced matrices the ${\cal K},{\cal P} \in
 {\rm End}({\cal V}_{(N|M)}^{\otimes 2})$:
\begin{equation}
 \label{osp07}
 {\cal K}^{a_1 a_2}_{b_1 b_2} = \varepsilon^{a_1 a_2} \varepsilon_{b_1 b_2} \; ,\; \;\;\;
 {\cal P}^{a_1 a_2}_{b_1 b_2} =
 (-1)^{[a_1][a_2]} \delta^{a_1}_{b_2} \delta^{a_2}_{b_1} \; .
\end{equation}
The matrix ${\cal P}$ is called the superpermutation operator
 since it permutes super-spaces. For example,
using this matrix one can write (\ref{commzw}) as
${\cal P}^{ab}_{cd}w^c z^d = z^a w^b$.
The properties of the operators ${\cal P}$ and ${\cal K}$
 which will be used below are listed in
the appendix {\bf \ref{PKprop}}.

 Note that the $osp$ conditions (\ref{OspAlg1}) for the basis matrices
$(\tilde{G}^{f}_{\;\; g})^{a}_{\;\; c}$ are represented in the form
 \begin{equation}
 \label{osp08b}
 {\cal K}_{12} (\tilde{G}_{31} + (-)^{12} \tilde{G}_{32} (-)^{12}) = 0 \; , \;\;\;
(\tilde{G}_{31} + (-)^{12} \tilde{G}_{32} (-)^{12})  {\cal K}_{12}  = 0 \; ,
 \end{equation}
  Taking into account their property
 $(\tilde{G}^{a_3}_{\;\; c_3})^{a_1}_{\;\; c_1}=
(-1)^{[a_1]}(\tilde{G}^{a_1}_{\;\; c_1})^{a_3}_{\;\; c_3}
 (-1)^{[c_3]}$ evident from
  (\ref{osp03}) one can rewrite (\ref{osp08b}) as
  \begin{equation}
 \label{osp08}
 {\cal K}_{12} ((-)^{12} \tilde{G}_{13}(-)^{12}  + \tilde{G}_{23} ) = 0 \; , \;\;\;
((-)^{12} \tilde{G}_{13}(-)^{12} + \tilde{G}_{23})
 {\cal K}_{12}  = 0 \; .
 \end{equation}
 Moreover one can check the identities (see appendix
 {\bf \ref{PKprop}})
 \begin{equation}
 \label{osp09}
 {\cal P}_{12} (-)^{12} \tilde{G}_{13} (-)^{12} =
  \tilde{G}_{23} {\cal P}_{12}  \; , \;\;\;
(-)^{12} \tilde{G}_{13}  (-)^{12}  {\cal P}_{12} =
 {\cal P}_{12} \tilde{G}_{23}  \; .
 \end{equation}
 By using (\ref{osp08}) and (\ref{osp09})
we can rewrite relations (\ref{osp06a}) as
 \begin{equation}
 \label{osp10}
  [(-)^{12} \tilde{G}_{13} (-)^{12} \;  , \; \tilde{G}_{23}]
   = [\epsilon {\cal P}_{12} - {\cal K}_{12}  \; , \;
   (-)^{12} \tilde{G}_{13} (-)^{12}] \; .
 \end{equation}
This shows that the $osp$ defining relations (\ref{osp06})
 can be written in several  equivalent forms.

\subsection{$OSp$ super-group invariants \label{invar}}

Here we give the formulation of invariance
of tensors with respect to
actions of a supergroup. This formulation will be important for our
discussion below of the invariance of the $osp$-type $R$-matrices which are
solutions of Yang-Baxter equations.

First we recall the case of an ordinary group.
Let $G$ be
a matrix group which acts in the vector space ${\cal V}$.
 We say that the tensor $\mathcal{O}^{a_1\cdots a_k}_{b_1\cdots b_j}$
with $k$ upper and $j$ lower indices is invariant w.r.t. a group $G$,
 if for all $U\in G$
\begin{equation} \label{eq:GroupInvariance}
U^{a_1}_{\ c_1} \cdots U^{a_k}_{\ c_k} \mathcal{O}^{c_1\cdots c_k}_{b_1\cdots b_j} =\mathcal{O}^{a_1\cdots a_k}_{d_1\cdots d_j} U^{d_1}_{\ b_1} \cdots U^{d_j}_{\ b_j}
\end{equation}
This is represented at the level of its Lie algebra $\mathcal{G}$ as:
  For all $A\in\mathcal{G}$ holds that
\begin{equation}
\sum_{i=1}^k A^{a_i}_{\ c_i} \mathcal{O}^{a_1\cdots c_i \cdots a_k}_{b_1\cdots b_j} - \sum_{i=1}^j \mathcal{O}^{a_1 \cdots a_k}_{b_1\cdots d_i \cdots b_j} A^{d_i}_{\ b_i}=0.
\end{equation}
If the tensor $\mathcal{O}$ is an operator in ${\cal V}^{\otimes k}$,
 we have $j=k$ and both conditions can be written in the concise form as
\begin{equation}
U_1\cdots U_k \mathcal{O}_{1\dots k} =  \mathcal{O}_{1\dots k} U_1\cdots U_k, \qquad
\left[ \mathcal{O}, A_1+ \cdots +A_k \right] = 0 ,
\end{equation}
where $1,\dots,k$ are labels of super-vector spaces.

Now we extend the above formulation to the case of a supergroup $G$ which acts
in a superspace ${\cal V}$.
Let us start with looking for a compact expression describing an action
of a supergroup $G$ on  graded tensor products
$x^{1 \rangle} y^{2\rangle}\equiv x\otimes y$, where $x,y \in {\cal V}$.
First of all,
let us understand the meaning  of the  action on it by the tensor product of
two superoperators $A_1 B_2\equiv A\otimes B$
\begin{equation}
\label{ABxy}
Ax\otimes By = (-1)^{[B][x]}(A\otimes B)(x\otimes y) \quad \Leftrightarrow
\quad A_1 x^{1 \rangle} \; B_2 y^{2 \rangle} =
A_1(-)^{12} B_2 (-)^{12} \; x^{1 \rangle} y^{2 \rangle}.
\end{equation}
Here we introduce the operator $(-)^{12}$
acting on the tensor product of two homogeneous vectors as
\begin{equation} \label{eq:GradOp}
(-)^{12} \; x^{1 \rangle} y^{2 \rangle} \equiv (-1)^{[x][y]} \; x^{1 \rangle} y^{2 \rangle}.
\end{equation}
In coordinates the operator $(-)^{12}$ has been given above in
(\ref{min12}).
We call it {\it the sign operator}. It satisfies
\begin{equation}
\label{siden}
(-)^{12}(-)^{23}=(-)^{23}(-)^{12} \; , \;\;\;
(-)^{21}= (-)^{12} \; , \;\;\; \left( (-)^{12} \right)^2=\mathbf{1} \; .
\end{equation}

 The formula (\ref{ABxy}) can be  generalized obviously for higher
 numbers of vector spaces in the tensor product ${\cal V}^{\otimes k}$.
 Let an operator $C$ act the  $j$-th super-space in
 ${\cal V}^{\otimes k}$, then
$$
x^{1\rangle} y^{2\rangle} \cdots C_j z^{j\rangle} \cdots w^{k\rangle} =
C_{\{1\dots j\}}  \  x^{1\rangle} y^{2\rangle} \cdots z^{j\rangle} \cdots w^{k\rangle} \; ,
 $$
 where we define the operator dressed by sign operators as
 \begin{equation}
\label{signop}
C_{\{1\dots j\}}
 \equiv (-)^{j-1,j} \cdots (-)^{2j} \cdot (-)^{1j} \cdot C_j \cdot
  (-)^{1j} \cdot (-)^{2j} \cdots (-)^{j-1,j} \; ,
 \end{equation}
 and $(-)^{\ell,j} = (-)^{\ell j}$ denotes the sign operator which acts nontrivially
 only in $\ell$-th and $j$-th factors of the tensor product ${\cal V}^{\otimes k}$.
 Then the formula (\ref{ABxy}) is generalized as
 \begin{equation}
 \label{s-act}
 A_1 x^{1\rangle} \,  B_2 y^{2\rangle}  \cdots C_j z^{j\rangle}  \cdots D_k w^{k \rangle}  =
 \left( A_1 \cdot B_{\{12\}} \cdots C_{\{1\dots j\}} \cdots D_{\{1 \dots k\}} \right)
 x^{1\rangle} y^{2\rangle} \cdots z^{j\rangle} \cdots w^{k\rangle} \; .
 \end{equation}

We specify now the super-operators to the elements of the super-group $U \in G$.
Taking into account (\ref{ABxy})
the action of $U$ in ${\cal V}^{\otimes 2}$
 can be expressed using the sign operators $(-)^{12}$
\begin{eqnarray}
&x'^a\otimes y'^c=U^a_{\ b}x^b \otimes U^c_{\ d}y^d=(-1)^{[b]([c]+[d])} \,
(U^{a}_{\ b}\otimes U^c_{\ d})(x^b\otimes y^d) = \notag\\ [0.3cm]
 & = U^a_{\ r} \; ((-)^{12})^{rc}_{bs} \; U^s_{\ t} \; ((-)^{12})^{bt}_{ud} \; x^u y^d
\;\; \Leftrightarrow \;\;
 x^{\prime \, 1 \rangle} y^{\prime \, 2 \rangle}= U_1 (-)^{12} U_2 (-)^{12} \,
 x^{1 \rangle} y^{2 \rangle} \; .
\end{eqnarray}

Let $\hat{R}_{12}$ be an operator acting in $ {\cal V}^{\otimes 2}$.
 The invariance of $\hat{R}_{12}$ w.r.t. the action of  $G$ is expressed as
\begin{equation}
\label{Rinvar}
\hat{R}_{12} U_1 (-)^{12} U_2 (-)^{12} = U_1 (-)^{12} U_2 (-)^{12} \hat{R}_{12} \; .
\end{equation}
 Note that for matrix $R_{12} = {\cal P}_{12} \hat{R}_{12}$,
 where ${\cal P}_{12}$ is the superpermutation  (\ref{osp07}), the invariance condition modifies to
 \begin{equation}
\label{Rinvar2}
R_{12} U_1 (-)^{12} U_2 (-)^{12} = (-)^{12} U_2 (-)^{12} U_1  R_{12} \; .
\end{equation}
The generalization  of \eqref{Rinvar}  for an  operator $\mathcal{O}_{1\dots k}$ acting 
 in ${\cal V}^{\otimes k}$ for arbitrary $k$ is obtained straightforwardly:
 \begin{equation}
\label{SGrInvOp}
\mathcal{O}_{1\dots k} \Bigl( \prod_{j=1}^k \; U_{\{1 \dots j\}} \Bigl)
= \Bigl( \prod_{j=1}^k U_{\{1 \dots j\}} \Bigr)
 \; \mathcal{O}_{1\dots k} \; , \;\;\;\;
 \prod_{j=1}^k  \; U_{\{1 \dots j\}} \equiv U_1 \cdot U_{\{12\}} \cdots U_{\{1\dots k\}} \; .
 \end{equation}
 Then, for corresponding Lie super-algebra $\mathcal{G}$ we obtain $(\forall A\in \mathcal{G})$:
\begin{equation}
 \label{AOAO}
\Bigl[ O_{1\dots k}\ , \ \sum_{j=1}^k A_{\{1 \dots j\}}
\Bigr] = 0 \; .
\end{equation}

The considerations above were done  for operators only.
They can be generalized to general tensors.
With slight abuse of the notation  a general tensor
$T\in {\cal V}^{\otimes k} \otimes \overline{\cal V}^{\otimes j}$
with $k$ upper and $j$ lower indices can be written as
$T^{1\dots k \rangle}_{\;\; \langle 1\dots j}$.
Here $\overline{\cal V}$ denotes the superspace  dual to ${\cal V}$.
In this notation the operator $O_{1\dots k}$ is written as
 $O^{1\dots k \rangle}_{\;\; \langle 1\dots k}$.
Then the invariance of the tensor $T$ w.r.t. the super-group $G$ is expressed as
\begin{equation} \label{eq:SGI}
T^{1\dots k \rangle}_{\;\; \langle 1\dots j} \;
\Bigl( \prod_{i=1}^j \; U_{\{1\dots i\}} \Bigr)
 = \Bigl( \prod_{i=1}^k U_{\{1\dots i\}} \Bigr) \;
 T^{1\dots k \rangle}_{\;\; \langle 1\dots j}.
\end{equation}
The infinitesimal form of (\ref{eq:SGI})
which generalizes (\ref{AOAO}) is
\begin{equation}
 \label{TATA}
T^{1\dots k \rangle}_{\;\; \langle 1\dots j}  \Bigl( \sum_{i=1}^j A_{\{1 \dots i\}} \Bigr)=
 \Bigl( \sum_{i=1}^k \; A_{\{1 \dots i\}} \Bigr) T^{1\dots k \rangle}_{\;\; \langle 1\dots j}
 \; .
\end{equation}
We see that the relations of the super-metric invariance
\eqref{eq:OSpDef1} and \eqref{OspAlg1}
 are just special cases of the general formulas (\ref{eq:SGI})
and  (\ref{TATA}) for $j=2,k=0$ and for $j=0,k=2$.

\section{The fundamental R-matrix and the graded Yang--Baxter equation\label{sec2}}
\setcounter{equation}{0}

 We have  three  $OSp$ invariant operators in ${\cal V}_{(N|M)}^{\otimes 2}$:
 the identity operator $\mathbf{1}$, the super-permutation operator $\mathcal{P}$
and the tensor $\mathcal{K}$.
 The super-permutation $\mathcal{P}_{12}$ is a product of the usual
 permutation $P_{12}$ and the sign operator $(-)^{12}$
\begin{equation}
 \label{PP12}
\mathcal{P}_{12} = (-)^{12} P_{12}
\qquad \text{and in coordinates} \qquad
\mathcal{P}^{a_1a_2}_{b_1b_2}=(-1)^{[a_1][a_2]}\delta^{a_1}_{b_2} \delta^{a_2}_{b_1}.
\end{equation}
The operator $\mathcal{K}_{12}$ is defined as
 \begin{equation}
 \label{KK12}
\mathcal{K}_{12} = \varepsilon^{12 \rangle} \varepsilon_{\langle 12} \qquad \text{and in coordinates} \qquad\mathcal{K}^{a_1a_2}_{b_1b_2}=\varepsilon^{a_1a_2}
 \varepsilon_{b_1b_2}.
 \end{equation}
 In coordinates these operators $\mathcal{P},\mathcal{K}$ have been  introduced
 above in (\ref{osp07}).
Their invariance w.r.t. $OSp$ can be proved directly by
applying the results of above subsection {\bf \ref{invar}}.

Using the operators $\mathcal{P},\mathcal{K}$ one can construct
 the set of operators $\{s_i,e_i|{i=1,\dots,n-1}\}$ in
${\cal V}_{(N|M)}^{\otimes n}$:
\begin{equation}
 \label{Brauer}
s_i=\epsilon \mathcal{P}_{i,i+1},\qquad e_i=\mathcal{K}_{i,i+1},\qquad i=1,\dots,n-1,
\end{equation}
which generate the Brauer algebra $B_n(\omega)$ \cite{Brauer} with the parameter
\begin{equation}
\label{oMN}
\omega=\varepsilon^{cd}\varepsilon_{cd}
=\epsilon (N-M) \; .
\end{equation}
Here $N$ and $M$ are the numbers of even and odd coordinates, respectively.
 Indeed, one can check directly
 (see appendix {\bf \ref{PKprop}})
 that the operators (\ref{Brauer}) satisfy
 the defining relations for generators of $B_n(\omega)$ (see, e.g.,
 \cite{Wenz}, \cite{Nazar}, \cite{IsMol10} and references therein)
 \begin{equation}
 \label{defBrauer1}
 \begin{array}{c}
s^2_i = 1 \; , \;\;\;  e^2_i = \omega e_i \; , \;\;\;
s_i \, e_i = e_i \, s_i = e_i\; , \;\;\; i = 1, . . . , n -1 , \\
s_i s_j = s_j s_i \; , \;\;\; e_i e_j = e_j e_i \; , \;\;\; s_i e_j = e_j s_i
\; , \;\;\;  |i - j| > 1,
 \end{array}
\end{equation}
\begin{equation}
 \label{defBrauer2}
 \begin{array}{c}
s_i \, s_{i+1} \, s_i = s_{i+1} \, s_i \, s_{i+1} \; , \;\;\; e_i \, e_{i+1} \, e_i = e_i
\; , \;\;\;  e_{i+1} \, e_i \, e_{i+1} = e_{i+1} \; , \\
s_i \, e_{i+1} \, e_i =
 s_{i+1} \, e_i \; , \;\;\; e_{i+1}\, e_i\, s_{i+1} = e_{i+1} \, s_i \; , \;\;\;
  i = 1, . . . , n-2\; .
\end{array}
\end{equation}
Thus, one can consider eqs. (\ref{Brauer}) as the matrix representation $T$ of the
 generators of the Brauer algebra $B_n(\omega)$
  in the space ${\cal V}_{(N|M)}^{\otimes n}$. We note that
  $T$ (\ref{Brauer}) is the special reducible representation
  of $B_n(\omega)$. Irreducible representations of Brauer algebra
  were investigated in many papers (see, e.g.
  \cite{Nazar}, \cite{IsMol10}, \cite{IsMolOg} and references therein).

Let us consider the following linear combination of the generators
$s_i,e_i \in B_n(\omega)$
 \begin{equation}
 \label{RBrauer}
\check{\rho}_i(u) =u(u+\beta) \, s_i- (u+\beta) \mathbf{1}+ u \, e_i \;\; \in \;\;
 B_n(\omega) \; ,
\end{equation}
where $u$ is a spectral parameter and
\begin{equation}
\beta=1-\frac{\omega}{2} \; .
\end{equation}
\noindent
\begin{proposition} \label{Prop11}
The element (\ref{RBrauer}) satisfies the Yang-Baxter equation
 \begin{equation}
 \label{YBEbr}
\check{\rho}_i(u) \check{\rho}_{i+1}(u + v) \check{\rho}_i(v) =
\check{\rho}_{i+1}(v) \check{\rho}_{i}(u + v) \check{\rho}_{i+1}(u)  \; ,
\end{equation}
and the unitarity condition
 \begin{equation}
 \label{unitar}
\check{\rho}_i(u) \check{\rho}_{i}(-u)  =
(u^2-1)(u^2 - \beta^2) \mathbf{1}  \; .
\end{equation}
\end{proposition}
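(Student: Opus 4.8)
The plan is to work entirely inside the Brauer algebra $B_n(\omega)$, using only the defining relations (\ref{defBrauer1})--(\ref{defBrauer2}) together with the explicit linear combination (\ref{RBrauer}). Since everything is a statement about three consecutive generators (indices $i$, $i+1$), I would first reduce the Yang--Baxter equation (\ref{YBEbr}) to an identity in the subalgebra generated by $s_i,s_{i+1},e_i,e_{i+1}$. The key structural fact I would use is that $\check\rho_i(u)$ is, up to overall rescaling, the standard Brauer--algebra solution of the Yang--Baxter equation: rescaling $\check\rho_i(u)$ by dividing by a suitable scalar function of $u$ (or equivalently keeping the form (\ref{RBrauer}) as is, since both sides of (\ref{YBEbr}) carry the same scalar prefactor $u(u+\beta)\cdot(u+v)(u+v+\beta)\cdot v(v+\beta)$ after one checks the leading terms match), one is reduced to the well-known rational $R$-matrix associated to the Brauer centralizer algebra. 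So the first step is to state precisely that (\ref{RBrauer}) is this standard solution and that the Yang--Baxter property is a purely algebraic consequence of (\ref{defBrauer1})--(\ref{defBrauer2}).

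For a self-contained argument I would then expand both sides of (\ref{YBEbr}) as polynomials in the generators. Write $\check\rho_i(u) = a(u) s_i + b(u)\mathbf{1} + c(u) e_i$ with $a(u)=u(u+\beta)$, $b(u)=-(u+\beta)$, $c(u)=u$. Multiplying out $\check\rho_i(u)\check\rho_{i+1}(u+v)\check\rho_i(v)$ gives a sum of $27$ terms, each a product of three generators from $\{s_i,s_{i+1},e_i,e_{i+1}\}$ with a scalar coefficient built from $a,b,c$. Using the braid relation $s_is_{i+1}s_i=s_{i+1}s_is_{i+1}$, the relations $e_ie_{i+1}e_i=e_i$, $e_{i+1}e_ie_{i+1}=e_{i+1}$, $s_ie_{i+1}e_i=s_{i+1}e_i$, $e_{i+1}e_is_{i+1}=e_{i+1}s_i$, $s_ie_i=e_is_i=e_i$ (and the $i\leftrightarrow i+1$ versions), together with $e_i^2=\omega e_i$, one reduces every monomial to a normal form. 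The claim is that after collecting coefficients, the expression is symmetric under the simultaneous swap $i\leftrightarrow i+1$, $u\leftrightarrow v$, which is exactly (\ref{YBEbr}). The bookkeeping is where the identity $\beta = 1-\omega/2$ enters: the terms proportional to $e_i$ (resp.\ $e_{i+1}$) alone, which arise from collapsing $e_ie_{i+1}e_i$ and from the $\omega$ in $e_i^2$, only cancel against the mixed $s$--$e$ terms when $\beta$ takes this value.

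For the unitarity relation (\ref{unitar}) the computation is much shorter. I would simply compute
\[
\check\rho_i(u)\check\rho_i(-u) = \bigl(a(u)s_i+b(u)\mathbf{1}+c(u)e_i\bigr)\bigl(a(-u)s_i+b(-u)\mathbf{1}+c(-u)e_i\bigr),
\]
expand into the nine products, and use $s_i^2=\mathbf{1}$, $e_i^2=\omega e_i$, $s_ie_i=e_is_i=e_i$ to collapse everything into the span of $\{\mathbf{1}, s_i, e_i\}$. The coefficient of $s_i$ is $a(u)b(-u)+b(u)a(-u)$, which vanishes since $a$ is even and $b$ is odd up to the sign pattern — more precisely one checks $a(u)b(-u)+a(-u)b(u)=0$ using $a(u)=u^2+\beta u$, $b(u)=-(u+\beta)$. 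The coefficient of $e_i$ collects $a(u)c(-u)+c(u)a(-u)+\omega c(u)c(-u)+b(u)c(-u)+c(u)b(-u)$, and a direct substitution together with $\beta=1-\omega/2$ shows this vanishes as well. What remains is the coefficient of $\mathbf{1}$, namely $a(u)a(-u)+b(u)b(-u) = -(u^2+\beta u)(u^2-\beta u) + (u+\beta)(-u+\beta)\cdot(-1)$... carrying this out gives $(u^2-1)(u^2-\beta^2)$, as claimed.

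The main obstacle is the Yang--Baxter verification: it is the one genuinely lengthy computation, and the only subtlety is organizing the reduction of cubic monomials in $s_i,s_{i+1},e_i,e_{i+1}$ to a normal form so that the $i\leftrightarrow i+1$, $u\leftrightarrow v$ symmetry becomes manifest. I expect that the cleanest route, and the one I would actually write up, is to invoke that (\ref{RBrauer}) is the standard Brauer--algebra $R$-matrix whose Yang--Baxter property is established in the literature (e.g.\ in the references on the Brauer centralizer algebra cited above), and to present the unitarity check explicitly since it is short. Everything else — the passage to the concrete operators $s_i=\epsilon\mathcal P_{i,i+1}$, $e_i=\mathcal K_{i,i+1}$ acting on $\mathcal V_{(N|M)}^{\otimes n}$ — is then automatic because those operators satisfy (\ref{defBrauer1})--(\ref{defBrauer2}), as already recorded.
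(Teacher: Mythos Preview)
Your approach is correct and essentially the same as the paper's: both proofs proceed by direct substitution of (\ref{RBrauer}) into (\ref{YBEbr}) and (\ref{unitar}), expanding into $27$ terms for the Yang--Baxter relation and reducing via the Brauer algebra identities (\ref{defBrauer1})--(\ref{defBrauer2}), with the paper additionally singling out the derived identities $e_i s_{i\pm 1} e_i = e_i$ and $s_i e_{i+1} s_i = s_{i+1} e_i s_{i+1}$ as particularly useful. One small slip: your displayed intermediate expression for the coefficient of $\mathbf{1}$ in the unitarity check has sign errors (it should read $a(u)a(-u)+b(u)b(-u) = (u^2+\beta u)(u^2-\beta u) - (u+\beta)(u-\beta)$), though your stated final answer $(u^2-1)(u^2-\beta^2)$ is correct.
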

\noindent
{\bf Proof.} Substitute (\ref{RBrauer}) into (\ref{YBEbr}), (\ref{unitar}).
 One obtains 27 terms in both sides of (\ref{YBEbr}).
 The terms which do not
 contain the elements $e_i,e_{i+1}$ are cancelled due
 to the first identity in (\ref{defBrauer2}) (the remaining
 terms just cancel each  other identically).
 Other terms which include the
 elements $e_i,e_{i+1}$ are cancelled due to the identities (\ref{defBrauer1}), (\ref{defBrauer2}).
 In particular to prove (\ref{YBEbr})  identities like
 $e_i s_{i \pm 1} e_i = e_i$ and $s_i e_{i + 1} s_i = s_{i+1} e_i s_{i+1}$
 which follow from (\ref{defBrauer1}),
 (\ref{defBrauer2}) are useful.  \hfill \qed

\noindent
One can consider equation (\ref{YBEbr})
as the nontrivial identity in the algebra $B_n(\omega)$.
 \vspace{0.2cm}

The matrix representation
 $T$ (\ref{Brauer}) of the element (\ref{RBrauer}) is
 \begin{equation}
\label{RBrauer01}
\check{R}(u) \equiv \epsilon \, T(\hat{\rho}(u)) =
u(u+\beta)\mathcal{P}-\epsilon(u+\beta)\mathbf{1}+\epsilon u\mathcal{K} \; .
\end{equation}
Here we suppress the index $i$ for simplicity.
 (\ref{YBEbr}) implies (see appendix {\bf \ref{PKprop}})
that $\hat{R}(u)$ satisfies the braid version of the Yang--Baxter equation
\begin{equation} \label{eq:YBEbraid}
\check{R}_{12}(u-v) \check{R}_{23}(u) \check{R}_{12}(v) =
\check{R}_{23}(v) \check{R}_{12}(u) \check{R}_{23}(u-v).
\end{equation}
 Further we  use the  $R$-matrix
\begin{align}
 \label{eq:Rmatrix}
R(u)&={\cal P} \check{R}(u)=
(u-\frac{\omega}{2}+1)(u\mathbf{1}-\epsilon\mathcal{P})+u\mathcal{K} \notag\\
&= u(u+\beta) \mathbf{1} -\epsilon (u+\beta) \mathcal{P} +u \mathcal{K} \; ,
\end{align}
 which includes the super permutation and is the image of the element \cite{IsMol10}:
 $$
 \rho_i(u) =
  u(u+\beta) \mathbf{1} - (u+\beta) \, s_i + u \,  e_i \;\; \in \;\;
 B_n(\omega) \; .
 $$

The braid version (\ref{eq:YBEbraid}) of the Yang--Baxter equation
has the same form in both the supersymmetric
the non supersymmetric cases.
However
 the standard matrix $R(u)=\mathcal{P}\check{R}(u)$
 (\ref{eq:Rmatrix})
 satisfies the graded version  of the Yang--Baxter equation \cite{KulSkl}
involving extra sign factors,
\begin{equation} \label{eq:YBEgraded}
R_{12}(u-v)(-)^{12} R_{13}(u)(-)^{12} R_{23}(v) = R_{23}(v) (-)^{12} R_{13}(u) (-)^{12} R_{12}(u-v).
\end{equation}
Indeed, after substituting  $\check{R}_{ij}(u)=\mathcal{P}_{ij}R_{ij}(u) =
(-)^{ij} P_{ij}R_{ij}(u)$ into
(\ref{eq:YBEbraid}) and moving all standard permutations $P_{ij}$ to the left we
write  (\ref{eq:YBEbraid}) in the form
 \begin{equation} \label{YBEgr01}
R_{23}(u-v)(-)^{13} R_{13}(u)(-)^{12} R_{12}(v) = R_{12}(v) (-)^{13} R_{13}(u) (-)^{23} R_{23}(u-v).
\end{equation}
If $R \in {\rm End}({\cal V}_{(N|M)}^{\; \otimes 2})$ is an even matrix,
then the following condition holds
\begin{equation} \label{Reven1}
R^{i_1i_2}_{j_1j_2} \neq 0 \qquad\; \text{iff}\qquad\;
 [i_1]+[i_2]+[j_1]+[j_2]=0 \; ({\rm mod} 2) \; .
\end{equation}
In particular one can easily check this property for the
 matrices $\mathbf{1},\mathcal{P},\mathcal{K}$
 out of which the operator $R(u)$ is composed.
Therefore,
\begin{equation} \label{eq:Rmat-}
R_{ij} I_k (-)^{ik}(-)^{jk} = (-)^{ik}(-)^{jk} R_{ij} I_k  \; .
\end{equation}
Using this property and the identities (\ref{siden})
we convert the right hand side of (\ref{YBEgr01}) to
\begin{equation}
 \label{tranRRR}
 \begin{array}{c}
 R_{12}(-)^{13} R_{13} (-)^{23} (-)^{12} (-)^{12}  R_{23}
 = R_{12} (-)^{13} (-)^{23} (-)^{12} R_{13}  (-)^{12}  R_{23} = \\ [0.3cm]
 = (-)^{13} (-)^{23}  R_{12}(-)^{12} R_{13} (-)^{12}  R_{23}  \; ,
 \end{array}
\end{equation}
 and doing the analogous transformations for the left hand side
 we represent (\ref{YBEgr01}) in the form
\begin{equation}
\label{RRR}
\begin{array}{c}
R_{23}(u-v) (-)^{12} R_{13}(u) (-)^{12} R_{12}(v) 
 =  R_{12}(v) (-)^{12} R_{13}(u) (-)^{12} R_{23}(u-v) \; .
\end{array}
\end{equation}
After the exchange of the spectral
 parameters $v \to u-v$, $u \to u$ in (\ref{RRR}) we obtain the graded version
 of Yang--Baxter equation \eqref{eq:YBEgraded}.
Finally we stress that $(-)^{12}$  in \eqref{eq:YBEgraded}
 can be exchanged with the sign operator $(-)^{23}$ by means of similar manipulations
 as in (\ref{tranRRR}). Moreover, if $R_{12}(u)$ solves the Yang-Baxter
 equation \eqref{eq:YBEgraded}, then
 the twisted $R$-matrix
 \begin{equation}
 \label{twiR}
\tilde{R}_{12}(u) :=  (-)^{12}R_{12}(u)(-)^{12} \; ,
\end{equation}
is also a solution.

\vspace{0.2cm}

\noindent
{\bf Remark.} Eqs. (\ref{RBrauer01}), (\ref{eq:Rmatrix})
 express uniformly the  $R$-matrices which
are invariant under the action of $SO$, $Sp$ or $OSp$ groups. Recall that
for the $SO$ case the $R$-matrix
  was found in \cite{Zam}, for the $Sp$ case it was constructed
  in \cite{Karow},\cite{Re}. For the $OSp$ case such $R$-matrices were considered in
 many papers (see, e.g., \cite{Kulish}, \cite{Ragoucy}, \cite{Isae}).
Note that the $OSp$ type R-matrix ${\sf R}(u)$ proposed in \cite{Isae}
 coincides with (\ref{eq:Rmatrix})
 in the case $\epsilon=+1$ in the following way
\begin{equation}
 {\sf R}(u)= -R^{\epsilon=+1}(-u).
\end{equation}
The R-matrix \eqref{eq:Rmatrix} is related to $SO$ and
$Sp$ type R-matrix presented in \cite{IsKarKir15} just by
rescaling of the spectral parameter $u\rightarrow -\epsilon u$.
The parameter $\beta=1-\frac{\omega}{2}$ is then rescaled to
$-\epsilon \beta=\frac{\epsilon \omega}{2}-\epsilon$.

\section{The graded RLL-relation and the Yangian ${\cal Y}(osp)$}
\setcounter{equation}{0}

We start with the graded form of the RLL-relation
(see, e.g., \cite{KulSkl}, \cite{Isae})
\begin{equation} \label{eq:RLL}
R_{12}(u-v)L_1(u)(-)^{12} L_2(v) (-)^{12} =
(-)^{12} L_2(v) (-)^{12} L_1(u)  R_{12}(u-v) \; ,
\end{equation}
(the different, but equivalent,
 choice of the RLL-relation will be discussed
at the end of this Section, in Remark 3).
The R-matrix is of the form \eqref{eq:Rmatrix} and the elements
 of the super-matrix $||L^a_{\; b}(u)||_{a,b=1}^{N+M}$
involve the  generators of an associative algebra which we denote by ${\cal Y}(osp)$.
We specify this algebra below.
 Note that the sign operators in (\ref{eq:RLL}) are fixed according to
the invariance condition (\ref{Rinvar2}). Consider the product
$L_1(-)^{12} L_2(-)^{23}(-)^{13} L_3$ of three $L$-operators and reorder it
  with the help of (\ref{eq:RLL}) as
 $$
 L_1(u)(-)^{12} L_2(v)(-)^{23}(-)^{13} L_3(w)  \;\; \to \;\;
 L_3(w)(-)^{23} L_2(v)(-)^{13} (-)^{12} L_1(u)  \; ,
 $$
 in two different ways in accordance with the arrangement of brackets
 \be
 \lb{lexy}
 \Bigl( L_1(u)(-)^{12} L_2(v) \Bigr) (-)^{23}(-)^{13} L_3(w) =
 L_1(u)(-)^{12} (-)^{13} \Bigl( L_2(v)(-)^{23} L_3(w)  \Bigr) \; .
 \ee
  As a result we obtain  (using the properties (\ref{siden}),
  (\ref{eq:Rmat-}) of the sign operators) the
 associativity condition for the algebra (\ref{eq:RLL})
 in the form of the graded Yang-Baxter equation (\ref{eq:YBEgraded}).

Expand the L-operator  in the spectral parameter $u$ as
\begin{equation} \label{eq:Lexp}
L^a_{\; b}(u) = \mathbf{1} \delta^a_b+\sum_{k=1}^\infty
 \frac{(L^{(k)})^a_{\; b}}{u^k} \; ,
\end{equation}
where $\mathbf{1}$ denotes the unit element in ${\cal Y}(osp)$.
We multiply the R-matrix by $u^{-2}v^{-2}$
\begin{equation}
\frac{R(u-v)}{u^2v^2}  = \! \left(\frac{1}{v^2}-\frac{2}{uv}+\frac{1}{u^2}+
\frac{\beta}{uv^2}-\frac{\beta}{u^2v}\right)\!\mathbf{1}
-\left( \frac{\epsilon }{uv^2} -\frac{\epsilon }{u^2v}+\frac{\epsilon \beta}{u^2v^2} \right) \!\mathcal{P} +
\left( \frac{1}{uv^2} - \frac{1}{u^2v} \right) \!\mathcal{K},
\end{equation}
and expand the RLL-relation \eqref{eq:RLL} in the spectral parameters $u^{-1}$ and $v^{-1}$.
The coefficient
 at $u^{-k} v^{-j}$ in (\ref{eq:RLL}) gives the defining
 relations for the infinite dimensional associative algebra $\mathcal{Y}(osp)$
 which is called the Yangian of the $osp$-type:
 \begin{equation}
  \label{eq:Yangian}
\begin{array}{c}
  \Bigl\{ L^{(k)}_1 (-)^{12} L^{(j-2)}_2 - 2 L^{(k-1)}_1 (-)^{12} L^{(j-1)}_2 + L^{(k-2)}_1 (-)^{12} L^{(j)}_2 \\
 +\beta L^{(k-1)}_1 (-)^{12} L^{(j-2)}_2 -\beta L^{(k-2)}_1 (-)^{12} L^{(j-1)}_2 \Bigr\} (-)^{12} \\
 -(-)^{12} \Bigl\{ L_2^{(j-2)} (-)^{12} L_1^{(k)} -2 L_2^{(j-1)} (-)^{12} L_1^{(k-1)} + L_2^{(j)} (-)^{12} L_1^{(k-2)} \\
 +\beta L_2^{(j-2)} (-)^{12} L_1^{(k-1)} - \beta L_2^{(j-1)} (-)^{12} L_1^{(k-2)}\Bigr\} \\
-\epsilon \mathcal{P}_{12} \Bigl\{ L_1^{(k-1)}(-)^{12} L_2^{(j-2)} - L_1^{(k-2)}(-)^{12} L_2^{(j-1)} +\beta L_1^{(k-2)}(-)^{12} L_2^{(j-2)} \Bigr\} (-)^{12}  \\
 +\epsilon (-)^{12} \Bigl\{ L_2^{(j-2)}(-)^{12} L_1^{(k-1)} - L_2^{(j-1)}(-)^{12} L_1^{(k-2)} +\beta L_2^{(j-2)}(-)^{12} L_1^{(k-2)} \Bigr\} \mathcal{P}_{12} \\
 +\mathcal{K}_{12}  \Bigl\{ L_1^{(k-1)}(-)^{12} L_2^{(j-2)} - L_1^{(k-2)}(-)^{12} L_2^{(j-1)}  \Bigr\} (-)^{12} \\
 - (-)^{12} \Bigl\{ L_2^{(j-2)}(-)^{12} L_1^{(k-1)} - L_2^{(j-1)}(-)^{12} L_1^{(k-2)} \Bigr\}  \mathcal{K}_{12} = 0 \; .
\end{array}
\end{equation}
The super-commutator is defined as
\begin{equation} \label{SupComm}
[A ,B]_\pm \equiv AB- (-1)^{[A][B]} BA.
\end{equation}
Notice that for elements of two
super-matrices $||A^a_{\; b}||,||B^a_{\; b}||$ the super-commutator has
the following representation
$$
\bigl( A_1 (-)^{12} B_2 (-)^{12}  -  (-)^{12} B_2 (-)^{12} A_1 \bigr)^{a_1a_2}_{c_1c_2} = (-1)^{[c_1]([a_2]+[c_2])} [A^{a_1}_{\ c_1},B^{a_2}_{\ c_2}]_\pm.
$$
This implies that the relation \eqref{eq:Yangian} for the Yangian $\mathcal{Y}(osp)$
has the following coordinate form:
\begin{equation}
\label{supyang}
\begin{array}{c}
(-1)^{[c_1]([a_2]+[c_2])} \Bigl\{ \left[(L^{(k)})^{a_1}_{c_1}, (L^{(j-2)})^{a_2}_{c_2}\right]_\pm -2  \left[(L^{(k-1)})^{a_1}_{c_1}, (L^{(j-1)})^{a_2}_{c_2}\right]_\pm   \\
+  \left[(L^{(k-2)})^{a_1}_{c_1}, (L^{(j)})^{a_2}_{c_2}\right]_\pm + \beta \left[(L^{(k-1)})^{a_1}_{c_1}, (L^{(j-2)})^{a_2}_{c_2}\right]_\pm - \beta  \left[(L^{(k-2)})^{a_1}_{c_1}, (L^{(j-1)})^{a_2}_{c_2}\right]_\pm \Bigr\}  \\
-\epsilon (-1)^{[a_1][a_2]+[c_1][c_2]+[a_1][c_1]} \Bigl\{ (L^{(k-1)})^{a_2}_{c_1} (L^{(j-2)})^{a_1}_{c_2}   - (L^{(j-2)})^{a_2}_{c_1} (L^{(k-1)})^{a_1}_{c_2} \\
- (L^{(k-2)})^{a_2}_{c_1} (L^{(j-1)})^{a_1}_{c_2}+ (L^{(j-1)})^{a_2}_{c_1} (L^{(k-2)})^{a_1}_{c_2}\\
 + \beta (L^{(k-2)})^{a_2}_{c_1} (L^{(j-2)})^{a_1}_{c_2}   - \beta (L^{(j-2)})^{a_2}_{c_1} (L^{(k-2)})^{a_1}_{c_2} \Bigr\}  \\
 + (-1)^{[c_1]([b_2]+[c_2])} \mathcal{K}^{a_1 a_2}_{b_1 b_2}  \Bigl\{ (L^{(k-1)})^{b_1}_{c_1} (L^{(j-2)})^{b_2}_{c_2} - (L^{(k-2)})^{b_1}_{c_1} (L^{(j-1)})^{b_2}_{c_2} \Bigr\}  \\
- (-1)^{[a_1]([a_2]+[b_2])}  \Bigl\{ (L^{(j-2)})^{a_2}_{b_2} (L^{(k-1)})^{a_1}_{b_1} - (L^{(j-1)})^{a_2}_{b_2} (L^{(k-2)})^{a_1}_{b_1}  \Bigr\} \mathcal{K}^{b_1b_2}_{c_1c_2} = 0.
 \end{array}
\end{equation}

{\bf Remark 1.} We have noticed in section {\bf \ref{sec2}} that the relation between the
 $OSp$ type R-matrix (\ref{eq:Rmatrix}) and the R-matrix in \cite{IsKarKir15}
is given just by rescaling $u\rightarrow -\epsilon u$.
Doing the same rescaling for the L-operator we obtain
$L^{(k)}\rightarrow -\epsilon L^{(k)}$ for $k$ odd and
$L^{(k)}\rightarrow L^{(k)}$ for $k$ even, respectively.
After that the Yangian relations in \cite{IsKarKir15} appear just as a special case
of the relations (\ref{supyang}).

\vspace{0.3cm}

Choosing $k=1,j=3$ we obtain the defining relation for the Lie
superalgebra $osp$
\begin{equation} \label{eq:OSpComm}
L_1^{(1)} (-)^{12} L_2^{(1)}(-)^{12}  -  (-)^{12} L_2^{(1)} (-)^{12} L_1^{(1)}  = [\epsilon\mathcal{P}_{12}-\mathcal{K}_{12},(-)^{12} L_2^{(1)} (-)^{12}] \; .
\end{equation}
 The enveloping algebra of $osp$ is thus a subalgebra in the Yangian  ${\cal Y}(osp)$.
Permuting $(1\leftrightarrow 2)$ in \eqref{eq:OSpComm} (we multiply it by super-permutation from both sides) and adding the original and permuted equation
we obtain the consistency condition
\begin{equation}
\label{Ccond}
\mathcal{K}_{12} \Bigl\{L_1^{(1)} + (-)^{12} L_2 ^{(1)} (-)^{12}\Bigr\} = \Bigl\{L_1^{(1)} + (-)^{12} L_2 ^{(1)} (-)^{12} \Bigr\} \mathcal{K}_{12} \; .
\end{equation}
Using this condition and the identities
  \begin{equation}
 \label{PLL}
 {\cal P}_{12} (-)^{12} L_2 ^{(1)} (-)^{12} = L_1^{(1)} {\cal P}_{12} \; , \;\;\;
  (-)^{12} L_2 ^{(1)} (-)^{12} {\cal P}_{12}  = {\cal P}_{12}  L_1^{(1)}  \; ,
 \end{equation}
 one can simplify defining relation (\ref{eq:OSpComm}) of $osp$ as
\begin{equation} \label{osp1}
L_1^{(1)} (-)^{12} L_2^{(1)}(-)^{12}  -  (-)^{12} L_2^{(1)} (-)^{12} L_1^{(1)}  = [\mathcal{K}_{12} - \epsilon\mathcal{P}_{12}, \, L_1^{(1)} ] \; ,
\end{equation}
which has the following component form
\begin{equation} \label{osp2}
\begin{array}{c}
(-1)^{[c_1]([a_2]+[c_2])} \left[(L^{(1)})^{a_1}_{\ c_1}, (L^{(1)})^{a_2}_{\ c_2}\right]_{\pm}  =   \epsilon (-1)^{[c_1][c_2]}\delta^{a_2}_{c_1} (L^{(1)})^{a_1}_{\ c_2} - \\ [0.3cm] -
\epsilon (-1)^{[a_1][a_2]} \delta^{a_1}_{c_2} (L^{(1)})^{a_2}_{\ c_1}
+\epsilon (-1)^{[c_2]} \varepsilon^{a_1a_2} (L^{(1)})_{c_2c_1}
-\epsilon (-1)^{[a_2]} \varepsilon_{c_1c_2} (L^{(1)})^{a_1a_2} .
\end{array}
\end{equation}
Note that  (\ref{osp1}) can be directly obtained
 from (\ref{eq:Yangian}) with the choice $k=3$, $j=1$.

Multiplying both sides of (\ref{Ccond}) by $\mathcal{K}_{12}$ from the left (or
from the right) and using the identities
\begin{equation} \label{eq:Identities}
\mathcal{K}_{12}^2= \omega \mathcal{K}_{12},\quad \mathcal{K}_{12} A_1 \mathcal{K}_{12} = \epsilon \ \mathrm{str}(A)\ \mathcal{K}_{12},\quad  \mathcal{K}_{12} (-)^{12} A_2 (-)^{12} \mathcal{K}_{12} = \epsilon \ \mathrm{str}(A)\ \mathcal{K}_{12},
\end{equation}
we obtain
\begin{equation} \label{Consist}
\mathcal{K}_{12} \Bigl\{L_1^{(1)} + (-)^{12} L_2 ^{(1)} (-)^{12}\Bigr\} = \frac{2\epsilon}{\omega}\ \mathrm{str} (L^{(1)}) \ \mathcal{K}_{12}= \Bigl\{L_1^{(1)} + (-)^{12} L_2 ^{(1)} (-)^{12} \Bigr\} \mathcal{K}_{12} \; ,
\end{equation}
where $\mathrm{str}(A)\equiv (-1)^{[a]}
 A^{a}_{\ a}=\epsilon\varepsilon^{ba} A_{ba}$ is the supertrace of the super-matrix $A$.
  One can check
 that the element $\mathrm{str} (L^{(1)})$  belongs to the center of the Yangian ${\cal Y}(osp)$
 and therefore to the center of
  the Lie super-algebra $osp$.

  The Yangian ${\cal Y}(osp)$ (as well as the
  $g\ell$-type Yangian \cite{Khor} and the $so$- and $sp$ type Yangians
  \cite{IsKarKir15}) possess the set of automorphisms which
  are defined by the assignments
  \begin{equation}
\label{auto}
 L^a_{\; b}(u) \;\; \mapsto \;\; f(u) \; L^a_{\; b}(u + b_0) \; ,
\end{equation}
  where $f(u) = 1 + b_1/u + b_2/u + ...$ is a scalar function and $b_k$ are
  parameters (in general $b_k$ are central elements
  in ${\cal Y}(osp)$). The transformations (\ref{auto}) are implied by the
  form of the defining relations (\ref{eq:RLL}).
  As shown in \cite{IsKarKir15} one can use the
  automorphisms (\ref{auto}) to fix $L^{(1)}$ such that str$(L^{(1)})=0$.
  In view of this we define
  the traceless generators
  \begin{equation}
\label{DefG}
 G^a_{\; b} \equiv (L^{(1)})^a_{\; b} -
 \frac{\epsilon}{\omega} \; \mathrm{str} (L^{(1)}) \delta^a_{b} \; , \;\;\;\;\;
 \mathrm{str} (G)=0 \; .
\end{equation}
 These generators satisfy (since we have automorphisms (\ref{auto})
 and $\mathrm{str} (L^{(1)})$ is the central element)
  the same commutation relations (\ref{eq:OSpComm})
  which we write as:
 \begin{equation} \label{CommG}
(-)^{12} G_1 (-)^{12} G_2  -  G_2 (-)^{12} G_1 (-)^{12}  =
[\epsilon\mathcal{P}_{12}-\tilde{\mathcal{K}}_{12}, \; G_2 ] \; ,
\end{equation}
where $\tilde{\cal K}_{12}= (-)^{12} {\cal K}_{12} (-)^{12}$
($\tilde{\cal K}^{a_1 a_2}_{\;\; b_1 b_2}=
 \varepsilon^{a_2 a_1}\varepsilon_{b_2 b_1}$) and we have used
 $(-)^{12} {\cal P}_{12} (-)^{12} ={\cal P}_{12}$. This is to be compared
with (\ref{osp06a}), (\ref{osp10}) if the algebra of elements $G^a_{\; b}$ is
represented in the space $\cal{V_{(N|M)}}$. Note that the commutation relations
(\ref{CommG}) transform to the commutation relations (\ref{osp06a})
if we  redefine  the supermetric as $\varepsilon_{a b} \to
\epsilon (-1)^{[a]} \varepsilon_{a b} = \varepsilon_{ba}$ (see also
discussion in Remark 3 below).

Further, for the traceless generators (\ref{DefG}) $G^a_{\; b}$
from (\ref{Consist}) we have (cf. (\ref{osp08}))
\begin{equation}
\label{SymCond}
\mathcal{K}_{12} \Bigl\{G_1+(-)^{12} G_2  (-)^{12}\Bigr\}= 0 = \Bigl\{G_1 + (-)^{12} G_2 (-)^{12} \Bigr\} \mathcal{K}_{12} \; .
\end{equation}
 In components this reads as the condition (\ref{OspAlg})
for the generators of the Lie superalgebra $osp$:
\begin{equation} \label{SymmCond1}
G_{ab} +\epsilon(-1)^{[a][b]+[a]+[b]} G_{ba}=0.
\end{equation}
{\bf Remark 2.} Comparing the $RLL$-relations (\ref{eq:RLL}) and the graded
Yang-Baxter equation (\ref{eq:YBEgraded}) one finds that
the $L$-operator is represented (as an operator
 in ${\cal V}_{(N|M)} \otimes {\cal V}_{(N|M)}$)
 in the form of the twisted solution (\ref{twiR})
 of the Yang-Baxter equation:
 \begin{equation} \label{Lfunda}
 L(u) = \frac{1}{u^2} (-)^{12} R_{12}(u) (-)^{12} =
{\bf 1} + \frac{1}{u} \Bigl({\bf 1} \beta + (\tilde{\cal K} - \epsilon {\cal P}) \Bigr)
 - \frac{\epsilon \beta}{u^2} {\cal P} \; .
\end{equation}
The operators $L_1(u)$ and $L_2(v)$ in (\ref{eq:RLL}) should be understood
 as $\frac{1}{u^2} (-)^{13} R_{13}(u) (-)^{13}$ and
  $\frac{1}{u^2} (-)^{23} R_{23}(u) (-)^{23}$, respectively.
According to the above consideration of the Yangian ${\cal Y}(osp)$ the coefficient
$\Bigl({\bf 1} \beta + (\tilde{\cal K} - \epsilon {\cal P}) \Bigr)$ of $u^{-1}$
in (\ref{Lfunda}) is a representation of the element $L^{(1)}$ in (\ref{eq:Lexp})
 \begin{equation} \label{TL1}
 T^{a_2}_{\; c_2}((L^{(1)})^{a_1}_{\; c_1}) =
  \beta  \; \delta^{a_1}_{\; c_1} \delta^{a_2}_{\; c_2} +
 (\tilde{\cal K}^{a_1 a_2}_{\;\; c_1 c_2} -
 \epsilon {\cal P}^{a_1 a_2}_{\;\; c_1 c_2}) \; .
 \end{equation}
Since the $L$-operator (\ref{Lfunda}) satisfies the $RLL$-elations (\ref{eq:RLL})
the operator (\ref{TL1}) should obey commutation relations (\ref{eq:OSpComm})
 and conditions (\ref{Consist}). Taking into account (\ref{TL1}) we represent
 the traceless part (\ref{DefG}) as
 \begin{equation} \label{TG}
 G^{a_1 a_2}_{\; c_1 c_2} \equiv
  T^{a_2}_{\; c_2}(G^{a_1}_{\; c_1})  =
  (\tilde{\cal K}^{a_1 a_2}_{\;\; c_1 c_2} - \epsilon {\cal P}^{a_1 a_2}_{\;\; c_1 c_2}) =
  (-1)^{[a_1]+ [c_1]} \varepsilon^{a_1 a_2} \varepsilon_{c_1 c_2} -
  \epsilon (-1)^{[a_1][a_2]} \delta^{a_1}_{\; c_2} \delta^{a_2}_{\; c_1} \; .
 \end{equation}
This formula defines the fundamental representation $T$ of the $osp$-generators
 $G^a_{\; b}$ which satisfy
 commutation relations (\ref{CommG}) and conditions (\ref{SymCond}),
 (\ref{SymmCond1}).

 \vspace{0.2cm}

 \noindent
 {\bf Remark 3.}
 We note that the choice of the basis of $osp$ in (\ref{TG})
 differs from the choice of the basis of $osp$ in (\ref{osp03}) by sign factors
 $$
 T^{a_2}_{\; c_2}(G^{a_1}_{\; c_1})  =
 (-1)^{[a_1][a_2] + [c_1][c_2]} (\tilde{G}^{a_1}_{\; c_1})^{a_2}_{\; c_2} \; .
 $$
 So we have
 $$
 \tilde{G}_{12} = G_{21} = (-)^{12} G_{12} (-)^{12}  \; ,
 $$
  where $G_{12}$ and $\tilde{G}_{12}$ are defined in
  (\ref{TG}) and (\ref{osp03}) (compare eqs. (\ref{osp06a}), (\ref{osp08})
  with (\ref{eq:OSpComm}), (\ref{SymCond})). However
  one can start from the different
  form of the graded RLL-relation (cf. (\ref{eq:RLL}))
\begin{equation} \label{RLL2}
R_{12}(u-v)(-)^{12} \tilde{L}_1(u)(-)^{12} \tilde{L}_2(v)  =
 \tilde{L}_2(v) (-)^{12} \tilde{L}_1(u) (-)^{12} R_{12}(u-v) \; ,
\end{equation}
which yields the equivalent definition of the Yangian ${\cal Y}(osp)$.
In this case the $R$-matrix (fundamental) representation of the Yangian (\ref{RLL2})
 is given by  the formula (cf. (\ref{Lfunda}))
\begin{equation} \label{Lfun2}
 \tilde{L}(u) = \frac{1}{u^2} \, R_{12}(u)  =
{\bf 1} + \frac{1}{u} \Bigl({\bf 1} \beta + ({\cal K} - \epsilon {\cal P}) \Bigr)
 - \frac{\epsilon \beta}{u^2} {\cal P} \; ,
\end{equation}
which leads to the following fundamental representations of the
Yangian generators $\tilde{L}^{(1)}$
and their traceless $osp$ generators $\tilde{G}$
 (cf. (\ref{TL1}), (\ref{TG})):
  \begin{equation} \label{TL2}
 T^{a_2}_{\; c_2}((\tilde{L}^{(1)})^{a_1}_{\; c_1}) =
  \beta  \; \delta^{a_1}_{\; c_1} \delta^{a_2}_{\; c_2} +
 ({\cal K}^{a_1 a_2}_{\;\; c_1 c_2} -
 \epsilon {\cal P}^{a_1 a_2}_{\;\; c_1 c_2}) \; ,
 \end{equation}
 \begin{equation} \label{TG2}
 \tilde{G}^{a_1 a_2}_{\; c_1 c_2} \equiv
  T^{a_2}_{\; c_2}(\tilde{G}^{a_1}_{\; c_1})  =
  ({\cal K}^{a_1 a_2}_{\;\; c_1 c_2} - \epsilon {\cal P}^{a_1 a_2}_{\;\; c_1 c_2}) =
  \varepsilon^{a_1 a_2} \varepsilon_{c_1 c_2} -
  \epsilon (-1)^{[a_1][a_2]} \delta^{a_1}_{\; c_2} \delta^{a_2}_{\; c_1} \; .
 \end{equation}
 We see that the representation of the basis elements
 of $osp$ in (\ref{TG2})
 coincides with the basis of $osp$ proposed in (\ref{osp03}) and, thus,
 the consideration of Subsection 2.1 is relevant to the definition of the Yangian
 ${\cal Y}(osp)$ given in (\ref{RLL2}). The next point which we
 would like to stress here is that $RLL$ relations (\ref{eq:RLL}) can be
 rewritten in the form of (\ref{RLL2}):
 \begin{equation} \label{RLL3}
\tilde{R}_{12}(u-v) (-)^{12} L_1(u)(-)^{12} L_2(v)  =
 L_2(v) (-)^{12} L_1(u) (-)^{12} \tilde{R}_{12}(u-v) \; ,
\end{equation}
where $\tilde{R}_{12}(u)$ is the twisted solution
(\ref{twiR}) of the Yang-Baxter equation
(\ref{eq:YBEgraded}). The twisted matrix $\tilde{R}_{12}(u)$
can be obtained from $R$-matrix (\ref{eq:Rmatrix})
(compare (\ref{Lfunda}) and (\ref{Lfun2})) by the substitution
$\varepsilon_{ab} \to \epsilon (-1)^a\varepsilon_{ab}= \varepsilon_{ba}$. It means that
all formulas which we obtain below for the Yangian (\ref{eq:RLL})
can be easily transformed to the formulas for the Yangian (\ref{RLL2})
by the simple transformation of the supermetric $\varepsilon_{ab} \to \varepsilon_{ba}$.

\section{The linear evaluation of the Yangian $\mathcal{Y}(osp)$}
\label{sec:LinEval}
\setcounter{equation}{0}

\subsection{The conditions for linear evaluation }

Let us suppose that the L-operator expansion (\ref{eq:Lexp})
 terminates after the first term, i.e.,
\begin{equation}
 \label{L01}
L(u) = u \mathbf{1} +  L^{(1)}.
\end{equation}
Writing the RLL-relation \eqref{eq:RLL} with this form of the L-operator
and expanding it in $u$ and $v$ we obtain a set of conditions imposed on $L^{(1)}$.

All terms in the RLL-relation \eqref{eq:RLL}
 proportional to $u^k v^\ell$ for $(k+\ell) \geq 3$
 give trivial conditions which are automatically satisfied.
The coefficients at $u^2$, $v^2$ and $uv$ give the defining relations (\ref{osp1})
for generators $L^{(1)} \in osp$ and the condition (\ref{Ccond}).
 The condition appearing at first powers of $u$ and $v$ is
\begin{equation}\label{eq:RepSpec}
\mathcal{K}_{12} \left(  L_1^{(1)} (-)^{12} L_2^{(1)} (-)^{12} + \beta L_1 ^{(1)}  \right) = \left( (-)^{12} L_2^{(1)} (-)^{12} L_1^{(1)}   + \beta L_1^{(1)}  \right) \mathcal{K}_{12} \; .
\end{equation}
 Multiplying it by ${\cal P}_{12}$ from both sides
 and using (\ref{PLL}), (\ref{Ccond}) one represents it in the equivalent form
 \begin{equation}\label{new01}
\mathcal{K}_{12} \left( (-)^{12} L_2^{(1)} (-)^{12} L_1^{(1)}  - \beta L_1 ^{(1)}  \right) = \left( L_1^{(1)} (-)^{12} L_2^{(1)} (-)^{12}  - \beta L_1^{(1)}  \right) \mathcal{K}_{12} \; .
\end{equation}
These two equations can be obtained directly from \eqref{eq:Yangian} taking $j=2,k=3$
or $j=3,k=2$. Finally the condition at zero power of $u$ and $v$ is trivial
in view of identities (\ref{PLL}).

 Thus we come to the following statement.
 
 \noindent
\begin{proposition}\label{prop22} The $L$-operator
$$
L^a_{\;\; b}(u) = (u + \alpha) \, {\bf 1} \delta^a_b
 + (L^{(1)})^a_{\;\; b} \; ,
$$
where $\alpha$ is an arbitrary constant,
 solves the RLL-relation (\ref{eq:RLL}) iff
 the elements $(L^{(1)})^a_{\;\; b}$
 generate the  $osp$-algebra with the
 defining relations (\ref{osp1}) and satisfy
  the conditions (\ref{Ccond}), (\ref{new01}).
\end{proposition}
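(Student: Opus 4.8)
The plan is to substitute the ansatz $L(u) = (u+\alpha)\mathbf{1} + L^{(1)}$ directly into the graded $RLL$-relation \eqref{eq:RLL} with $R(u-v)$ given by \eqref{eq:Rmatrix}, expand both sides as polynomials in the two independent spectral parameters $u$ and $v$, and collect the coefficients of each monomial $u^k v^\ell$. Since $L$ is linear in each of $u,v$ and $R(u-v)$ is quadratic in $u-v$, the products on either side are polynomials of total degree at most $4$; the key observation (already noted in the text preceding the statement) is that all coefficients with $k+\ell \ge 3$ vanish identically because the leading $R$-matrix term is proportional to $\mathbf{1}$ and $L_1(-)^{12}L_2(-)^{12}$ commutes with $\mathbf{1}$ trivially, so the high-degree part of $RLL$ reduces to an identity of the form (scalar)$\cdot(-)^{12}$ on both sides. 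This disposes of the bulk of the terms and leaves only finitely many genuine conditions to analyze.

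Next I would identify, monomial by monomial, the remaining nontrivial conditions. The additive shift by $\alpha$ only contributes through the $\mathbf{1}$ part of $L$, and because $R_{12}(u-v)$ commutes with $\mathbf{1}_1$ and with $(-)^{12}\mathbf{1}_2(-)^{12} = \mathbf{1}_2$, the $\alpha$-dependence drops out of every equation; hence without loss of generality one may set $\alpha = 0$, which is why the constant is free. The coefficient at $u^2$ (equivalently $v^2$) reproduces, after using \eqref{PLL}, the $osp$ defining relation \eqref{osp1}; the coefficient at $uv$ yields the consistency condition \eqref{Ccond}; and the coefficients at the first powers $u^1 v^0$ and $u^0 v^1$ give precisely \eqref{eq:RepSpec} and \eqref{new01} (which, as recalled, are equivalent via multiplication by $\mathcal{P}_{12}$ and use of \eqref{Ccond}). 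Finally the $u^0v^0$ term is checked to be an identity using \eqref{PLL}. Conversely, if $L^{(1)}$ satisfies \eqref{osp1}, \eqref{Ccond} and \eqref{new01}, then reading the expansion backwards shows every coefficient of $RLL$ vanishes, so the ansatz solves \eqref{eq:RLL}; this is the ``iff''.

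The main bookkeeping obstacle is the careful tracking of the sign operators $(-)^{12}$ and the correct placement of $\mathcal{P}_{12}$, $\mathcal{K}_{12}$ when one moves factors past each other. The relevant tools are all already available: the relations \eqref{siden} for the sign operators, \eqref{eq:Rmat-} for commuting $R_{ij}$ with sign operators in a third space, the identities \eqref{PLL} relating $\mathcal{P}_{12}$ to the exchange $L_1^{(1)} \leftrightarrow (-)^{12}L_2^{(1)}(-)^{12}$, and the $\mathcal{K}$-identities \eqref{eq:Identities}. In fact the cleanest route is to avoid a fresh expansion altogether: the general Yangian relations \eqref{eq:Yangian} were obtained by exactly this expansion of \eqref{eq:RLL}, and imposing $L^{(k)}=0$ for $k\ge 2$ in \eqref{eq:Yangian} leaves nontrivial content only for the index choices $(k,j)\in\{(3,1),(1,3)\}$, $(2,3)$, $(3,2)$ and $(2,2)$ — which are respectively \eqref{osp1}, \eqref{Ccond}, \eqref{new01}/\eqref{eq:RepSpec}, and a triviality; all other choices force a commutator of an $L^{(k)}$ with $k\ge 2$ or involve only $L^{(0)}=\mathbf{1}$ and vanish. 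I would present the argument in this second form, citing the derivation of \eqref{eq:Yangian} from \eqref{eq:RLL}, so that the proof is reduced to enumerating the surviving index pairs and invoking the already-established equivalences among \eqref{eq:RepSpec}, \eqref{new01} and \eqref{Ccond}. \qed
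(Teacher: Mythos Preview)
Your proposal is correct and follows essentially the same approach as the paper: substitute the linear ansatz into \eqref{eq:RLL}, expand in $u,v$, note that all coefficients with $k+\ell\ge 3$ are trivially satisfied, and identify the surviving conditions at $u^2,v^2,uv$ with \eqref{osp1} and \eqref{Ccond}, at $u^1,v^1$ with \eqref{eq:RepSpec}/\eqref{new01}, and at $u^0v^0$ with an identity via \eqref{PLL}. Your alternative route through \eqref{eq:Yangian} with $L^{(k)}=0$ for $k\ge 2$ is also how the paper cross-checks the result, so there is no genuine difference in method.
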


In the case of linear evaluation of $L$-operator (\ref{L01}),
in addition to the defining relations (\ref{osp1}) and the condition (\ref{Ccond})
we obtain only one  non-trivial constraint \eqref{new01}
on the generators $L^{(1)}$ of $osp$. Using \eqref{Consist} we write
 \eqref{new01} as the quadratic relation in $L^{(1)}$:
\begin{equation}
\left[ \mathcal{K}_{12}, \left(L_1^{(1)}\right)^2  + \beta' \, L_1^{(1)}  \right]=0
\end{equation}
where $\beta'=\beta - \frac{2\epsilon}{\omega} \mathrm{str}(L^{(1)})$.
Multiplying it by $\mathcal{K}_{12}$ from one side and using \eqref{eq:Identities} we arrive at the quadratic characteristic equation imposed on $L^{(1)}$
\begin{equation} \label{eq:LinEvalCond}
(L^{(1)})^2 + \beta' L^{(1)} -\frac{\epsilon}{\omega} \left\{ \mathrm{str}\left((L^{(1)})^2\right) + \beta' \mathrm{str}\left(L^{(1)}\right)\right\} \mathbf{1} =0.
\end{equation}
For the generators $G^a_{\; b}$  defined in (\ref{DefG})
 with vanishing supertrace, $\mathrm{str}(G)=0$,
this condition  simplifies to
\begin{equation} \label{Cond1}
G^2 + \beta \, G -\frac{\epsilon}{\omega} \,
 \mathrm{str}\left(G^2\right) \mathbf{1} =0 \; ,
\end{equation}
 where  $\beta = 1-\omega/2$.
We arrive  at the following statement.
\noindent
\begin{proposition}\label{prop2a} The linear evaluation (\ref{L01})
of the $L$-operator
$$
L^a_{\;\; b}(u) = (u + \alpha) \, {\bf 1} \delta^a_b
 + G^a_{\;\; b} \; ,
$$
where $\alpha$ is an arbitrary constant,
 solves the RLL-relation (\ref{eq:RLL}) if $G^a_{\;\; b}$
 is a traceless matrix of generators of $osp$, which satisfy
 eqs. (\ref{CommG}), (\ref{SymCond}),
  and in addition obeys the quadratic characteristic identity
  (\ref{Cond1}).
\end{proposition}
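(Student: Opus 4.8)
The plan is to obtain Proposition~\ref{prop2a} as a corollary of Proposition~\ref{prop22}, specializing the generic generator matrix $L^{(1)}$ to its traceless part $G$ and tracking how the conditions simplify. Proposition~\ref{prop22} already asserts that $L^a_{\;\;b}(u)=(u+\alpha)\mathbf{1}\delta^a_b+(L^{(1)})^a_{\;\;b}$ solves the $RLL$-relation (\ref{eq:RLL}) precisely when the $(L^{(1)})^a_{\;\;b}$ span $osp$ with commutation relations (\ref{osp1}) and satisfy the extra constraints (\ref{Ccond}) and (\ref{new01}). So the content of the present statement is twofold: first, that these hypotheses are equivalent to the cleaner package consisting of (\ref{CommG}), (\ref{SymCond}) and the quadratic identity (\ref{Cond1}); second, that this rephrasing is legitimate because replacing $L^{(1)}$ by $G$ changes nothing about solvability of the $RLL$-relation.

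First I would recall, from the discussion around (\ref{auto})--(\ref{DefG}), that the automorphisms (\ref{auto}) let us shift $L^{(1)}$ by a central multiple of the identity, so that solving (\ref{eq:RLL}) for $L^{(1)}$ is equivalent to solving it for its supertraceless part $G=L^{(1)}-\frac{\epsilon}{\omega}\,\mathrm{str}(L^{(1)})\mathbf{1}$; in particular the constant $\alpha$ in the statement absorbs the scalar part. Next I would note that the commutation relations (\ref{osp1}) for $L^{(1)}$ are unchanged under this shift (since the shift is central), and take the form (\ref{CommG}) for $G$; that the consistency condition (\ref{Ccond}) becomes, via (\ref{Consist}) and $\mathrm{str}(G)=0$, exactly the symmetry condition (\ref{SymCond}) (equivalently (\ref{SymmCond1})). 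The only remaining input is the single nontrivial constraint (\ref{new01}). Here I would invoke the computation already carried out in the text just before the proposition: using (\ref{Consist}) one rewrites (\ref{new01}) as $[\mathcal{K}_{12},(L_1^{(1)})^2+\beta' L_1^{(1)}]=0$, and then multiplying by $\mathcal{K}_{12}$ and applying the identities (\ref{eq:Identities}) one gets the quadratic characteristic equation (\ref{eq:LinEvalCond}); specializing to $\mathrm{str}(G)=0$ gives $\beta'=\beta$ and collapses (\ref{eq:LinEvalCond}) to (\ref{Cond1}).

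The one genuine subtlety — and the step I expect to need the most care — is the direction of the equivalence between (\ref{new01}) and the \emph{characteristic} identity (\ref{Cond1}). Passing from (\ref{new01}) to the commutator form and then to (\ref{Cond1}) is a forward implication (multiply by $\mathcal{K}$, contract); to close the loop I must argue that (\ref{Cond1}), together with the $osp$ symmetry (\ref{SymCond}) of $G$, implies $[\mathcal{K}_{12},(G_1)^2+\beta G_1]=0$ and hence (\ref{new01}). The point is that $(G_1)^2+\beta G_1$ is, by (\ref{Cond1}), a scalar matrix $\frac{\epsilon}{\omega}\mathrm{str}(G^2)\mathbf{1}$, which manifestly commutes with $\mathcal{K}_{12}$; so the implication is immediate once (\ref{Cond1}) is in hand. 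Thus the characteristic identity is genuinely equivalent to the operator constraint here, not merely a consequence — this is special to the linear evaluation and worth stating explicitly. Assembling these observations, the hypotheses of Proposition~\ref{prop22} for $G$ are equivalent to ``$G\in osp$ traceless satisfying (\ref{CommG}), (\ref{SymCond}) and (\ref{Cond1})'', which is exactly the assertion, and since (\ref{L01}) with $L^{(1)}=G$ is a special case of the $L$-operator in Proposition~\ref{prop22} the ``solves the $RLL$-relation'' conclusion follows. I would end by remarking that the word ``if'' in the statement can be upgraded to ``if and only if'' in view of the equivalences just traced, but the one-directional form as stated is what is needed downstream.
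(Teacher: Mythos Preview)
Your proposal is correct and follows essentially the same route as the paper: the text between Proposition~\ref{prop22} and Proposition~\ref{prop2a} derives (\ref{Cond1}) from (\ref{new01}) via the commutator form $[\mathcal{K}_{12},(L_1^{(1)})^2+\beta' L_1^{(1)}]=0$, and Proposition~\ref{prop2a} is then stated as the traceless specialization. You have additionally made explicit the backward step --- that (\ref{Cond1}) forces $G_1^2+\beta G_1$ to be scalar and hence to commute with $\mathcal{K}_{12}$, recovering (\ref{new01}) --- which the paper leaves implicit but which is indeed needed for the ``if'' direction as stated.
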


\subsection{The super-spinor representation }

In this subsection we intend to construct an explicit representation
of $\mathcal{Y}(osp)$
where the generators of $osp \subset \mathcal{Y}(osp)$
satisfy the quadratic characteristic equation
\eqref{Cond1} required for the linear evaluation (\ref{L01}).

We look for a generalization of the metaplectic or spinor representations
of the $Sp(n)$ or $SO(n)$ groups which can be formulated
 according to \cite{IsKarKir15} based on algebras of bosonic and fermionic
 oscillators with the defining relation
 invariant under the group action.
 We  introduce
 the algebra ${\cal A}$
 of super-oscillators involving both bosonic and fermionic
 oscillators.

 Consider the super-oscillators $c^a$ $(a=1,2,\dots,N+M)$ as generators of
an associative algebra ${\cal A}$ with the defining relation
\begin{equation}
 \label{suposc}
[c^a,c^b]_\epsilon \equiv c^a c^b +\epsilon (-1)^{[a][b]} c^b c^a =\varepsilon^{ab},
\end{equation}
where $\varepsilon^{ab}$ is the super-metric defined in \eqref{SuperMetric1} and
\eqref{SuperMet2}. The super-oscillators  $c^a$ with $[a]=0\; ({\rm mod}2)$
are bosonic and with $[a]=1 \; ({\rm mod}2)$ are fermionic.
It is important that the defining relations (\ref{suposc})
are invariant under the action $c^a\rightarrow c'^a=U^a_{\ c} c^c$
of the super-group $OSp$ as can be easily shown:
\begin{eqnarray}
& [c'^a,c'^b]_\epsilon = [U^a_{\ c} c^c,  U^b_{\ d} c^d]_\epsilon = U^a_{\ c} c^c U^b_{\ d} c^d + \epsilon(-1)^{[a][b]} U^b_{\ d} c^d U^a_{\ c} c^c & \notag \\
& = (-1)^{[c][b]+[c][d]} U^a_{\ c} U^b_{\ d} \left( c^c c^d +\epsilon (-1)^{[c][d]}  c^d c^c \right) = (-1)^{[c][b]+[c][d]} U^a_{\ c} U^b_{\ d} \varepsilon^{cd} =
\varepsilon^{ab}  , &
\end{eqnarray}
where we have used the condition (\ref{eq:OSpDef}) for the elements $U \in Osp$.

With the help of the convention (\ref{agree})
for lowering indices  one can write the relations (\ref{suposc})
in the equivalent forms
\begin{equation}
[c_a,c_b]_\epsilon \equiv c_a c_b +\epsilon (-1)^{[a][b]} c_b c_a =\varepsilon_{ba}
\;\; \Leftrightarrow  \;\;
c_a c^b + \epsilon(-1)^{[a][b]} c^b c_a =\delta^b_a \; .
\end{equation}
 The super-oscillators satisfy the following contraction identity:
\begin{align} \label{Contract}
c^ac_a = \epsilon(-1)^{[a]} c_a c^a =
 \frac{1}{2} \varepsilon^{ab} (c_b c_a + \epsilon(-1)^{[a]} c_a c_b) = \frac{1}{2} \varepsilon^{ab} \varepsilon_{ab} =\frac{\omega}{2}.
\end{align}
Further we need the super-symmetrised product of two super-oscillators:
\begin{equation}
\label{symcacb}
c^{(a}c^{b)} := \frac{1}{2} \bigl( c^a c^b -\epsilon (-1)^{[a][b]} c^b c^a \bigr)
 \; \in \; {\cal A} \; ,
\end{equation}
and define the operators
\begin{equation}
\label{defF}
F^{ab} \equiv \epsilon (-1)^{b} c^{(a}c^{b)}
 \; , \;\;\;\;\  F^{a}_{\ b}=\varepsilon_{bc}F^{ac} =
 \epsilon (-1)^{b} c^a c_b - \frac{1}{2} \delta^a_b \; .
\end{equation}

 \begin{proposition}\label{propF}
The operators $F^{ab} \in {\cal A}$
are traceless, $\str(F)=(-1)^{[a]}F^{a}_{\ a}=0$, and possess the symmetry property
(cf. (\ref{SymmCond1}))
\begin{equation} \label{t1}
F^{ab} = -\epsilon (-1)^{[a][b]+[a]+[b]} F^{ba}.
\end{equation}
They satisfy the supercommutation relations (\ref{osp2}) for generators of $osp$
\begin{align} \label{t2}
[F^{a_1}_{\;\;\, b_1},F^{a_2}_{\;\; b_2}]_\pm  =&  -\epsilon
 (-1)^{[a_1][a_2]+[b_1]([b_2]+[a_2])} \delta^{a_1}_{b_2} F^{a_2}_{\ b_1} - \epsilon (-1)^{[a_2]+[b_1]([b_2]+[a_2])} \varepsilon_{b_1 b_2} F^{a_1a_2} \notag \\
& +\epsilon (-1)^{[b_2]+[b_1]([b_2]+[a_2])} \varepsilon^{a_1a_2} F_{b_2 b_1} + \epsilon (-1)^{[b_1]} \delta^{a_2}_{b_1} F^{a_1}_{\;\;\; b_2} \; ,
\end{align}
and obey the quadratic characteristic identity
(\ref{Cond1}):
\begin{equation} \label{t3}
F^a_{\ b}F^{b}_{\ c}+\beta F^a_{\ c} -\frac{\epsilon}{\omega} \str(F^2) \delta^a_c =0,
\end{equation}
where $\beta = 1 - \omega/2$.
\end{proposition}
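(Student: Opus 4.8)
The plan is to verify the three assertions of Proposition~\ref{propF} directly from the defining relation \eqref{suposc} of the super-oscillator algebra ${\cal A}$, treating each claim in turn and using the contraction identity \eqref{Contract} as the key computational tool. For the symmetry property \eqref{t1}, I would start from the definition $F^{ab}=\epsilon(-1)^{[b]}c^{(a}c^{b)}$ and the antisymmetry (under $\epsilon$-grading) of the super-symmetrised product $c^{(a}c^{b)}$ built into \eqref{symcacb}; swapping $a\leftrightarrow b$ in $c^{(a}c^{b)}$ produces exactly the factor $-\epsilon(-1)^{[a][b]}$, and tracking the prefactor $(-1)^{[b]}\to(-1)^{[a]}$ gives \eqref{t1}. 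Tracelessness $\str(F)=(-1)^{[a]}F^a_{\ a}=0$ follows from the explicit form $F^a_{\ b}=\epsilon(-1)^{[b]}c^ac_b-\tfrac12\delta^a_b$: the first term contracts to $\epsilon(-1)^{[a]}c^ac_a=\epsilon\cdot\tfrac{\omega}{2}$ by \eqref{Contract} (using $(-1)^{2[a]}=1$), while $(-1)^{[a]}\tfrac12\delta^a_a=\tfrac12\sum_a(-1)^{[a]}=\tfrac12(N-M)=\tfrac{\epsilon\omega}{2}$ by \eqref{oMN}, and the two cancel.

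For the commutation relations \eqref{t2}, I would compute $[F^{a_1}_{\ b_1},F^{a_2}_{\ b_2}]_\pm$ by substituting $F^a_{\ b}=\epsilon(-1)^{[b]}c^ac_b-\tfrac12\delta^a_b$; the $\delta$-parts drop out of the super-commutator, so the computation reduces to evaluating $[c^{a_1}c_{b_1},\,c^{a_2}c_{b_2}]_\pm$ up to the scalar prefactor $(-1)^{[b_1]+[b_2]}$. I would expand this using the graded Leibniz rule for super-commutators of products, i.e. $[XY,Z]_\pm = X[Y,Z]_\pm \pm [X,Z]_\pm Y$ with the appropriate signs dictated by the gradings (here $[c^a]=[a]$, $[c_a]=[a]$), and then reduce each elementary super-(anti)commutator $[c^{a_1},c_{b_2}]_\epsilon$, $[c_{b_1},c^{a_2}]_\epsilon$, etc., to Kronecker deltas or metric components via \eqref{suposc} and its lowered forms $c_ac^b+\epsilon(-1)^{[a][b]}c^bc_a=\delta^b_a$. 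Collecting the four resulting terms and rewriting bilinears $c^ac_b$ back in terms of $F^a_{\ b}$ (and $c^ac^b$, $c_ac_b$ in terms of $F^{ab}$, $F_{ab}$, which is legitimate since the symmetric parts are the $F$'s and the would-be antisymmetric parts are scalars that cancel in the combination) should reproduce the four terms on the right-hand side of \eqref{t2}. Since \eqref{osp2} was already shown to be the component form of \eqref{eq:OSpComm}/\eqref{CommG}, this simultaneously establishes that the $F^{ab}$ represent $osp$.

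The quadratic characteristic identity \eqref{t3} is the most delicate step and I expect it to be the main obstacle, because it is genuinely a statement about the specific oscillator representation rather than a formal consequence of the $osp$ relations. The strategy is to compute $F^a_{\ b}F^b_{\ c}$ directly: writing $F^a_{\ b}=\epsilon(-1)^{[b]}c^ac_b-\tfrac12\delta^a_b$ and multiplying, one gets $F^a_{\ b}F^b_{\ c}=\epsilon^2(-1)^{[b]+[c]}c^ac_bc^bc_c-\tfrac12\epsilon(-1)^{[c]}c^ac_c-\tfrac12\epsilon(-1)^{[b]}c^ac_b\delta^b_c+\tfrac14\delta^a_c$, where the middle term $c^ac_bc^bc_c$ requires moving $c_b$ past $c^b$ using \eqref{suposc} — this produces $c_bc^b=\tfrac{\omega}{2}\cdot(\text{something})$-type contractions via \eqref{Contract} plus a reordered term. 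After simplification one should find $F^a_{\ b}F^b_{\ c}$ equals a linear combination of $F^a_{\ c}$, $\delta^a_c$, and possibly $c^ac_c$ (which is itself linear in $F$), and then one reads off the coefficient of $F^a_{\ c}$ as $-\beta$ and identifies the scalar term with $\tfrac{\epsilon}{\omega}\str(F^2)\delta^a_c$ by separately computing $\str(F^2)=(-1)^{[a]}F^a_{\ b}F^b_{\ a}$ from the same expansion. The bookkeeping of the sign factors $(-1)^{[a][b]}$ when reordering oscillators inside $c^ac_bc^bc_c$, together with correctly handling the implicit sum over the repeated index $b$ running over both bosonic and fermionic values (so that $\sum_b(-1)^{[b]}$-type sums yield $N-M=\epsilon\omega$), is where the bulk of the care is needed; alternatively one may verify \eqref{t3} abstractly by noting that any traceless $osp$ generator in an irreducible representation satisfying \eqref{Cond1} must, but since that was the content of Proposition~\ref{prop2a} it is cleaner to confirm it by the direct oscillator computation just sketched.
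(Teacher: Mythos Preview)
Your proposal is correct and follows essentially the same approach as the paper's proof: the symmetry \eqref{t1} and tracelessness are read off from the definition together with \eqref{Contract}, the quadratic identity \eqref{t3} is verified by direct expansion of $F^a_{\ b}F^b_{\ c}$ using the oscillator relations and \eqref{Contract}, and the commutation relations \eqref{t2} are obtained by computing the super-commutator of oscillator bilinears and re-expressing the result in terms of $F$. The only minor variation is that the paper first computes $[c^{a}c^{b},c^{e}c^{f}]_\pm$ with all indices upper, passes to the symmetrised form $[c^{(a_1}c^{c_1)},c^{(a_2}c^{c_2)}]_\pm$, and then lowers indices, whereas you work directly with the mixed-index expression $F^a_{\ b}=\epsilon(-1)^{[b]}c^ac_b-\tfrac12\delta^a_b$ and compute $[c^{a_1}c_{b_1},c^{a_2}c_{b_2}]_\pm$; this shortcut is legitimate and spares you the explicit symmetrisation step.
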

\noindent
{\bf Proof.}
The property \eqref{t1} follows from the definition
(\ref{defF}) of $F^{ab}$. The traceless property follows from
the identity \eqref{Contract}. To prove (\ref{t2})
we need the following relation:
\begin{align}
[c^ac^b,c^ec^f]_\pm = & -\epsilon (-1)^{[b][e]+[a][e]+[b][f]} \varepsilon^{af} c^e c^b \notag \\
&+  (-1)^{[b][e]+[a][e]} \varepsilon^{bf} c^ec^a -\epsilon (-1)^{[b][e]} \varepsilon^{ae} c^b c^f + \varepsilon^{be} c^a c^f \; ,
\end{align}
which implies for the supercommutator of two supersymmetrized
 quadratic operators $c^{(a}c^{b)}$:
\begin{eqnarray}
& [c^{(a_1} c^{c_1)},c^{(a_2} c^{c_2)}]_\pm = (-1)^{[a_2][c_2]+[c_1][c_2]} \varepsilon^{a_1c_2} c^{(c_1} c^{a_2)} + (-1)^{[a_2][c_1]+[a_1][a_2]} \varepsilon^{c_1c_2} c^{(a_2} c^{a_1)} & \notag \\
& -\epsilon (-1)^{[c_1][a_2]} \varepsilon^{a_1a_2} c^{(c_1} c^{c_2)} +  \varepsilon^{c_1a_2} c^{(a_1} c^{c_2)}.
\end{eqnarray}
Then using the definition
(\ref{defF}) of $F^{ab}$ we obtain
\begin{align}
 [F^{a_1}_{\ b_1},F^{a_2}_{\ b_2}]_\pm = & \epsilon^2 (-1)^{[b_1]+[b_2]} \varepsilon_{b_1c_1} \varepsilon_{b_2 c_2} [c^{(a_1} c^{c_1)},c^{(a_2} c^{c_2)}]_\pm \notag\\
  =& (-1)^{[a_2][b_2]+[b_1][b_2]+[b_1]+[a_2]} \delta^{a_1}_{b_2} F^{\;\; a_2}_{b_1} + \epsilon (-1)^{([b_1]+[a_1])[a_2]+[a_1]}  \varepsilon_{b_2b_1} F^{a_2a_1}  \notag\\
& - (-1)^{[b_1][a_2]+[b_1]} \varepsilon^{a_1a_2} F_{b_1b_2} +\epsilon (-1)^{[b_1]} \delta^{a_2}_{b_1} F^{a_1}_{\;\;\; b_2} \; .
\end{align}
Applying the properties of the supermetric $\varepsilon^{ab}$
and using the symmetry \eqref{t1}
one can show that this relation is equivalent to \eqref{t2}.
From the contraction identity \eqref{Contract} we obtain that
\begin{eqnarray}
&& F^a_{\ b} F^b_{\ d} = (\frac{\omega}{2}-1)F^a_{\ d} +\frac{1}{4} \delta^a_d = -\beta F^a_{\ d} +\frac{1}{4} \delta^a_d, \\
&& \mathrm{str} (F^2) = (-1)^a F^a_{\ b} F^b_{\ a} = \frac{1}{4}  (-1)^a \delta^a_a =  \frac{\epsilon\omega}{4},
\end{eqnarray}
which proves \eqref{t3}. \hfill \qed

\vspace{0.2cm}

 Thus the elements $F^{ab} \in {\cal A}$ form a set of traceless generators
 of  $osp$. Indeed, the elements
 $F^{ab}$ satisfy the supercommutation relation \eqref{osp2} and the symmetry condition \eqref{SymmCond1}. Moreover they satisfy the quadratic characteristic
 identity \eqref{Cond1} for the linear evaluation representation (\ref{L01}).
  It means (see Proposition {\bf \ref{prop2a}}) that the $L$-operator
 which solves RLL-equation (\ref{eq:RLL}) has the form
  \begin{equation}
\label{solF2}
  L^a_{\; b}(u + \alpha) = (u + \alpha) \delta^a_b +  F^a_{\;\; b} \; ,
  \end{equation}
  where $F^a_{\;\; b}$ is defined in (\ref{defF})
  and $\alpha$ is an arbitrary constant. Note that the
  appearance of the parameter $\alpha$
  in the solution (\ref{solF2}) is explained by the
  invariance of the $RLL$ equations (\ref{eq:RLL}) under the shift
  of the spectral parameters $u \to u+\alpha$, $v \to v+\alpha$.

  \vspace{0.3cm}

\noindent
{\bf Remark.} At the end of this subsection we note that for every super-matrix $||A_{ba}||$
we have
\begin{equation}
\label{AFc}
\left[\frac{1}{2}A_{ba}F^{ab},c^d \right] = \frac{\epsilon}{2} (-1)^{[b]} A_{ba} \left[c^{(a}c^{b)},c^d\right]_\pm = A^d_{\ b} c^b,
\end{equation}
where we applied the supercommutation relations between the symmetrized
quadratic product $c^{(a}c^{b)}$ and the super-oscillator $c^d$:
\begin{equation}
[c^{(a}c^{b)},c^d]_\pm = -\epsilon (-1)^{[b][d]} \varepsilon^{ad} c^b + \varepsilon^{bd} c^a +(-1)^{[a][b]+[a][d]} \varepsilon^{bd} c^a -\epsilon (-1)^{[a][b]} \varepsilon^{ad}c^b.
\end{equation}
Equation (\ref{AFc}) demonstrates that the operators
$F^{ab}$ generate any linear transformation of generators $c^a \in {\cal A}$
 under the adjoint action.
Let us consider the graded tensor product ${\cal A} \otimes {\cal A}$ of two
algebras of the super-oscillators. It is useful to denote the
generators of ${\cal A} \otimes {\cal A}$ as $c^a \otimes e =
c^a_1$ and $e \otimes c^a = c^a_2$ where $e$ is
the unit element of $\mathcal{A}$. Then formula (\ref{AFc}) for
the adjoint action  is generalized to the case of ${\cal A} \otimes {\cal A}$  as
follows.
\begin{equation}
\label{AFc2}
\left[\frac{1}{2}A_{ba}(F_1^{ab} + F_2^{ab}), \; c_1^{1 \dots r \rangle}
c_2^{r+1, \dots k \rangle}  \right]  = \sum_{i=1}^k \;
 A_{\{1\dots i\}} \;
c_1^{1 \dots r \rangle}  c_2^{r+1, \dots k \rangle}   \; ,
\end{equation}
where $c_\ell^{1 \dots r \rangle} = c^{a_1}_\ell \cdots c^{a_r}_\ell$ for $\ell=1,2$,
$F_\ell^{ab} = \epsilon (-1)^{[b]} c^{(a}_\ell c^{b)}_\ell$
and the dressed supermatrices $A_{\{1\dots k\}} $ have been defined in (\ref{signop}).
Comparing this formula with (\ref{TATA}) we find that the invariance condition
for any function $f(c^a_1,c^b_2) \in  {\cal A} \otimes {\cal A}$ is written in
the form
 $$\left[\frac{1}{2}A_{ba}(F_1^{ab} + F_2^{ab}), \; f(c^a_1,c^b_2) \right]  = 0.$$
 If ${\rm grad}(f)=0$, this invariance condition is
 equivalent to
 \begin{equation}
\label{AFc3}
\left[ (F_1^{ab} + F_2^{ab}), \; f(c^a_1,c^b_2) \right]  = 0 \; .
\end{equation}
We shall use this condition in Section {\bf \ref{RFRF}}.

\section{The super-spinorial R-operator  \label{RFRF}}
\setcounter{equation}{0}

We shall construct the $R$ operator intertwining in the $RLL$ relation two
super-spinor representations formulated in terms of super-oscillators.
We  follow here the approach  developed for the $so$-case in \cite{CDI},\cite{CDI2}
and then extended for the $sp$-case in \cite{IsKarKir15}.
We define the L-operator as
\begin{eqnarray}
 \label{L02}
L(u) \equiv u\mathbf{1} - \frac{1}{2} F^{ab} \otimes G_{ba} \;\; \in \;\;
{\cal A} \otimes {\cal Y}(osp) \; ,
\end{eqnarray}
where $G_{ba}$ are  generators of $osp$ and
 $F^{ab}=(-1)^b \epsilon c^{(a}c^{b)} \equiv \tau(G^{ab}) \in {\cal A}$ are
  elements $G^{ab}$ in the
 super-spinor representation $\tau$ (see proposition {\bf \ref{propF}}).
 We shall construct the  R-operator
 $\check{\mathcal{R}}_{12}(u) \in {\cal A} \otimes {\cal A}$ intertwining
the $L$-operators (\ref{L02}) via the following RLL-relation
\begin{equation} \label{eq:RLL1}
\check{\mathcal{R}}_{12}(u) L_1(u+v) L_2(v) =
L_1(v) L_2(u+v) \check{\mathcal{R}}_{12}(u) \qquad \in {\cal A} \otimes {\cal A} \otimes {\cal Y}(osp).
\end{equation}
The operator $\check{\mathcal{R}}_{12}(u)$ acts trivially on the factor ${\cal Y}(osp)$, whereas
 $$
L_1(u) \equiv u\cdot e\otimes e\otimes\mathbf{1} -  \frac{1}{2} F^{ab}_1 \otimes G_{ba}
 , \;\;\;
L_2(v) \equiv v \cdot e\otimes e\otimes\mathbf{1} -  \frac{1}{2}  F^{ab}_2 \otimes G_{ba} .
 $$
 $e$ is the unit element of ${\cal A}$
 and as before we denote $F^{ab}_1 = F^{ab} \otimes e$,
 $F^{ab}_2 = e \otimes F^{ab}$.
 We  consider the case when
 ${\rm grad}(\check{\mathcal{R}}_{12}(u)) = 0$.

\subsection{The defining conditions}
\label{sec:defcondR}

The conditions restricting the R-operator are obtained
from the expansion of the RLL-relation \eqref{eq:RLL1} in the parameter $v$.
The condition at $v^2$ is trivial. At $v^1$ we
 obtain the invariancy condition  (\ref{AFc3}) w.r.t. the adjoint action of $osp$
\begin{equation} \label{eq:Rcond1}
\left[ \check{\mathcal{R}}(u), F_1^{ab} +F_2^{ab} \right] = 0.
\end{equation}
The condition appearing at $v^0$ is
\begin{equation} \label{eq:Rcond2}
\begin{array}{c}
u\left[ \check{\mathcal{R}}(u)F_2^{ab} - F_1^{ab} \check{\mathcal{R}}_{12}(u) \right]
\otimes G_{ba} - \\ [0.3cm]
 -\frac{1}{2} (-1)^{([b]+[c])([d]+[a])}
\left[ \check{\mathcal{R}}(u)F_1^{cb} F_2^{ad} - F_1^{cb} F_2^{ad} \check{\mathcal{R}}_{12}(u) \right] \otimes G_{bc} G_{da} = 0.
\end{array}
\end{equation}
The product of two generators can be rewritten via the supercommutator \eqref{SupComm} and the superanticommutator
\begin{equation}
G_{bc} G_{da} = \frac{1}{2}\left\{ [G_{bc}, G_{da}]_\pm + \{ G_{bc}, G_{da}\}_\mp \right\}.
\end{equation}
The superanticommutator is defined as
\begin{equation}
\{ A,B \}_\mp \equiv A \, B+(-1)^{[A][B]} \, B \, A  \; .
\end{equation}
We introduce the following notation
\begin{equation} \label{Xcbad}
X^{(cb)(ad)} \equiv (-1)^{([b]+[c])([d]+[a])}
\left( \check{\mathcal{R}}(u)F_1^{cb} F_2^{ad} - F_1^{cb} F_2^{ad} \check{\mathcal{R}}_{12}(u)\right)
\end{equation}
and use the supercommutation relations for $osp$ \eqref{eq:OSpComm} to write \eqref{eq:Rcond2} as
\begin{equation}
\label{eq:Rcond2a}
\left\{ u\left[ \check{\mathcal{R}}(u)F_2^{ab} - F_1^{ab} \check{\mathcal{R}}_{12}(u) \right]  -  \varepsilon_{cd} X^{(cb)(ad)} \right\} \otimes G_{ba} = \frac{1}{4}  X^{(cb)(ad)} \otimes
 \{G_{bc},G_{da}\}_\mp.
\end{equation}
This condition is fulfilled only if both sides vanish separately.
This becomes the  key point of the construction of the R-operator
$\hat{ \mathcal{R}}(u)$.
\begin{equation}
\label{eq:Rcond2al}
\left\{ u\left[ \check{\mathcal{R}}(u)F_2^{ab} - F_1^{ab} \check{\mathcal{R}}_{12}(u)
\right]  -  \varepsilon_{cd} X^{(cb)(ad)} \right\} \otimes G_{ba} = 0,
\end{equation}
\begin{equation}
\label{eq:Rcond2ar}
  X^{(cb)(ad)} \otimes \{G_{bc},G_{da}\}_\mp = 0.
\end{equation}

\subsection{Auxiliary variables}
\label{sec:auxvar}

We have to deal with the supersymmetrization of the product of
super-oscillators generalizing (\ref{symcacb}),
\begin{equation} \label{GradSym}
c^{(a_1} c^{a_2} \dots c^{a_k)} \equiv \frac{1}{k!}\sum_{\sigma\in S_k}
(-\epsilon)^{p(\sigma)}
(-1)^{\hat{\sigma}} c^{a_{\sigma(1)}} \cdots c^{a_{\sigma(k)}}
\end{equation}
where $p(\sigma)$ denotes the parity of the permutation $\sigma$. Let us
explain what we mean by ${\hat{\sigma}}$.
We denote the basic transposition as $\sigma_j\equiv \sigma_{j, j+1}$
permuting the $j$-th and $(j+1)$-st site.
For the basic transposition we define $\hat{\sigma_j}=[a_j][a_{j+1}]$.
For a general permutation $\sigma$ with a given decomposition into the basic
transpositions
$\sigma=\sigma_{j_1}\sigma_{j_2}\dots \sigma_{j_{k-1}}\sigma_{j_k}$, we
define
\begin{equation}
\hat{\sigma}=[a_{j_k}][a_{j_k+1}]+[a_{\sigma_{j_k}(j_{k-1})}]
[a_{\sigma_{j_k}(j_{k-1}+1)}]+
\cdots+[a_{\sigma_{j_2}\cdots\sigma_{j_k}(j_1)}]
[a_{\sigma_{j_2}\cdots\sigma_{j_k}(j_1+1)}].
\end{equation}
Thus, the factor $(-1)^{\hat{\sigma}}$ in (\ref{GradSym}) is needed
to take into account the graded properties of the super-oscillators $c^a$
(the example of (\ref{GradSym}) for $k=2$ is given in (\ref{symcacb})).

It is useful to introduce a set of auxiliary variables $\kappa,\kappa'$
with the following properties
\be \label{kappa}
\kappa_a=\varepsilon_{ab} \kappa^b, \ee
$$
\kappa^a\kappa^b=-\epsilon (-1)^{[a][b]} \kappa^b\kappa^a, \ \
\kappa^a\kappa'^b=-\epsilon (-1)^{[a][b]} \kappa'^b\kappa^a, $$ $$
\kappa^a c^b=-\epsilon (-1)^{[a][b]} c^b\kappa^a, \ \
\kappa'^a c^b=-\epsilon (-1)^{[a][b]} c^b\kappa'^a. $$

with the derivatives $\partial^a\equiv \frac{\partial}{\partial \kappa_a}$
satisfying (cf. (\ref{suposc}))
$$
  [\partial^a,\kappa^b]_\epsilon =  \partial^a \kappa^b +
 \epsilon (-1)^{[a][b]} \kappa^b \partial^a = \varepsilon^{ba}
 \; ,
 $$
 \begin{equation}
\label{Kder}
\partial^a \partial^b +
 \epsilon (-1)^{[a][b]} \partial^b \partial^a = 0 \; , \;\;\; \partial^a c^b +
 \epsilon (-1)^{[a][b]} c^b \partial^a = 0 .
\end{equation}

The scalar product is defined by the supermetric $(\kappa\cdot\kappa')\equiv \varepsilon_{ba} \kappa^a \kappa^{\prime b}=\kappa_b \kappa^{\prime b}$. This product is skew-symmetric $(\kappa\cdot\kappa')=-(\kappa'\cdot\kappa)$.  It is easy to show  that
\begin{equation}
[\partial^a,(\kappa\cdot c)]=c^a
\end{equation}
and using this property we deduce that
\begin{equation}
\frac{1}{k!}\partial^{a_1}\cdots \partial^{a_k} (\kappa\cdot c)^k =  c^{(a_1}\cdots c^{a_k)}
\end{equation}
or equivalently
\begin{equation}
\label{best}
\partial^{a_1}\cdots \partial^{a_k} e^{(\kappa\cdot c)}\Big|_{\kappa=0} = c^{(a_1}\cdots c^{a_k)}.
\end{equation}
 Thus we represent the
 supersymmetrized product $c^{(a_1}\cdots c^{a_k)}$  of super-oscillators
(\ref{GradSym})
 with nontrivial commutation relations (\ref{suposc}) by the ordinary product
 $\partial^{a_1}\cdots \partial^{a_k}$ of $\kappa$-derivatives which obey
 homogeneous commutation relations (\ref{Kder}).
The derivative $\partial'^a$ w.r.t. the variable $\kappa'_a$
commutes with the product $(\kappa\cdot c)$. We can also show that
\begin{equation}
[(\kappa\cdot c),(\kappa'\cdot c)] =
- (\kappa\cdot \kappa')_1 = (\kappa'\cdot \kappa)_1
\end{equation}
where we introduce the second type of a scalar product
\begin{equation}
(\kappa\cdot \kappa')_1 \equiv \kappa^a \kappa'_a =
\varepsilon_{ab} \kappa^a \kappa'^b = \epsilon (-1)^{[b]} \kappa_b \kappa'^b.
\end{equation}

Using the Baker-Campbell-Hausdorff formula we calculate the product of two
symmetrized factors
$$
\begin{array}{l}
c^{(a_1}\cdots c^{a_k)} c^{(a}c^{b)} = \partial^{a_1}\cdots \partial^{a_k} e^{(\kappa\cdot c)}\ \partial'^a \partial'^b e^{(\kappa'\cdot c)} \Big|_{\kappa,\kappa'=0} =  \\ [0.2cm]
 = \partial^{a_1}\cdots \partial^{a_k} \ \partial'^a \partial'^b e^{(\kappa\cdot c)} e^{(\kappa'\cdot c)} \Big|_{\kappa,\kappa'=0}
=\partial^{a_1}\cdots \partial^{a_k} \ \partial'^a \partial'^b e^{((\kappa+\kappa')\cdot c)+\frac{1}{2}(\kappa'\cdot \kappa)_1} \Big|_{\kappa,\kappa'=0} =
\end{array}
$$
We continue by the
 change of the variables $\kappa,\kappa'\rightarrow \bar{\kappa}=
\kappa+\kappa',\kappa'$, and then omit the bar over $\bar{\kappa}$,
\begin{align}
&= \partial^{a_1}\cdots \partial^{a_k} \ (\partial^a+\partial'^a)
(\partial^b+\partial'^b)  e^{(\kappa\cdot c)+\frac{1}{2}(\kappa'\cdot \kappa)_1} \Big|_{\kappa,\kappa'=0}= & \notag\\
\intertext{and use $[(\kappa\cdot c),(\kappa'\cdot \kappa)_1]=0$ and
$[\partial'^a,(\kappa'\cdot \kappa)_1]=\epsilon (-1)^a \kappa^a$ to obtain}
 &=\partial^{a_1}\cdots \partial^{a_k} \  \Bigl[ \partial^a \partial^b +\frac{1}{2} \partial^a
 \epsilon (-1)^{[b]} \kappa^b -\frac{1}{2}(-1)^{[a][b]+[a]}\partial^b
 \kappa^a + \frac{1}{4}(-1)^{[a]+[b]}\kappa^a \kappa^b\Bigr] e^{(\kappa \cdot c)}
 \Big|_{\kappa=0} & \notag\\
&=\partial^{a_1}\cdots \partial^{a_k} \
[+]^{ab} \ e^{(\kappa \cdot c)} \Big|_{\kappa=0},&
 \label{ccF}
\end{align}
Here we introduce the  concise notation
\begin{equation}
\label{plus}
[+]^{ab}=\partial^a \partial^b + \frac{1}{2} \left( \epsilon (-1)^{[a]} \kappa^a \partial^b - (-1)^{[a][b]+[b]}\kappa^b \partial^a \right) + \frac{1}{4}(-1)^{[a]+[b]} \kappa^a \kappa^b \; .
\end{equation}
Similarly we obtain
\begin{align}
& c^{(a} c^{b)} c^{(a_1}\cdots c^{a_k)} = (-1)^{([a]+[b])([a_1]+\cdots+[a_k])} \partial^{a_1}\cdots \partial^{a_k} \ \partial'^a \partial'^b e^{((\kappa+\kappa')\cdot c)-\frac{1}{2}(\kappa'\cdot \kappa)} \Big|_{\kappa,\kappa'=0} &
 \notag \\
& = (-1)^{([a]+[b])([a_1]+\cdots+[a_k])} \partial^{a_1}\cdots \partial^{a_k} \; \,
  [-]^{ab} \, e^{(\kappa \cdot c)} \Big|_{\kappa=0} , &
 \label{Fcc}
\end{align}
where (cf. (\ref{plus}))
\begin{equation}
\label{minus}
[-]^{ab}=\partial^a \partial^b - \frac{1}{2} \left( \epsilon (-1)^{[a]} \kappa^a \partial^b - (-1)^{[a][b]+[b]}\kappa^b \partial^a \right) + \frac{1}{4}(-1)^{[a]+[b]} \kappa^a \kappa^b \; .
\end{equation}
Hence using (\ref{ccF}) and (\ref{Fcc})
 we write the supercommutator
  (\ref{SupComm}) of $c^{(a_1}\cdots c^{a_k)}$ and $c^{(a}c^{b)}$ as
\begin{equation} \label{eq:CommBasis}
[c^{(a_1}\cdots c^{a_k)}, c^{(a}c^{b)}]_\pm = \partial^{a_1}\cdots \partial^{a_k} \left( \epsilon (-1)^{[a]} \kappa^a \partial^b - (-1)^{[a][b]+[b]}\kappa^b \partial^a\right) e^{(\kappa \cdot c)} \Big|_{\kappa=0}.
\end{equation}

\begin{proposition} \label{propFc} The elements
 \begin{equation}
 \label{invars}
 \varepsilon_{a_1b_1}\dots \varepsilon_{a_kb_k} \;
 c_1^{(a_1}\cdots c_1^{ a_k)} c_2^{(b_k} \cdots c_2^{b_1)}
 \in {\cal A}\otimes {\cal A}
  \end{equation}
 are invariant under the action (\ref{transU})
 of the supergroup $OSp$:
 \begin{equation}
 \label{transUc}
 c^a \;\; \to \;\; U^a_{\;\; b} \, c^b \; .
 \end{equation}
 It means that the elements (\ref{invars}) are
 invariant under the action of the Lie superalgebra  $osp$ and
 satisfy the  infinitesimal form (\ref{AFc3}) of the
 invariance condition
\begin{equation}
 \label{invarC}
\left[\varepsilon_{a_1b_1}\dots \varepsilon_{a_kb_k} c_1^{(a_1}\cdots c_1^{ a_k)} c_2^{(b_k} \cdots c_2^{b_1)} ,F_1^{ab} +F_2^{ab}\right] = 0 \; ,
\end{equation}
 where
 \begin{equation}
\label{defF12}
F_1^{ab} \equiv \epsilon (-1)^{b}c_1^{(a}c_1^{b)}
 \; , \;\;\;\  F_2^{ab} \equiv \epsilon (-1)^{b}c_2^{(a}c_2^{b)} \;
\end{equation}
 are generators of $osp$  (see proposition {\bf \ref{propF}}).
\end{proposition}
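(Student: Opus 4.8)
The plan is to reduce the claim to two ingredients: (i) the graded-symmetrized products transform covariantly, i.e.\ $c_\ell^{(a_1}\cdots c_\ell^{a_k)}$ carries the $k$-th graded-symmetric power of the vector representation carried by the $c_\ell^a$; and (ii) the chain of metrics $\varepsilon_{a_1b_1}\cdots\varepsilon_{a_kb_k}$ is $OSp$-invariant. Ingredient (ii) is immediate: a single $\varepsilon_{ab}$ is invariant by (\ref{eq:OSpDef1}) (equivalently its infinitesimal form (\ref{OspAlg1})), hence so is any product of copies of it. Granted (i) and (ii), the full contraction (\ref{invars}) of two covariant tensors against an invariant one is invariant, which is the assertion.

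For ingredient (i) there are two equivalent routes. The group route: because the defining relations (\ref{suposc}) are $OSp$-invariant — as checked just after (\ref{suposc}) — the assignment $c^a\mapsto U^a_{\;b}c^b$ extends to an algebra automorphism $\phi_U$ of $\mathcal A$, acting diagonally on $\mathcal A\otimes\mathcal A$. Since $\phi_U$ is multiplicative and fixes the scalar $\varepsilon^{ab}$, and since $c^{(a_1}\cdots c^{a_k)}$ differs from the ordered monomial $c^{a_1}\cdots c^{a_k}$ only by terms of lower degree carrying factors of $\varepsilon^{ab}$ (obtained by repeated use of (\ref{suposc})), $\phi_U$ sends $c^{(a_1}\cdots c^{a_k)}$ to the symmetrization of $(U^{a_1}_{\;d_1}c^{d_1})\cdots(U^{a_k}_{\;d_k}c^{d_k})$; pulling the $U$'s to the left with the graded tensor-product sign rule gives $U^{a_1}_{\;d_1}\cdots U^{a_k}_{\;d_k}$ times $c^{(d_1}\cdots c^{d_k)}$ up to a definite sign. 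Applying this to both factors of (\ref{invars}) and collapsing each $\varepsilon_{a_ib_i}U^{a_i}_{\;d_i}U^{b_i}_{\;e_i}$ back to $\varepsilon_{d_ie_i}$ by (\ref{eq:OSpDef}) yields $\phi_U$-invariance of (\ref{invars}). The infinitesimal route, which is the one to write out carefully, uses (\ref{AFc}) (the adjoint action of $F^{ab}$ on a single generator is linear) together with the commutator formula (\ref{eq:CommBasis}): since $F^{ab}$ is even it acts on $c_1^{(a_1}\cdots c_1^{a_k)}c_2^{(b_k}\cdots c_2^{b_1)}$ by Leibniz, and (\ref{eq:CommBasis}) shows that $[\,c_\ell^{(a_1}\cdots c_\ell^{a_k)},F_\ell^{cd}\,]_\pm$ is a sum of $k$ terms in each of which one symmetric slot is hit by an infinitesimal $osp$-rotation. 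Contracting with $\varepsilon_{a_ib_i}$, the rotation acting on the $i$-th slot of the first factor cancels the one acting on the $i$-th slot of the second, slot by slot, precisely because $\varepsilon$ is annihilated by the $osp$ generators (this is (\ref{OspAlg1})); summing over $i$ gives (\ref{invarC}).

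The element (\ref{invars}) is even — $\varepsilon_{ab}\neq0$ forces $[a]+[b]\equiv 0\ (\mathrm{mod}\,2)$, so the total degree vanishes — hence its infinitesimal invariance is exactly of the form (\ref{AFc3}), i.e.\ (\ref{invarC}) as stated, and the group statement follows from it. The conceptual content here is light; the one genuine obstacle is the graded sign bookkeeping: in the group route, keeping track of the signs produced by moving $U^{a_i}_{\;b_i}$ (degree $[a_i]+[b_i]$) past the $c^{d}$'s and by reordering the two $\mathcal A$-factors, and checking that these conspire with the signs in (\ref{eq:OSpDef}) to reproduce $\varepsilon_{d_ie_i}$; in the infinitesimal route, checking that the signs in (\ref{eq:CommBasis}) combine with those of the $\varepsilon$-contraction and of the factor reordering to give exact pairwise cancellation. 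This is where the conventions prepared in Sections 2--4 (the sign operator, the graded tensor product, and degree-conservation identities such as (\ref{eq:Rmat-})) carry the weight.
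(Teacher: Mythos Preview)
Your proposal is correct and follows essentially the same approach as the paper. The paper's group-level argument is packaged slightly differently: instead of contracting two covariant tensors against the invariant $\varepsilon_{a_ib_i}$, it lowers the indices on the first factor to $\bar{c}_{1\,a}=\varepsilon_{da}c_1^d$, which transforms contragrediently ($\bar{c}_a\to\bar{c}_b(U^{-1})^b_{\;a}$), so that (\ref{invars}) becomes a manifest pairing of a $U^{-1}$-tensor with a $U$-tensor; your infinitesimal route via (\ref{eq:CommBasis}) and slot-by-slot cancellation is exactly the direct computation the paper defers to Appendix~\ref{AppInv}.
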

\noindent
{\bf Proof.}
 According to (\ref{transUc})
 the element $U \in OSp$ acts on the product
 $c^{b_k} \cdots c^{b_1}$ as
 (see (\ref{s-act})):
 \begin{equation}
 \label{s-act01}
c^{k \rangle} \cdots c^{1 \rangle} \;\; \to \;\; U_k  U_{\{k,k-1\}} \cdots U_{\{k,\dots,1\}}
\, c^{k \rangle} \cdots c^{1 \rangle} \; ,
\end{equation}
where $U_{\{k,\dots,j\}} = (-)^{k,k-1} \cdots (-)^{k,j} U_j (-)^{k,j} \cdots (-)^{k,k-1}$.
 Let the oscillators $c^a$ commute as in (\ref{suposc}), where
 in the right hand side we put $\varepsilon^{ab} = 0$. Then we have
 $c^{(b_k} \cdots c^{b_1)} = c^{b_k} \cdots c^{b_1}$ and (\ref{s-act01}) gives
 \begin{equation}
 \label{s-act02}
c^{(\; k \rangle} \cdots c^{1 \rangle \;)} \;\; \to \;\; U_k  U_{\{k,k-1\}} \cdots U_{\{k,\dots,1\}}
\, c^{(\; k \rangle} \cdots c^{1 \rangle \;)} \; ,
\end{equation}
where the parentheses $(\dots)$ denote the supersymmetrization.
 Since the commutation relations of elements $U^a_{\;\; b}$ and $c^d$ are independent
 of the right hand side of (\ref{suposc}), the transformation rule
 (\ref{s-act02}) will be the same for the algebra of super-oscillators (\ref{suposc}).
 From (\ref{transUc}) and in view of the invariance of the
 bilinear form (\ref{bf}) we have the transformation rule
 for new variables $\bar{c}_a \equiv \varepsilon_{da} c^d$:
 \begin{equation}
 \label{transUd}
 \bar{c}_a   \;\; \to \;\;  \bar{c}_b  \, (U^{-1})^b_{\;\; a}
 \;\; \Leftrightarrow  \;\;  \bar{c}_{\langle j }  \;\; \to \;\;
  \bar{c}_{\langle j } \, U_j^{-1} \; ,
 \end{equation}
 where $\bar{c}_a = \epsilon (-1)^{[a]} c_a$ and
  $j$ denotes the label of the  superspace.
 Arguing as above we obtain the transformation rule
 for the supersymmetrized product of the
 super-oscillators $\bar{c}_a$:
 \begin{equation}
 \label{s-act03}
 \bar{c}_{(\, \langle 1 } \cdots \bar{c}_{\langle k \,)} \;\; \to \;\;
  \bar{c}_{(\, \langle 1 } \cdots \bar{c}_{\langle k \,)}
  U_{\{k,\dots,1\}}^{-1}  \cdots U_{\{k,k-1\}}^{-1}  U_k^{-1} \; .
 \end{equation}
 From eqs. (\ref{s-act02}) and (\ref{s-act03}) we immediately see that
 the element
 $$
 \varepsilon_{a_1b_1}\dots \varepsilon_{a_kb_k}
 c_1^{(a_1}\cdots c_1^{ a_k)} c_2^{(b_k} \cdots c_2^{b_1)} =
 \bar{c}_{1_{(\, \langle 1 }} \cdots \bar{c}_{1_{\langle k \,)}} \,
 c_2^{(\; k \rangle} \cdots c_2^{1 \rangle \;)}
 \in {\cal A} \; ,
 $$
 is invariant under the action
 (\ref{transUc}) of the supergroup $OSp$.
  Considering now the infinitesimal form of this action $U = I + A + \dots$
 and taking into account eqs. (\ref{AFc2})
 we deduce the condition (\ref{invarC}).
\hfill \qed

We present the direct proof of (\ref{invarC}) in appendix {\bf
\ref{AppInv}}, giving an alternative of the above proof.

\subsection{The construction of the R-operator}
\label{consR}

Having introduced generating functions as an effective formulation of the
 supersymmetrization of super-oscillators,
we are prepared to solve the conditions \eqref{eq:Rcond1} and \eqref{eq:Rcond2a}
imposed on the R-operator $\hat{\mathcal{R}}_{12}(u)$.

The condition \eqref{eq:Rcond1} says that the R-operator has to be invariant w.r.t.
the Lie superalgebra  $osp$.
Therefore, it has to be a sum of $osp$-invariants (\ref{invars})
\begin{equation}
 \label{anzR}
\check{\mathcal{R}}_{12}(u) = \sum_k \frac{r_k(u)}{k!} \, \varepsilon_{\vec{a},\vec{b}} \
 c_1^{(a_1\dots a_k)} c_2^{(b_k\dots b_1)} \; ,
\end{equation}
where we use the concise notation
$$
\varepsilon_{\vec{a} ,\vec{b}} = \varepsilon_{a_1b_1}\dots \varepsilon_{a_kb_k}  , \;\;\;\; c_1^{(a_1\dots a_k)} := c_1^{(a_1} \cdots c_1^{a_k)}, \;\;\;\;  c_2^{(b_k\dots b_1)} := c_2^{(b_k}\cdots c_2^{b_1)}  .
$$
Inserting this ansatz into the condition  \eqref{eq:Rcond2al}, we obtain
\begin{equation}
\begin{array}{c}
\sum_k \frac{r_k(u)}{k!} \varepsilon_{\vec{a},\vec{b}} \Big\{ \epsilon u (-1)^{[b]} \left[ c_1^{(a_1\dots a_k)} c_2^{(b_k\dots b_1)} c_2^{(a b)} - c_1^{(ab)}
c_1^{(a_1\dots a_k)} c_2^{(b_k\dots b_1)}  \right] - \\ [0.3cm]
-(-1)^{([b]+[c])([a]+[d])+[b]+[d]} \varepsilon_{cd} \left[ c_1^{(a_1 \dots a_k)}
c_2^{(b_k\dots b_1)} c_1^{(cb)} c_2^{(ad)} - c_1^{(cb)} c_2^{(ad)}
c_1^{(a_1\dots a_k)} c_2^{(b_k \dots b_1)}  \right] = 0 \; .
\end{array}
\end{equation}
We use now the advantage of the generating function formulation
developed in the last subsection
 (in particular we apply relations (\ref{ccF}), (\ref{Fcc}))
 and rewrite the equation as
\begin{align}
\sum_k \frac{r_k(u)}{k!} \varepsilon_{\vec{a},\vec{b}} \partial_1^{a_1}\cdots \partial_1^{a_k} \partial_2^{b_k}\cdots \partial_2^{b_1} \Big\{ \epsilon u (-1)^{[b]}\left(  [+]_2 ^{ab} - [-]_1 ^{ab}  \right) - \notag \\
(-1)^{([b]+[c])([a]+[d])+[b]+[d]} \varepsilon_{cd} \left(  [+]_1^{cb} [+]_2^{ad} -  [+]_1^{cb} [+]_2^{ad}  \right) \Big\} e^{(\kappa_1\cdot c_1)} e^{(\kappa_2\cdot c_2)} \Big|_{\kappa_1,\kappa_2=0}=0 , \label{ccFF}
\end{align}
where the notation $[\pm]^{ab}$ was introduced
in (\ref{plus}), (\ref{minus}).
We also see that
$$
\varepsilon_{\vec{a},\vec{b}} \; \partial_1^{a_1}\cdots \partial_1^{a_k}
\; \partial_2^{b_k}\cdots \partial_2^{b_1}=(\partial_1\cdot\partial_2)_1^k=(\partial_\lambda)^k e^{\lambda(\partial_1\cdot\partial_2)_1}|_{\lambda=0} \; ,
$$
 and obtain
\begin{align} \label{eq:Rcond2b}
&\sum_k \frac{r_k(u)}{k!}  (\partial_\lambda)^k e^{\lambda(\partial_1\cdot\partial_2)_1} \Big|_{\lambda=0} \Big\{  \epsilon u (-1)^{[b]} \left(  [+]_2 ^{ab} - [-]_1 ^{ab}  \right) - \\
& \qquad - (-1)^{([b]+[c])([d]+[a])+[b]+[d]} \varepsilon_{cd} \left(  [+]_1^{cb} [+]_2^{ad} - [-]_1^{cb} [-]_2^{ad}  \right) \Big\} e^{(\kappa_1\cdot c_1)} e^{(\kappa_2\cdot c_2)} \Big|_{\kappa_1,\kappa_2=0}=0. \notag
\end{align}
We want to commute all the partial derivatives $\partial_1,\partial_2$
to the right and the variables $\kappa_1,\kappa_2$ to the left and then
apply $\kappa_1=0,\kappa_2=0$.
 For this purpose we need to know how the operator
 $e^{\lambda(\partial_1\cdot\partial_2)_1}$ acts on the variables
$\kappa_1,\kappa_2$
\begin{align}
e^{\lambda(\partial_1\cdot\partial_2)_1} \kappa_1^a = (\kappa_1^a - \lambda \epsilon (-1)^{[a]} \partial_2^a) e^{\lambda(\partial_1\cdot\partial_2)_1}, \\
e^{\lambda(\partial_1\cdot\partial_2)_1} \kappa_2^a = (\kappa_2^a + \lambda \epsilon (-1)^{[a]} \partial_1^a) e^{\lambda(\partial_1\cdot\partial_2)_1}.
\end{align}
First of all
\begin{equation} \label{eq:aux1}
e^{\lambda(\partial_1\cdot \partial_2)_1} \left(  [+]_2 ^{ab} - [-]_1 ^{ab}  \right) \Big|_{\kappa_1,\kappa_2=0} = \left(\frac{\lambda^2}{4}-1\right) \left( \partial_1^a \partial_1^b - \partial_2^a \partial_2^b \right) e^{\lambda(\partial_1\cdot \partial_2)_1}.
\end{equation}
Let us denote
\begin{align}
 Y^{(cb)(ad)}  =&  (-1)^{([b]+[c])([a]+[d])+[b]+[d]} ([+]_1^{cb} [+]_2^{ad} - [-]_1^{cb} [-]_2^{ad})=  (-1)^{([b]+[c])([a]+[d])+[b]+[d]} \notag \\
 & \times \Big\{\left(\partial_1^c \partial_1^b + \frac{1}{4}(-1)^{[c]+[b]}\kappa_1^c \kappa_1^b \right)\left(\epsilon(-1)^{[a]}\kappa_2^a\partial_2^d  - (-1)^{[a][d]+[d]}\kappa_2^d \partial_2^a\right)  \\
& \quad+ \left(\epsilon(-1)^{[c]}\kappa_1^c \partial_1^b -  (-1)^{[c][b]+[b]} \kappa_1^b \partial_1^c\right)\left(\partial_2^a \partial_2^d + \frac{1}{4}(-1)^{[a]+[d]}\kappa_2^a \kappa_2^d\right)\Big\}. \notag
\end{align}
Commuting  $e^{\lambda(\partial_1\cdot\partial_2)_1}$ through the operator
 $Y^{(cb)(ad)}$ and imposing $\kappa_1,\kappa_2=0$ we obtain
\begin{align}
&e^{\lambda(\partial_1\cdot\partial_2)_1} Y^{(cb)(ad)} \Big|_{\kappa_1,\kappa_2=0} = (-1)^{([b]+[c])([a]+[d])+[b]+[d]} \notag\\
&  \times \Big\{\lambda \Big[\left(\partial_1^c \partial_1^b + \frac{\lambda^2}{4}\partial_2^c \partial_2^b \right) \left(\partial_1^a \partial_2^d -\epsilon (-1)^{[a][d]}\partial_1^d\partial_2^a\right) \notag \\
& \qquad\qquad - \left( \partial_2^c \partial_1^b - \epsilon (-1)^{[b][c]} \partial_2^b \partial_1^c \right) \left( \partial_2^a \partial_2^d + \frac{\lambda^2}{4} \partial_1^a \partial_1^d \right) \Big]+ \\
&\qquad +\frac{\epsilon\lambda^2}{4}  \Big[\Big( (-1)^{[a]}\varepsilon^{ab}\partial_2^c \partial_2^d -(-1)^{[a][b]+[a]} \varepsilon^{ac} \partial^b_2 \partial_2^d \notag\\
& \qquad\qquad  - (-1)^{[a][d]+[d]}\varepsilon^{db}\partial_2^c \partial_2^a +(-1)^{([a]+[b])[d]+[d]} \varepsilon^{dc} \partial_2^b\partial_2^a \Big) - \left(2\rightarrow 1 \right)\Big]  \Big\} e^{\lambda(\partial_1\cdot\partial_2)_1}. \notag
\end{align}
Therefore
\begin{align} \label{eq:aux2}
 \varepsilon_{cd} e^{\lambda(\partial_1\cdot\partial_2)_1} Y^{(cb)(ad)} = (-1)^{[a][b]+[b]}\left[\lambda\left(\frac{\lambda^2}{4}+1\right)(\partial_1\cdot \partial_2)_1- \frac{\lambda^2}{4}(\omega-2)\right] \notag \\
 \times \left(\partial_1^b \partial_1^a - \partial_2^b \partial_2^a \right) e^{\lambda(\partial_1\cdot\partial_2)_1}
\end{align}
where $\omega=\varepsilon^{cd}\varepsilon_{cd}$ and we used the skew-symmetry of the product $(\partial_1\cdot \partial_2)_1=-(\partial_2\cdot \partial_1)_1$.
Inserting \eqref{eq:aux1} and \eqref{eq:aux2} into \eqref{eq:Rcond2b} and using the fact that $(\partial_1\cdot \partial_2)_1= \partial_\lambda e^{\lambda(\partial_1\cdot\partial_2)_1}|_{\lambda=0}$, we rewrite \eqref{eq:Rcond2b} as
\begin{align}
\sum_k \frac{r_k(u)}{k!} (\partial_\lambda)^k \left\{ \left(\frac{\lambda^3}{4} +\lambda\right) \partial_\lambda + \frac{\lambda^2}{4}(u-\omega+2) -u \right\} e^{\lambda(\partial_1\cdot\partial_2)_1} \Big|_{\lambda=0} & \notag\\
\times\left( \partial_1^a\partial_1^b - \partial_2^a \partial_2^b \right) e^{(\kappa_1\cdot c_1)+(\kappa_2\cdot c_2)} \Big|_{\kappa_1,\kappa_2=0} &=0.
\end{align}
By means of the general formula
\begin{equation}
(\partial_\lambda)^k \lambda^r = \sum_{i\geq 0} \frac{r!k!}{i!(r-i)!(k-i)!} \lambda^{r-i} \partial_\lambda^{k-i} \; ,
\end{equation}
we commute the derivatives w.r.t. $\lambda$ to the right and obtain
\begin{align}
\sum_k \frac{r_k(u)}{k!}  \left\{ (k-u) (\partial_\lambda)^k + \frac{k(k-1)}{4} (k +u-\omega) (\partial_\lambda)^{k-2} \right\} e^{\lambda(\partial_1\cdot\partial_2)_1} \Big|_{\lambda=0}& \notag\\ \times\left( \partial_1^a\partial_1^b - \partial_2^a \partial_2^b \right) e^{(\kappa_1\cdot c_1)+(\kappa_2\cdot c_2)} \Big|_{\kappa_1,\kappa_2=0}& =0.
\end{align}
Finally we deduce the recurrence relation for $r_k(u)$
\begin{equation} \label{Rrec1}
r_{k+2}(u) = \frac{4(u-k)}{k+2+u-\omega} r_k(u)
\end{equation}
which is solved in terms of the $\Gamma$-functions:
\begin{equation}
 \label{sol-rm}
 \begin{array}{rl}
r_{2m}(u) &= (-4)^{m} \frac{\Gamma(m-\frac{u}{2})}{\Gamma(m+1+\frac{u-\omega}{2})} A(u), \\
r_{2m+1}(u) &= (-4)^{m} \frac{\Gamma(m-\frac{u-1}{2})}{\Gamma(m+1+\frac{u-\omega+1}{2})} B(u)  \; ,
\end{array}
\end{equation}
where $\omega = \epsilon(N-M)$ (see (\ref{oMN})),  and $A(u),B(u)$ are arbitrary functions of $u$.
Substitution of (\ref{sol-rm}) in (\ref{anzR}) gives
the expression for the $osp$-invariant $R$-matrix.

This expression for the $osp$-invariant $R$-matrix generalizes the formulas for the $so$-type $R$-matrices obtained in \cite{Witten}, \cite{CDI2} (see also \cite{KarT},\cite{ZamL},\cite{Resh},\cite{Ogiev},\cite{IsKarKir15}). The $so$- and $sp$-invariant $R$-matrices are obtained easily by restriction to the corresponding Lie subalgebras of $osp$.

The bosonic part of $osp(N|M)$ (in the case $\epsilon=1$) corresponds to the embedded subalgebra $so(N)$. Similarly the fermionic part (in the case $\epsilon=-1$) corresponds to the embedded Lie subalgebra $so(M)$. Hence, restricting ourselves to $so\subset osp$ in \eqref{Rrec1} we obtain the recurrence relations for the coefficients $r_k(u)$ of the $so(d)$-symmetric R-operator
\begin{equation} \label{Rrec2}
r_{k+2}(u) = \frac{4(u-k)}{k+2+u-d} r_k(u)
\end{equation}
with the solution
\begin{equation}
\label{Rrec7}
 \begin{array}{rl}
r_{2m}(u) &= (-4)^{m} \frac{\Gamma(m-\frac{u}{2})}{\Gamma(m+1+\frac{u-d}{2})} A(u), \\
r_{2m+1}(u) &= (-4)^{m} \frac{\Gamma(m-\frac{u-1}{2})}{\Gamma(m+1+\frac{u-d+1}{2})} B(u).
\end{array}
\end{equation}
Moreover, in such a restriction the supersymmetrizers \eqref{GradSym}
appearing in the ansatz \eqref{anzR} transfer to the antisymmetrisers.
This result coincides with the results obtained in \cite{Witten}, \cite{CDI2}.
Indeed, after the rescaling of the spectral parameter  $u\rightarrow -u$ and
of the generators $c^a\rightarrow \sqrt{2} c^a$ one can directly see the coincidence
with \cite{CDI2}. The rescaling $c^a\rightarrow \sqrt{2} c^a$ gives the standard
Clifford algebra $c^ac^b+c^bc^a=2\varepsilon^{ab}$   for $so\subset osp$ which
was used in \cite{Witten}, \cite{CDI2} instead of the algebra
$\mathcal{A}$ \eqref{suposc} used in this text. Moreover, the generators
$F^{ab}$ of $so$ \eqref{defF} used in our text differ by the factor
$-\epsilon(-1)^{[b]}=-1$ from their equivalents in \cite{CDI2}.
This is the reason that here and in the left hand side of \eqref{eq:Rcond2a}
 of the spectral parameter is to be rescaled as $u\rightarrow -u$.

Similar considerations can be done for the Lie subalgebra $sp\subset osp$. The fermionic part of $osp(N|M)$ (for $\epsilon=1$) corresponds to $sp(M)\subset osp(N|M)$. The bosonic part of $osp(N|M)$ (for $\epsilon=-1$) corresponds to $sp(N)\subset osp(N|M)$. Restricting \eqref{Rrec1} to $sp\subset osp$ we obtain the recurrence relation for for the $sp(d)$-symmetric R-operator
\begin{equation}
r_{k+2}(u) = \frac{4(u-k)}{k+2 +u+d} r_k(u)
\end{equation}
with the solution
\begin{equation}
 \begin{array}{rl}
r_{2m}(u) &= (-4)^{m} \frac{\Gamma(m-\frac{u}{2})}{\Gamma(m+1+\frac{u+d}{2})} A(u), \\
r_{2m+1}(u) &= (-4)^{m} \frac{\Gamma(m-\frac{u-1}{2})}{\Gamma(m+1+\frac{u+d+1}{2})} B(u)  .
\end{array}
\end{equation}
The supersymmetrizers \eqref{GradSym} appearing in the ansatz \eqref{anzR} transfer to the symmetrizers.

\subsection{The condition on the generators G}
\label{condG}

We intend to prove here that from the condition \eqref{eq:Rcond2ar}
follows that
$
\{G_{(bc},G_{d)a}\}_{\mp} = 0.
$.

We study $X^{(cb)(ad)}$  defined in \eqref{Xcbad}.
It possess  obviously the following two symmetries:
\begin{equation} \label{ssymX}
X^{(cb)(ad)} = -\epsilon (-1)^{[c][b]+[c]+[b]} X^{(bc)(ad)}, \quad X^{(cb)(ad)} =
 -\epsilon (-1)^{[a][d]+[a]+[d]} X^{(cb)(da)}.
\end{equation}
They are the same symmetries as of the generators $F^{cb},F^{ad}$.

Further, we see that from the properties of the superanticommutator
$\{G_{bc},G_{da}\}_{\mp} $ and \eqref{eq:Rcond2ar} follows
\begin{equation}
\left( X^{(cb)(ad)} + (-1)^{([b]+[c])([a]+[d])} X^{(ad)(cb)} \right) \{G_{bc},G_{da}\}_{\mp} = 0.
\end{equation}
Using the results of section  {\bf \ref{consR}} we  see that this equation can be rewritten as
\begin{equation} \label{condXX}
\sum_{k} \frac{r_{k}(u)}{k!} \partial_\lambda^k  \left( Z^{(cb)(ad)} + (-1)^{([b]+[c])([a]+[d])} Z^{(ad)(cb)} \right) e^{\lambda(\partial_1\cdot \partial_2)_1 }e^{(\kappa_1\cdot c_1)}e^{(\kappa_2\cdot c_2)} \Big|_{\lambda=\kappa_1=\kappa_2=0} =0
\end{equation}
where
\begin{align}
& Z^{(cb)(ad)} = (-1)^{([b]+[c])([a]+[d])+[b]+[d]} \notag \\
&  \times \Big\{\lambda \Big[\left(\partial_1^c \partial_1^b + \frac{\lambda^2}{4}\partial_2^c \partial_2^b \right) \left(\partial_1^a \partial_2^d -\epsilon (-1)^{[a][d]}\partial_1^d\partial_2^a\right) \notag \\
& \qquad\qquad - \left( \partial_2^c \partial_1^b - \epsilon (-1)^{[b][c]} \partial_2^b \partial_1^c \right) \left( \partial_2^a \partial_2^d + \frac{\lambda^2}{4} \partial_1^a \partial_1^d \right) \Big]+ \\
&\qquad +\frac{\epsilon\lambda^2}{4}  \Big[\Big( (-1)^{[a]}\varepsilon^{ab}\partial_2^c \partial_2^d -(-1)^{[a][b]+[a]} \varepsilon^{ac} \partial^b_2 \partial_2^d \notag\\
& \qquad\qquad  - (-1)^{[a][d]+[d]}\varepsilon^{db}\partial_2^c \partial_2^a +(-1)^{([a]+[b])[d]+[d]} \varepsilon^{dc} \partial_2^b\partial_2^a \Big) - \left(2\rightarrow 1 \right)\Big]  \Big\}. \notag
\end{align}

Let us investigate all the terms appearing in
$Z^{(cb)(ad)} + (-1)^{([b]+[c])([a]+[d])} Z^{(ad)(cb)}$.
It is a third order polynomial in $\lambda$:
\begin{align}
Z^{(cb)(ad)} + (-1)^{([b]+[c])([a]+[d])} Z^{(ad)(cb)} = \lambda\cdot A^{(cb)(ad)} + \frac{\epsilon\lambda^2}{4} \cdot B^{(cb)(ad)} + \frac{\lambda^3}{4} \cdot C^{(cb)(ad)}.
\end{align}
The coefficient $A^{(cb)(ad)}$ separates into two parts.
The first part contains terms with the structure $\partial_1^3\partial_2$
whereas the second part contains terms with the structure $\partial_1 \partial_2^3$.
We describe here only the first part,
the second one is analysed in the same way. The first part of $A^{(cb)(ad)}$ is:
\begin{align}
&(-1)^{([b]+[c])([a]+[d])+[b]+[d]}  \left\{  \partial_1^{cba} \partial_2^d -\epsilon (-1)^{[a][d]} \partial_1^{cbd}\partial_2^a + \right. \notag \\
& \qquad +\left. (-1)^{([b]+[c])([a]+[d])} \left( \partial_1^{adc} \partial_2^b -\epsilon (-1)^{[c][b]} \partial_1^{adb}\partial_2^c \right) \right\} = \notag \\
&= (-1)^{([b]+[c])([a]+[d])+[b]+[d]+[a][d]+[b][c]} \left\{  \partial_1^{bcd} \partial_2^a \right. + \notag \\
& \qquad +\left. \left( \partial_1^{bc} \partial_2^{d} + (-1)^{[b]([c]+[d])}\partial_1^{cd} \partial_2^{b} +(-1)^{[d]([b]+[c])} \partial_1^{db} \partial_2^{c} \right) \partial_1^a \right\}.
\end{align}
Hence, we see the following symmetry:
\begin{equation}
A^{(bd)(ac)} = (-1)^{[b][d]+[c][d]+[b]+[c]} A^{(cb)(ad)}
\end{equation}
which can be regarded as the supercyclic symmetry in three indices $bdc\rightarrow cbd$.
 The coefficient $C^{(cb)(ad)}$ is analysed in the same way and possess
the same symmetry as $A^{(cb)(ad)}$:
\begin{equation}
C^{(bd)(ac)} = (-1)^{[b][d]+[c][d]+[b]+[c]} C^{(cb)(ad)}.
\end{equation}
Moreover, it is not difficult to see that $B^{(cb)(ad)}=0$.
This implies that   $X^{(cb)(ad)} + (-1)^{([b]+[c])([a]+[d])} X^{(ad)(cb)}$
possess  the supercyclic symmetry too.
From this fact and the properties \eqref{ssymX}
we conclude that $X^{(cb)(ad)} + (-1)^{([b]+[c])([a]+[d])} X^{(ad)(cb)}$
is supersymmetric w.r.t. the cyclic permutation of any three of its four indices.

It follows from the above considerations and equation \eqref{condXX} that
\begin{equation} \label{GG}
\{G_{(bc},G_{d)a}\}_{\mp}=0
\end{equation}
where $(bcd)$ denotes the supersymmetrization over the indices $b,c,d$.
Let us remark that this supersymmetrization differs from
the supersymmetrization of the super-oscillators \eqref{symcacb}.
The corresponding symmetrizer is defined like in \eqref{GradSym} with the
replacement $\hat \sigma \to \tilde \sigma$, where for the elementary
permutation of adjacent sites $j, j+1$  $ \tilde \sigma_j = \hat \sigma_j +
[a_j] + [a_{j+1}] $.

We summarize the results of this section in the
\begin{proposition} \label{propRosc}
The $L$-operator 
  $$
L(u) \equiv u\cdot e\otimes\mathbf{1} -
\frac{1}{2} F^{ab} \otimes G_{ba} 
 $$
constructed from the super-oscillator $osp$ generators (\ref{defF})
and $osp$ generators $G_{ba}$, which solve the additional constraint
$$\{G_{(bc},G_{d)a}\}_{\mp}=0 \; , $$
 obeys the spinorial  RLL-relation
$$
\check{\mathcal{R}}_{12}(u) L_1(u+v) L_2(v) =
L_1(v) L_2(u+v) \check{\mathcal{R}}_{12}(u) \; ,
$$
 where  super-spinorial R-operator
 $\check{\mathcal{R}}_{12}(u) \in {\cal A} \otimes {\cal A}$ is
$$ 
 \check{\mathcal{R}}_{12}(u) = \sum_k \frac{r_k(u)}{k!}
  \, \varepsilon_{\vec{a},\vec{b}} \
 c_1^{(a_1\dots a_k)} c_2^{(b_k\dots b_1)},
$$
$$ r_{2m}(u) = (-4)^{m} \frac{\Gamma(m-\frac{u}{2})}{\Gamma(m+1+\frac{u-\omega}{2})}
A(u), \ \ \
r_{2m+1}(u) = (-4)^{m} \frac{\Gamma(m-\frac{u-1}{2})}{\Gamma(m+1+\frac{u-\omega+1}{2})}
 B(u) \; .
$$
Here $\omega = \epsilon(N-M)$ (see (\ref{oMN})),  and $A(u),B(u)$
 are arbitrary functions of $u$. 
\end{proposition}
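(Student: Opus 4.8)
The proof is the synthesis of the analysis carried out in Subsections \ref{sec:defcondR}--\ref{condG}, so the plan is to collect those pieces in order. First I would expand the spinorial $RLL$-relation \eqref{eq:RLL1} in powers of $v$, using that $L_2(v)$ is linear in $v$ and $L_1(u+v)$ contributes one more linear factor. The coefficient of $v^2$ vanishes identically; the coefficient of $v^1$ is the $osp$-invariance condition \eqref{eq:Rcond1}, $[\check{\mathcal{R}}(u),F_1^{ab}+F_2^{ab}]=0$; and the coefficient of $v^0$ is \eqref{eq:Rcond2}. Rewriting the product $G_{bc}G_{da}$ through its super(anti)commutator and substituting the $osp$ relations \eqref{eq:OSpComm} brings \eqref{eq:Rcond2} to the form \eqref{eq:Rcond2a}. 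The first conceptual step is to observe that the two sides of \eqref{eq:Rcond2a} must vanish separately --- one is contracted with $G_{ba}$, the other with $\{G_{bc},G_{da}\}_\mp$, and these are linearly independent in the representations of interest --- which splits the problem into \eqref{eq:Rcond2al} and \eqref{eq:Rcond2ar}.

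Next I would solve \eqref{eq:Rcond1}: by Proposition \ref{propFc} the elements $\varepsilon_{a_1b_1}\cdots\varepsilon_{a_kb_k}\,c_1^{(a_1\dots a_k)}c_2^{(b_k\dots b_1)}$ are $osp$-invariant, and since $\check{\mathcal{R}}_{12}(u)$ is even and invariant it must be a linear combination of them, i.e.\ the ansatz \eqref{anzR} with unknown coefficients $r_k(u)$. Inserting \eqref{anzR} into \eqref{eq:Rcond2al} and converting every supersymmetrised monomial $c^{(a_1\dots a_k)}$ into an ordinary product of $\kappa$-derivatives acting on $e^{(\kappa\cdot c)}$ via \eqref{best}, \eqref{ccF}, \eqref{Fcc}, one reduces \eqref{eq:Rcond2al} to a scalar identity in the auxiliary variables. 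Reading off the coefficient of each power of $(\partial_1\cdot\partial_2)_1$ (equivalently, of $\partial_\lambda$ on $e^{\lambda(\partial_1\cdot\partial_2)_1}$) yields the two-term recurrence \eqref{Rrec1}, whose solution in $\Gamma$-functions is \eqref{sol-rm}; this fixes $\check{\mathcal{R}}_{12}(u)$ up to the free even/odd normalisations $A(u),B(u)$.

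It remains to handle \eqref{eq:Rcond2ar}, following Subsection \ref{condG}. Here one again inserts \eqref{anzR}, passes to generating functions, and writes the resulting operator as $Z^{(cb)(ad)}$; symmetrising over $(cb)\leftrightarrow(ad)$ (legitimate because $\{G_{bc},G_{da}\}_\mp$ is super-symmetric under that exchange) one splits $Z^{(cb)(ad)}+(-1)^{([b]+[c])([a]+[d])}Z^{(ad)(cb)}$ into its $\lambda^1,\lambda^2,\lambda^3$ pieces. The task is then to check that the $\lambda^2$ piece vanishes identically and that the $\lambda^1$ and $\lambda^3$ pieces carry a supercyclic symmetry in three of the four indices; combined with the two symmetries \eqref{ssymX} of $X^{(cb)(ad)}$ this makes $X^{(cb)(ad)}+(-1)^{([b]+[c])([a]+[d])}X^{(ad)(cb)}$ supersymmetric under cyclic permutation of any three of its indices, so that \eqref{eq:Rcond2ar} is equivalent to $\{G_{(bc},G_{d)a}\}_\mp=0$. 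Conversely, when $G$ satisfies this constraint all of \eqref{eq:Rcond1}, \eqref{eq:Rcond2al}, \eqref{eq:Rcond2ar} hold, hence \eqref{eq:RLL1} holds, which is the assertion.

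I expect the main obstacle to be precisely the index-symmetry bookkeeping of $X^{(cb)(ad)}$ in Subsection \ref{condG}: propagating the sign factors through the supersymmetrisers and verifying both $B^{(cb)(ad)}=0$ and the exact supercyclic symmetry of the $\lambda^1,\lambda^3$ coefficients. The generating-function reduction behind \eqref{Rrec1} is long but mechanical once \eqref{ccF}--\eqref{eq:aux2} are available, and the only other point needing a word of justification is the separate vanishing of the two sides of \eqref{eq:Rcond2a}.
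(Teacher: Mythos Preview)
Your proposal is correct and follows exactly the same approach as the paper: the proposition is stated there as a summary of Section~\ref{RFRF}, and your outline reproduces faithfully the three-step structure of Subsections~\ref{sec:defcondR}--\ref{condG} (expand in $v$, split \eqref{eq:Rcond2a} into \eqref{eq:Rcond2al} and \eqref{eq:Rcond2ar}, solve the first via the invariant ansatz and the recurrence \eqref{Rrec1}, reduce the second to the super-anticommutator constraint via the supercyclic symmetry of $X^{(cb)(ad)}+(-1)^{([b]+[c])([a]+[d])}X^{(ad)(cb)}$). Your identification of the main technical difficulty --- the sign bookkeeping in Subsection~\ref{condG} --- matches where the paper spends its effort.
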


\section{The fusion of super-spinor $L$ operators}
\setcounter{equation}{0}

 It was shown in \cite{Resh} (Theorem 3)
 that the $so$-type $L$-operator
 (i.e. the spinor-vector\footnote{Here "spinor-vector"
 (or "spinor-spinor") means that the
$R$-matrix acts in the space of the tensor product
$V_s \otimes V_f$
(or $V_s \otimes V_s$), where $V_s$ and $V_f$ are spinor
 and vector (fundamental) representation spaces.} $so$-type $R$-matrix) can be obtained
 by fusion of two spinor-spinor $so$-type $R$-matrices $\check{\mathcal{R}}$. The main result of Sect. {\bf \ref{RFRF}} about the generalization of
 the matrix $\check{\mathcal{R}}$ to the case
 of the Lie superalgebra $osp$ is given in Proposition {\bf \ref{propRosc}}
 (see eq. (\ref{Rrec7})
at the end of Subsection {\bf \ref{consR}}
for the explanation of reducing to the $so$-case).
 The vector-vector $so$-type
 $R$-matrix (analog of the $osp$-type
 $R$-matrix (\ref{eq:Rmatrix})) was
 obtained in \cite{Resh} (Theorem 5) by the fusion of
two spinor-vector $so$-type $R$-matrices. The
standard fusion procedure \cite{KRS}
applied  in \cite{Resh}
requires the use of the projector
operators $V_s \otimes V_s \to V_f$ which are not simple
 objects, so that the fusion procedure
 of \cite{Resh} is  technically non-trivial.
 In  \cite{IsKarKir15}, for the cases
of the $so$ and $sp$ Lie algebras, we found that the vector-vector
$R$-matrix can be constructed as the fusion of
two $so$- and $sp$- type $L$-operators
using instead of those projectors the
  intertwining operators
 $V_s \otimes V_s \to V_f$ which are realized
respectively in terms of gamma-matrices and
generators of the oscillator algebra.
In this Section we generalize the fusion procedure of the paper \cite{IsKarKir15}
to the case of the $osp$ Lie superalgebra.

The $RLL$ relation (\ref{eq:RLL}) has the following component form,
\begin{equation}\label{rl2}
 \begin{array}{c}
(-1)^{([b_2]+[c_2])[c_1]} \;
R^{a_1a_2}_{b_1b_2}(u-v) \; L^{b_1}_{c_1}(u-\lambda)L^{b_2}_{c_2}(v-\lambda) = \\ [0.3cm]
= (-1)^{([a_2]+[b_2])[a_1]} \; L^{a_1}_{b_1}(v-\lambda)L^{a_2}_{b_2}(u-\lambda) \;
R^{b_1b_2}_{c_1c_2}(u-v) \; .
\end{array}
\end{equation}
It is convenient to introduce the supertensor product $\otimes_s$
modifying the graded tensor products as
\begin{equation}\label{tpro1}
(A\otimes_sB)^{a_1a_2}_{b_1b_2}=(-)^{([a_2]+[b_2])[b_1]}
A^{a_1}_{b_1}B^{a_2}_{b_2}.
 \end{equation}
It has the important property of associativity,
\begin{equation}\label{tpro2}
\quad (A\otimes_sB)(C\otimes_sD)=AC\otimes_sBD \; .
 \end{equation}
This is checked by the following calculation
$$
(A\otimes_sB)^{a_1a_2}_{b_1b_2}(C\otimes_sD)^{b_1b_2}_{c_1c_2}=
(-)^{([a_2]+[b_2])[b_1]}A^{a_1}_{b_1}B^{a_2}_{b_2}(-)^{([b_2]+[c_2])[c_1]}
C^{b_1}_{c_1}D^{b_2}_{c_2}=
$$
$$
=(-)^{([a_2]+[c_2])[c_1]}A^{a_1}_{b_1}C^{b_1}_{c_1}B^{a_2}_{b_2}
D^{b_2}_{c_2}=(AC\otimes_sBD)^{a_1a_2}_{c_1c_2}.
$$

Using the supertensor product one can represent the graded RLL relation
(\ref{rl2}) in the form\footnote{One can define another
supertensor product (cf. (\ref{tpro1})) $(A\otimes_s' B)^{a_1a_2}_{b_1b_2}=(-1)^{([a_1]+[b_1])[a_2]}
A^{a_1}_{b_1}B^{a_2}_{b_2}$ which respects the property (\ref{tpro2})
as well. For this supertensor product the $RLL$-relation (\ref{rl1})
is equivalent to (\ref{RLL2}).}:
\begin{equation}\label{rl1}
R(u-v)\Bigl(L(u-\lambda)\otimes_sL(v-\lambda)\Bigr)=
\Bigl(L(v-\lambda)\otimes_sL(u-\lambda)\Bigr)R(u-v) \; .
\end{equation}

Consider
two different $L$-operators $L(v-\mu)$ and $L'(u-\lambda)$
 which commute up to the
 standard sign factor according to grading:
$$
L^{a_2}_{b_2}(u-\lambda)L'^{b_1}_{c_1}(u-\mu)=
(-)^{(b_1+c_1)(a_2+b_2)}L'^{b_1}_{c_1}(u-\mu)
L^{a_2}_{b_2}(u-\lambda).
$$
This means that for the supertensor
 products (\ref{tpro1}) we have
 \begin{equation}\label{rl5}
 (L_1 \otimes_s L_2)(L'_1 \otimes_s L'_2) =
 (L_1 \, L'_1 \; \otimes_s \; L_2 \, L'_2) \; .
 \end{equation}
We assume the $RLL$ relation (\ref{rl2}), (\ref{rl1}) to hold for $L$ replaced by $L'$.
The property (\ref{rl5}) allows to apply the "train argument" \cite{LDF}
in the fusion procedure as in the non-supersymmetric case without extra
signs responsible for the grading.
Thus the $RLL$ relation (\ref{rl2}), (\ref{rl1}) holds for the matrix product
$$
T(u) = L(u+\lambda) L'(u+\mu),
$$
where $\lambda,\mu$ are any shifts of the spectral parameter. In components
this matrix product reads as
\begin{equation}
 \label{rl4}
(T^b_{\;\; d}(u))^{\alpha_1 \alpha_2}_{\;\;\; \beta_1 \beta_2} =
(L^b_{\;\; c}(u+\lambda))^{\alpha_1}_{\;\; \beta_1}
 (L^{\prime \, c}_{\;\;\; d}
 (u+\mu))^{\alpha_2}_{\;\;\beta_2}  \; ,
\end{equation}
where indices $a,b,\dots$ label the coordinates
in the vector (fundamental)
representation space $V_f$, while indices $\alpha,\beta,\dots$ are formal
 indices of the coordinates of the representation space $V_s$
in which the super-oscillator algebra ${\cal A}$
acts. Now the formal matrix
 $(c^a)^{\alpha}_{\;\;\beta}$  of the super-oscillator generators $c^a$
 plays the role of the intertwiner: $\overline{V}_s \otimes V_s \to V_f$,
 where $\overline{V}_s$ denotes the space which is dual to $V_s$.
 The spaces $\overline{V}_s$ and $V_s$ are identified with
 the help of the metric $D_{\alpha \beta}$ and inverse
 metric $D^{\alpha \beta}$ which can be used for
 lowering and rising indices $\alpha,\beta,\dots$.
  Now we define the operator $L^{\prime}(u)$ in (\ref{rl4}) as
 following
 \begin{equation}
 \label{Lprime}
 (L^{\prime \, c}_{\;\;\; d}
 (u))^{\alpha}_{\;\;\beta} =
 u \, \delta^{c}_{d} \delta^{\alpha}_{\beta}
 + (F^{\prime \, c}_{\;\;\; d})^{\alpha}_{\;\;\beta}
 = u \, \delta^{c}_{d} \delta^{\alpha}_{\beta}
 - D^{\alpha \gamma}\,
 (F^{c}_{\;\;\; d})^{\gamma'}_{\;\;\gamma}
 \, D_{\gamma' \beta} \; ,
 \end{equation}
 where $(F^{\prime \, c}_{\;\;\; d})^{\alpha}_{\;\;\beta}
 = - D^{\alpha \gamma}\, (F^{c}_{\;\;\; d})^{\alpha}_{\;\;\gamma}
 \, D_{\gamma' \beta}$ are generators of the $osp$ Lie
 superalgebra. Indeed
  one can check directly that they satisfy the
 graded commutation relations (\ref{osp2}),
 (\ref{t1}). The elements $F'$
 define a representation of $osp$ which is
  contra-gradient to the representation given by
 elements (\ref{defF}).
Further,  the  generators
 $(L^{(1)})^{c}_{\;\; d} = F^{c}_{\;\; d}$
 satisfy the conditions
 of Proposition {\bf \ref{prop22}} by construction. This
implies  that the generators
 $F^{\prime \, c}_{\;\; d}$ obey the conditions
 of Proposition {\bf \ref{prop22}} as well.
 Thus, both operators
 $L'$ and $T$ given in (\ref{rl4}) and (\ref{Lprime})
 satisfy the $RLL$ relations (\ref{rl2}).

 The projection
 of the elements $T^b_{\;\; d}(u)$  which are
 operators in the space $V_s \otimes V_s$
 to the operators in the space $V_f$ gives
 us the desirable fusion of two $L$ operators
 to the vector-vector $R$-matrix.
 This projection can be done by the invariant contraction
 of the matrices (\ref{rl4}) with two intertwiners
 $(c_{d_2})^{\beta_1\beta_2} =
 (c_{d_2})^{\beta_1}_{\;\; \beta'} D^{\beta_2 \beta'}$ and
 $(c^{b_2})_{\alpha_2 \alpha_1} =
  (c^{b_2})^{\alpha'}_{\;\; \alpha_1}  D_{\alpha' \alpha_2}$:
 \begin{equation}
 \label{rl6}
 \begin{array}{c}
 (T^{b_1}_{\;\; d_1}(u))^{\alpha_1 \alpha_2}_{\;\;\; \beta_1 \beta_2} \;
 (c_{d_2})^{\beta_1\beta_2}  \; (c^{b_2})_{\alpha_2 \alpha_1} = \\ [0.2cm] =(-1)^{(c_1+d_1)d_2}{\textrm{Tr}}\Bigl(L^{b_1}{}_
{c_1}(u+\lambda)c_{d_2}\tilde L^{c_1}{}_{d_1}(u+\mu)c^{b_2}\Bigr)
\equiv{\mathbb{T}}^{b_1b_2}_{\;\; d_1d_2}(u)  \; ,
\end{array}
 \end{equation}
 where
 $$
(\tilde L^{c_1}_{\;\; d_1}(u+\mu))^{\beta'}_{\;\;\alpha'} :=
 D_{\alpha'\alpha_2}  \; (L^{\prime \, c_1}_{\;\;\; d_1}(u+\mu))^{\alpha_2}_{\;\;\;\beta_2}
\; D^{\beta_2\beta'} =
(u + \mu) \delta^{c_1}_{d_1} \delta^{\beta'}_{\alpha'}
-  (F^{c_1}_{\;\; d_1})^{\beta'}_{\;\;\alpha'} \; .
$$
 We show now that the fusion expression(\ref{rl6}) is related to the
fundamental $R$ matrix
up to  a multiplication by certain sign factors which will be fixed at the end
 of this Section.
Traces ${\textrm{Tr}}$ of products
 of super-oscillators with definite grading
are fixed by the symmetry arguments. In the cases of $so$, $(\epsilon = +1)$
and $sp$, $(\epsilon = -1)$ we had in \cite{IsKarKir15}
$$
{\textrm{Tr}}(c^ac^b)=\varepsilon^{ab}\, {\textrm{Tr}}\mathbf{1} \; , \qquad
{\textrm{Tr}}(c^ac^bc^cc^d)= \frac{1}{2}
(\varepsilon^{ab}\varepsilon^{cd}-\epsilon \varepsilon^{ac}\varepsilon^{bd}+
\varepsilon^{ad}\varepsilon^{bc})\, {\textrm{Tr}}\mathbf{1},
$$
where ${\textrm{Tr}}\mathbf{1}$ is a normalization constant
which is not important here.
In the supersymmetric case of the $osp$ algebra this is modified as follows:
\begin{equation}
\label{rl7}
{\textrm{Tr}}(c^ac^b)=\varepsilon^{ab} \,
{\textrm{Tr}}\mathbf{1} \; ,\;\;\; {\textrm{Tr}}(c^ac^bc^cc^d)=
\frac{1}{2} (\varepsilon^{ab}\varepsilon^{cd}-\epsilon(-1)^{ab}\varepsilon^{ac}
\varepsilon^{bd}+
 \varepsilon^{ad}\varepsilon^{bc})
\, {\textrm{Tr}}\mathbf{1}.
\end{equation}
 To simplify formulas here and below in this Section  we write
 the gradings $[a],[b],\dots$ in sign factors as $a,b,\dots$.
Now we calculate the projection (\ref{rl6}):
 \begin{equation}
 \label{first1}
{\mathbb{T}}^{b_1b_2}_{d_1d_2}(u) =(-1)^{(c_1+d_1)d_2}{\textrm{Tr}}\Bigl(L^{b_1}{}_
{c_1}(u+\lambda)c_{d_2}\tilde L^{c_1}{}_{d_1}(u+\mu)c^{b_2}\Bigr)=
 \end{equation}
$$
=(-1)^{(c_1+d_1)d_2}{\textrm{Tr}}
\Bigl([(u+\lambda-\frac12)\delta^{b_1}_{c_1} + \epsilon(-1)^{c_1}
c^{b_1}c_{c_1}]c_{d_2}[(u+\mu+\frac12)\delta^{c_1}_{d_1} -
\epsilon(-1)^{d_1}c^{c_1}c_{d_1}]c^{b_2}\Bigr)=
$$
 \begin{equation}
 \label{first2}
\begin{array}{c}
= (-1)^{(c_1+d_1)d_2} (u+\lambda-\frac12)(u+\mu+\frac12) {\textrm{Tr}}
\Bigl( \delta^{b_1}_{c_1} c_{d_2} \delta^{c_1}_{d_1} c^{b_2}\Bigr) - \\
- (-1)^{(c_1+d_1)d_2 +d_1} \epsilon  (u+\lambda-\frac12)  {\textrm{Tr}}
\Bigl( \delta^{b_1}_{c_1} c_{d_2} c^{c_1}c_{d_1} c^{b_2}\Bigr) + \\
+ (-1)^{(c_1+d_1)d_2 + c_1} \epsilon (u+\mu+\frac12) {\textrm{Tr}}
\Bigl( c^{b_1}c_{c_1} c_{d_2} \delta^{c_1}_{d_1} c^{b_2}\Bigr) + \\
 - (-1)^{(c_1+d_1)d_2 + c_1 + d_1}{\textrm{Tr}}
 \Bigl(c^{b_1}c_{c_1} c_{d_2} c^{c_1}c_{d_1} c^{b_2}\Bigr) \; .
\end{array}
\end{equation}
In the last line we commute $c_{d_2}$ with $c^{c_1}$
and use the identity (\ref{Contract}) which leads to
 $$
 \begin{array}{c}
(-1)^{c_1d_2 + c_1 + d_1}{\textrm{Tr}}
\Bigl(c^{b_1}c_{c_1} c_{d_2} c^{c_1}c_{d_1} c^{b_2}\Bigr)
= (-1)^{d_1}\Bigl(1 -  \frac{\omega}{2} \Bigr)
 {\textrm{Tr}} \Bigl(c^{b_1} c_{d_2} c_{d_1} c^{b_2}\Bigr) \; .
 \end{array}
 $$
Then applying formulas for traces (\ref{rl7}) we write (\ref{first2}) as
\begin{equation}
 \label{first3}
\begin{array}{c}
= (u+\lambda-\frac12)(u+\mu+\frac12) \;
\delta^{b_1}_{d_1}  \delta^{b_2}_{d_2} \; {\textrm{Tr}}({\bf 1}) - \\ [0.3cm]
- \frac{1}{2} (u+\lambda-\frac12)  \Bigl( \delta^{b_1}_{d_1}  \delta^{b_2}_{d_2}
+ \epsilon (-1)^{d_1 + d_2 +d_1d_2}
\delta^{b_1}_{d_2} \delta^{b_2}_{d_1}  -
\varepsilon_{d_1d_2} \varepsilon^{b_1 b_2}\Bigr) \; {\textrm{Tr}}({\bf 1}) + \\ [0.3cm]
+ \frac{1}{2} (u+\mu+\frac12)  \Bigl( \delta^{b_1}_{d_1}  \delta^{b_2}_{d_2}
- \epsilon (-1)^{d_1 + d_2 +d_1d_2}
\delta^{b_1}_{d_2} \delta^{b_2}_{d_1}  +
\varepsilon_{d_1d_2} \varepsilon^{b_1 b_2}\Bigr) \; {\textrm{Tr}}({\bf 1}) + \\ [0.3cm]
 + \frac{1}{2} (1- \frac{\omega}{2}) \Bigl( \delta^{b_1}_{d_1}  \delta^{b_2}_{d_2}
- \epsilon (-1)^{d_1 + d_2 +d_1d_2}
\delta^{b_1}_{d_2} \delta^{b_2}_{d_1}  -
\varepsilon_{d_1d_2} \varepsilon^{b_1 b_2}\Bigr) \; {\textrm{Tr}}({\bf 1}) =
\end{array}
 \end{equation}
\begin{equation}
 \label{first3a}
\begin{array}{c}
= \Bigl( (u+\lambda)(u+\mu) + \frac{3}{4} -  \frac{\omega}{4} \Bigr)
 \;\delta^{b_1}_{d_1}\;\delta^{b_2}_{d_2} \; {\textrm{Tr}}({\bf 1}) - \\ [0.3cm]
- \; \epsilon \; \Bigl(u +\frac12(\lambda +\mu + 1 - \frac{\omega}{2}) \Bigr) \;
 (-1)^{d_1+ d_2 + d_1d_2} \delta^{b_1}_{d_2} \delta^{b_2}_{d_1}
 \; {\textrm{Tr}}({\bf 1}) + \\ [0.3cm]
+ \Bigl(u +\frac12(\lambda +\mu - 1 + \frac{\omega}{2}) \Bigr) \;
 \varepsilon^{b_1b_2}\varepsilon_{d_1d_2} \; {\textrm{Tr}}\mathbf{1} \; .
\end{array}
 \end{equation}
Let arbitrary parameters $\lambda$ and $\mu$ be expressed
via one parameter $\kappa$ as following
$$
\mu=\kappa -\frac12 \; ,\qquad \lambda=\kappa+ \frac{3-\omega}2 \; .
$$
For this choice of the parameters we finally obtain:
$$
{\mathbb{T}}^{b_1b_2}_{d_1d_2}(u') =
\Bigl( u'(u'+\beta)\, \delta^{b_1}_{d_1}\;\delta^{b_2}_{d_2}
-(u'+\beta) (-1)^{d_1+ d_2 + d_1d_2} \delta^{b_1}_{d_2} \delta^{b_2}_{d_1}
  + u' \; \varepsilon^{b_1b_2}\varepsilon_{d_1d_2}  \Bigr)
  \; {\textrm{Tr}}\mathbf{1} \; ,
$$
where $u'=u+\kappa$
and as usual we denote $\beta = 1 - \frac{\omega}{2}$.
So we see that the projection (\ref{rl6}) leads to the result that the fusion of two conjugated
super-oscillator $L$ operators decorated by sign
factors
coincides with the vector-vector
(fundamental) $osp$ $R$-matrix (\ref{eq:Rmatrix})
 and with the twisted
$R$-matrix  $(-)^{12}R(u)(-)^{12}$
 \begin{equation}
 \label{intw01}
 \begin{array}{c}
(-1)^{b_1 d_2}
 {\mathbb{T}}^{b_1b_2}_{d_1d_2}(u)(-1)^{d_1 b_2} =
 R^{b_1b_2}_{d_1d_2}(u) \; , \\ [0.3cm]
 (-1)^{b_1} {\mathbb{T}}^{b_1b_2}_{d_1d_2}(u)(-1)^{d_1}
 = (-1)^{b_1 d_1} R^{b_1b_2}_{d_1d_2}(u)
 (-1)^{b_2 d_2} \; .
 \end{array}
 \end{equation}
Recall that the twisted
$R$-matrix  $(-)^{12}R(u)(-)^{12}$
defines the vector (fundamental)
representation of $L$-operator (\ref{Lfunda}).

\vspace{0.2cm}

\noindent
{\bf Remark.} The Yang-Baxter equation (\ref{eq:YBEgraded})
for the vector-vector $R$-matrix follows
 from the $RLL$ relations (\ref{rl2}) for the matrices
 $(T^b_{\;\; d}(u))$ defined in (\ref{rl4}). Indeed, this statement is based on the remarkable identity
 \begin{equation}
 \label{intw02}
 (-1)^{[p]([b]+[c])} \;
 L^a_{\ c}\Bigl(v+\beta+\frac{1}{2}\Bigr)
 \; c_p \; \widetilde{L}^c_{\ b} \Bigl(v-\frac{1}{2}\Bigr) =
 (-1)^{[a][p]} \, R^{ac}_{bp}(v) \, (-1)^{[b][c]} \; c_c \; ,
 \end{equation}
 which generalizes the relations (\ref{intw01})
 and justifies the use of the super-oscillator
 generators as intertwiners.

\section{The quadratic evaluation of the Yangian $\mathcal{Y}(osp)$}
\setcounter{equation}{0}

\subsection{The conditions for the quadratic evaluation}

We derive the conditions on the terms of a quadratic evaluated L-operator following from the RLL-relation. We investigate a particular
solution for the second term.

As above we denote the operator $A$ acting non-trivially only in the first space
of a tensor  product of vector spaces as $A_1$ and the
operator $B$ acting non-trivially only in the second space of the tensor product
as $B_2$. We introduce a new symbol $\tilde{B}_2$ for the following object
\begin{equation}
\tilde{B}_2 \equiv (-)^{12} B_2 (-)^{12}
\end{equation}
 where $(-)^{12}$ is the sign operator \eqref{eq:GradOp}
introduced in the first section. It is a particular case of a sign operator
dressed operator \eqref{signop}.

Let us solve the graded RLL-relation \eqref{eq:RLL}
$$
R_{12}(u-v)L_1(u)(-)^{12} L_2(v) (-)^{12} = (-)^{12} L_2(v) (-)^{12} L_1(u)  R_{12}(u-v)
$$
for a quadratic evaluation of the L-operator
\begin{equation} \label{Lu2}
L(u) = u^2 \cdot\mathbf{1} + u\cdot G + N
\end{equation}
with the $osp$-invariant R-matrix \eqref{eq:Rmatrix}.
Expanding in $u,v$, we obtain the following set of six equations.
The rest of equations is linearly dependent on these six.
\begin{align}
A.\quad &[G_1,\tilde{G}_2] = [\epsilon \mathcal{P}-\mathcal{K}, \tilde{G}_2],& \notag\\
B.\quad &[G_1,\tilde{N}_2] = [\epsilon \mathcal{P}-\mathcal{K}, \tilde{N}_2],& \notag \\
C.\quad &[N_1,\tilde{G}_2] -2 [G_1,\tilde{N}_2] +\beta [G_1,\tilde{G}_2] = [\mathcal{K}-\epsilon \mathcal{P}, \tilde{N}_2] +\beta [\epsilon \mathcal{P},\tilde{G}_2] - \mathcal{K}G_1 \tilde{G}_2 + \tilde{G}_2 G_1 \mathcal{K},& \label{Equations} \\
D.\quad &[N_1,\tilde{N}_2] + \beta [G_1,\tilde{N}_2] = (\epsilon \mathcal{P}-\mathcal{K}) G_1\tilde{N}_2 -  \tilde{N}_2 G_1(\epsilon \mathcal{P}-\mathcal{K}) + \beta [\epsilon\mathcal{P},\tilde{N}_2],& \notag\\
E. \quad &-2[N_1,\tilde{N}_2] - \beta [G_1,\tilde{N}_2] + \beta [N_1,\tilde{G}_2]= \epsilon\mathcal{P}(N_1\tilde{G}_2-G_1\tilde{N}_2) - (\tilde{G}_2 N_1 - \tilde{N}_2 G_1)\epsilon \mathcal{P},& \notag \\
F.\quad &\beta [N_1,\tilde{N}_2] = \beta (\epsilon \mathcal{P} G_1 \tilde{N}_2 - \tilde{N}_2 G_1 \epsilon \mathcal{P}) -\mathcal{K} N_1 \tilde{N}_2 + \tilde{N}_2 N_1 \mathcal{K}.& \notag
\end{align}

Equation $A.$ says that at the first level appear the generators of $osp$. For details, please, see section \ref{sec:LinEval}. We discussed there that the generators can be arranged  supertraceless and that they satisfy
\begin{equation}
\mathcal{K}(G_1 + \tilde{G}_2) = 0 = (G_1 + \tilde{G}_2) \mathcal{K}.
\end{equation}
This can be arranged also here as a consequence of equation $A.$

Equation $B.$ is fulfilled if  the second level operator $N$ is a
linear combination  of powers of $G$,
\begin{equation}
N= \sum_{j=0}^\infty b_j G^j
\end{equation}
where $b_j$ commute with $G^k$ for all $j,k$.

Equation $C.$ can be rearranged in the following way:
\begin{equation}
[\mathcal{K}, N_1 +\tilde{N}_2] = \beta [\mathcal{K},\tilde{G}_2] + \mathcal{K} \tilde{G}_2^2 - \tilde{G}_2^2 \mathcal{K}.
\end{equation}
Multiplying this equation from both sides by the super-permutation $\mathcal{P}$
we obtain an equivalent equation
\begin{eqnarray}
[\mathcal{K}, N_1 +\tilde{N}_2] =
\beta [\mathcal{K},G_1] + \mathcal{K} G_1^2 - G_1^2 \mathcal{K}.
\end{eqnarray}
Adding these two equations, multiplying them by $\mathcal{K}$ and using the
identities \eqref{eq:Identities} we obtain
\begin{align}
N_1+\tilde{N}_2-\frac{1}{2}(\beta G_1 +G_1^2 +\beta \tilde{G}_2 + \tilde{G}_2^2) = \frac{\epsilon}{\omega} \left[ 2\mathrm{str}(N) - \mathrm{str}(G^2) \right]
\end{align}
which is obviously solved by
\begin{equation} \label{eq:N}
N=\frac{\beta}{2}G + \frac{1}{2}G^2.
\end{equation}
One can easily show that
\begin{equation}
\mathcal{K}N_1 = \mathcal{K} \tilde{N}_2, \qquad N_1 \mathcal{K} =  \tilde{N}_2 \mathcal{K}.
\end{equation}

\subsection{Generators of $osp$ in Jordan-Schwinger form}

We introduce a set of graded canonical pairs, variables $x_a$  and the corresponding
 partial derivatives $\partial_a$,
  such that (cf. (\ref{commzw}))
 \begin{equation}
 \label{commxd}
\begin{array}{c}
 x_a x_b = \epsilon (-1)^{[a][b]} x_b x_a, \qquad
\partial_a \partial_b = \epsilon (-1)^{[a][b]}
\partial_b \partial_a, \\
 \partial_a x_b - \epsilon (-1)^{[a][b]} x_b \partial_a = \varepsilon_{ab} \; .
\end{array}
\end{equation}
For $\epsilon = +1$ the variables
$\{ x_a, \partial_a \}$ can be identified with
the superspace coordinates and derivatives
  with the degree $[a]$ as introduced in sect. \ref{supgr}.
 According to (\ref{commxd}),
 for $\epsilon = -1$ and $[a]=0$ (or $[b]=0$), these variables
 anticommute while for $[a]=[b]=1$ they are commutative.
 For  $\epsilon = -1$  these variables
 behave like the graded differential forms.
  (\ref{commxd}) implies that
 the invariant  bilinear form $(x,y) = \varepsilon^{ba} x_a y_b$
 is always symmetric $(x,y) =(y,x)$.
 One can directly prove that the elements of this graded Heisenberg algebra
\begin{equation}
\label{Mab}
M_{ab} \equiv x_a \partial_b - \epsilon (-1)^{[a][b]+[a]+[b]} x_b \partial_a
\end{equation}
satisfy the supercommutation relations of $osp$ \eqref{eq:OSpComm}
and the symmetry condition \eqref{SymmCond1}.
Therefore, they compose a set of generators of $osp$.
One can easily check that the matrix of generators (\ref{Mab}) is supertraceless
\begin{equation}
\mathrm{str} (M)=(-1)^{[a]} M^a_{\ a} = \epsilon \varepsilon^{ba} M_{ba} = 0.
\end{equation}

\begin{proposition}\label{charM}
The matrix (\ref{Mab}) of the generators of the
Lie superalgebra $osp$ satisfies the following cubic
 characteristic condition
\begin{equation} \label{M3}
M^3= (\omega - 1)M^2 +\left( \frac{\epsilon}{2} \mathrm{str} (M^2) -\omega+2 \right) M - \frac{\epsilon}{2} \mathrm{str} (M^2) \varepsilon.
\end{equation}
\end{proposition}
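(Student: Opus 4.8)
The plan is to prove \eqref{M3} by a direct normal-ordered computation in the graded Heisenberg algebra \eqref{commxd}. First I would raise the first index, writing the generator matrix as $M^a_{\ c}=x^a\partial_c-\epsilon(-1)^{[a]+[c]+[a][c]}\,x_c\,\partial^a$ with $x^a=\varepsilon^{ab}x_b$, $\partial^a=\varepsilon^{ab}\partial_b$, and introduce the three $osp$-invariant scalars $\mathcal N:=x^a\partial_a$, $\mathcal X^2:=x^a x_a$, $\mathcal D^2:=\partial^a\partial_a$, which commute with every $M^b_{\ d}$. These play the role of the Howe-dual $sl_2$ acting opposite $osp$ on the oscillator module, and the cubic degree is forced by the fact that $M^a_{\ c}$ is, up to the super sign, an antisymmetrised bilinear in the two $osp$-covariant operator vectors $x^a$ and $\partial^a$.

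Concretely I would first compute the action of the matrix $M$ on these two vectors, $M^a_{\ b}x^b$ and $M^a_{\ b}\partial^b$, by moving all derivatives to the right using \eqref{commxd}; the outcome is that each is a combination of $x^a$ and $\partial^a$ with coefficients built from $\mathcal N,\mathcal X^2,\mathcal D^2$ (in the purely bosonic $so$ sub-case one gets $M^a_{\ b}x^b=x^a(\mathcal N+\omega-1)-\mathcal X^2\partial^a$ and $M^a_{\ b}\partial^b=x^a\mathcal D^2-\mathcal N\partial^a$). Substituting $M^a_{\ c}=x^a\partial_c-\epsilon(-1)^{\cdots}x_c\partial^a$ into the matrix products $(M^2)^a_{\ c}$ and $(M^3)^a_{\ c}$ and using these relations repeatedly keeps everything inside the finite set of operator matrices $\{x^a\partial_c,\;x^ax_c,\;\partial^a\partial_c,\;x_c\partial^a,\;\delta^a_c\}$ with coefficients in the subalgebra generated by $\mathcal N,\mathcal X^2,\mathcal D^2$; collecting terms then yields \eqref{M3}. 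The coefficient of $M^2$ comes out to be the supertrace of the effective $2\times 2$ action on $\{x^a,\partial^a\}$, which after the graded bookkeeping equals $\omega-1$.

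To pin down the remaining two coefficients I would compute $\mathrm{str}(M^2)=(-1)^{[a]}(M^2)^a_{\ a}$ from the same expansion; it evaluates to a quadratic in $\mathcal N$ together with a $\mathcal X^2\mathcal D^2$ term (up to $\epsilon$-signs), and comparison shows the linear and constant coefficients in \eqref{M3} are $\frac{\epsilon}{2}\mathrm{str}(M^2)-\omega+2$ and $-\frac{\epsilon}{2}\mathrm{str}(M^2)$ (the symbol $\varepsilon$ in the last term of \eqref{M3} being the identity matrix $\delta^a_c$). An equivalent and perhaps shorter route, matching the remark preceding the proposition, is to first check directly from \eqref{commxd} that the Jordan--Schwinger generators obey the symmetrised quadratic constraint $\{M_{(bc},M_{d)a}\}_\mp=0$, and then deduce \eqref{M3} from this together with the $osp$ relations \eqref{osp2}, the symmetry \eqref{SymmCond1} and tracelessness. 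In either approach the main obstacle is not conceptual but bookkeeping: keeping every graded sign factor consistent, compounded by the fact that $\mathcal N,\mathcal X^2,\mathcal D^2$ do not commute among themselves, so that all reorderings and the effective $2\times 2$ Cayley--Hamilton step must be carried out without assuming commutativity — it is precisely these commutator corrections that produce the $\propto\mathbf 1$ term in \eqref{M3}.
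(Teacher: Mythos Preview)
Your proposal is correct and is essentially the same approach the paper takes: a direct normal-ordered computation in the graded Heisenberg algebra using the three $osp$-invariant scalars (the paper's $H=x_b\partial^b$, $x^2=x^bx_b$, $\partial^2=\partial^b\partial_b$ are your $\mathcal N,\mathcal X^2,\mathcal D^2$), expanding $(M^2)_{ad}$ in the finite span $\{x_a\partial_d,\,M_{ad},\,x_ax_d\partial^2,\,x^2\partial_a\partial_d,\,\varepsilon_{ad}\}$, computing $\mathrm{str}(M^2)$, and then iterating once more. Your organization via first computing $M^a_{\ b}x^b$ and $M^a_{\ b}\partial^b$ and reading off a $2\times2$ Cayley--Hamilton is a mildly cleaner packaging of the same steps; the alternative route you mention through $\{M_{(bc},M_{d)a}\}_\mp=0$ is in fact exactly what the paper does in the proposition immediately following this one.
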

\noindent
{\bf Proof}.
It is useful to introduce the following operator
\begin{equation}
H \equiv \varepsilon^{bc}x_b \partial_c = x_b \partial^b
\end{equation}
with the properties
\begin{equation}
H x_a = x_a (H+1),\qquad H\partial_a= \partial_a (H-1),\qquad [H, x_b\partial_c] =0.
\end{equation}
The square of $M$ is then expressed as
\begin{equation}
\begin{array}{c}
(M^{2})_{ad} = \varepsilon^{bc} M_{ab} M_{cd}
= (2H+\omega -4) x_a \partial_d + \\ [0.2cm]
 + (1-H)M_{ad} -\epsilon (-1)^{[d]} x_ax_d \partial^2 - \epsilon (-1)^{[a]} x^2 \partial_a \partial_d + H \varepsilon_{ad},
\end{array}
 \end{equation}
where we remind $\omega \equiv \varepsilon^{ef}\varepsilon_{ef}$ and introduce concise notation $x^2=\varepsilon^{ba}x_ax_b=x^b x_b$ and $\partial^2= \partial^b\partial_b$.
The supertrace of $M^2$ is
\begin{align}
\mathrm{str}(M^2) = \epsilon \left[ (2H+2\omega -4)H -2x^2 \partial^2 \right].
\end{align}
It is useful to use also the following identities
\begin{equation}
\partial^2 x_b = x_b \partial^2 +2\epsilon (-1)^{[b]} \partial_b, \qquad \partial_b x^2 = x^2 \partial_b +2\epsilon(-1)^{[b]} x_b, \qquad \partial^2 M_{ac} = M_{ac} \partial^2.
\end{equation}
After a lengthy calculation we obtain (\ref{M3}).
\hfill \qed

The $osp$ representation generated by $M_{ab}$ satisfies the
condition (\ref{GG}) for the linear  L- operators intertwining the
super-oscillator with the Jordan-Schwinger type representation.
\begin{proposition}
The matrix (\ref{Mab}) of the generators of the
Lie superalgebra $osp$ satisfies the condition
$$\{M_{(bc},M_{d)a}\}_{\mp}=0 $$
and the L-operator
$$ L(u) = u I -\frac{1}{2} F^{ab}M_{ba},$$
with $F^{ab}$ (\ref{defF}) generating the super-oscillator representation
obeys the RLL-relation with the super-spinorial R-operator
(\ref{anzR} ) constructed in  section (\ref{RFRF}).
\end{proposition}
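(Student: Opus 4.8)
The plan is to establish the two assertions in order, the second being an immediate corollary of the first. Once we know that the Jordan--Schwinger generators $M_{ab}$ of \eqref{Mab} satisfy the constraint $\{M_{(bc},M_{d)a}\}_{\mp}=0$, the spinorial RLL-relation for $L(u)=uI-\frac{1}{2}F^{ab}M_{ba}$ follows directly by specializing $G_{ba}\mapsto M_{ba}$ in Proposition~\ref{propRosc}: the discussion preceding Proposition~\ref{charM} already records that the $M_{ab}$ obey the $osp$ commutation relations \eqref{eq:OSpComm} and the symmetry condition \eqref{SymmCond1} and are supertraceless, so all hypotheses of Proposition~\ref{propRosc} hold, and the super-spinorial $R$-operator of Proposition~\ref{propRosc} (eq.~\eqref{anzR}) performs the intertwining. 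Hence the substance is the verification of the constraint, which I would carry out by a direct computation inside the graded Heisenberg algebra \eqref{commxd}.

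Here is how I would organize that computation. First recall, from the Remark closing Section~\ref{condG}, that the symmetrizer over the three indices $(b,c,d)$ weights the transposition of two adjacent indices $i,j$ by $-\epsilon(-1)^{[i][j]+[i]+[j]}$, which is precisely the sign appearing in the graded antisymmetry $M_{ij}=-\epsilon(-1)^{[i][j]+[i]+[j]}M_{ji}$ of each factor. Consequently the $(b,c,d)$-symmetrization of a product such as $M_{bc}M_{da}$ does not simply vanish by antisymmetry; rather the six permutations collapse into a ``graded cyclic'' sum of three terms, $M_{(bc}M_{d)a}=\frac{1}{3}\bigl(M_{bc}M_{da}+(\mathrm{sign})\,M_{cd}M_{ba}+(\mathrm{sign})\,M_{db}M_{ca}\bigr)$, which is exactly the combination treated for the $so$ and $sp$ cases in \cite{IsKarKir15}.

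Next I would expand a single product $M_{bc}M_{da}$ with the help of \eqref{commxd}, moving all derivatives to the right. This produces (i) a ``symbol'' part that is second order in the $\partial$'s, of the schematic form $x_{[b}\partial_{c]}\,x_{[d}\partial_{a]}$ (up to the standard grading signs), and (ii) a ``contraction'' part that is first order in $\partial$ and is a sum of four monomials, each a product of one supermetric $\varepsilon$ with a first-order operator $x\partial$; no scalar (doubly contracted) piece occurs. For the symbol part, the four monomials coming from $M_{bc}M_{da}=(x_b\partial_c-\cdots)(x_d\partial_a-\cdots)$ group into two pairs whose contributions cancel against each other once the graded cyclic sum is formed, using only that the $x$'s (resp. the $\partial$'s) commute among themselves up to the standard sign; this is the graded analogue of the Pl\"ucker-type cancellation familiar from the $so$ case. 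The four monomials of the contraction part likewise organize into two pairs that cancel under the graded cyclic sum, exactly as in \cite{IsKarKir15}. The same mechanism applies to the reversed product $M_{da}M_{bc}$, so that both terms of the super-anticommutator $\{M_{bc},M_{da}\}_{\mp}$ are annihilated by the $(b,c,d)$-symmetrization, which gives $\{M_{(bc},M_{d)a}\}_{\mp}=0$.

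The step I expect to be the main obstacle is the sign bookkeeping: one has to check that the weights of the $(b,c,d)$-symmetrizer, as fixed by $\tilde\sigma_j=\hat\sigma_j+[a_j]+[a_{j+1}]$, interlock correctly with the grading factors carried by $M_{ab}$ and by the canonical relations \eqref{commxd}, so that the two pairwise cancellations, in the symbol part and in the contraction part, go through uniformly for both $\epsilon=+1$ and $\epsilon=-1$. It is worth remarking that one might instead try to read off the constraint from the cubic characteristic identity \eqref{M3} of Proposition~\ref{charM}; but in Section~8 the logical order is the opposite (the constraint is what forces the characteristic identity), so the direct verification sketched above is the cleaner route.
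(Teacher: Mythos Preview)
Your proposal is correct and follows essentially the same approach as the paper: the paper's proof simply states that the super-anticommutator condition ``can be checked by direct calculation using the expression for $M_{ab}$'' and then invokes Proposition~\ref{propRosc} for the RLL-relation, which is exactly your strategy. You have supplied considerably more detail about how that direct calculation would be organized (symbol versus contraction parts, graded-cyclic cancellation), but the logical structure is identical.
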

\noindent
{\bf Proof}.
The super-anticommutator condition can be checked by direct calculation
using the
expression for $M_{ab}$ (\ref{Mab}).
The super-spinorial RLL-relation is fulfilled by
 the proposition (\ref{propRosc}).
\hfill \qed

The above cubic characteristic condition of (\ref{M3})
follows from the condition
(\ref{GG}) written in terms of $M$. This is the consequence of the
following:

\begin{proposition}
If the generators $G^a_{\ b}$ of $osp$ obey the condition (\ref{GG})
\begin{equation} \label{SASS}
\{G_{(a_1a_2}, G_{c_1)c_2}\}_{\mp} = 0,
\end{equation}
then the matrix $G = ||G^a_{\ b}||$ obeys the cubic characteristic identity
(\ref{M3}) written in terms of $G$ as
\begin{equation}
G^3 = (\omega-1) G^2 +
\left( \frac{\epsilon}{2} \mathrm{str}(G^2) +2-\omega \right) G -\frac{\epsilon}{2} \mathrm{str}(G^2) \varepsilon.
\end{equation}
\end{proposition}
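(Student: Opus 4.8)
The plan is to derive the cubic relation purely by index manipulation, feeding the hypothesis \eqref{SASS} into the $osp$ commutation relations \eqref{osp2} and then contracting with the metric. First I would expand the graded symmetrization in \eqref{SASS}: since $\{G_{bc},G_{da}\}_\mp$ is already graded-antisymmetric in the pair $(b,c)$ and in the pair $(d,a)$ by \eqref{SymmCond1}, the six-term sum over $S_3$ acting on $b,c,d$ (with the symmetrizer $\tilde\sigma$, i.e. $\tilde\sigma_j=\hat\sigma_j+[a_j]+[a_{j+1}]$ on adjacent sites) collapses to a short combination of $\{G_{bc},G_{da}\}_\mp$ with its images under the two graded transpositions $b\leftrightarrow d$ and $c\leftrightarrow d$. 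Next, for each of these terms I would use $2\,G_{bc}G_{da}=\{G_{bc},G_{da}\}_\mp+[G_{bc},G_{da}]_\pm$ together with \eqref{osp2}, which expresses every supercommutator $[G_{bc},G_{da}]_\pm$ \emph{linearly} in $G$ (four terms, each a $\delta$ or an $\varepsilon$ times one $G$). Hence \eqref{SASS} turns into an identity of the schematic form ``(symmetrized quadratic expression in $G$) $=$ (symmetrized linear expression in $G$)'', valid with all four free indices $b,c,d,a$ before any contraction.

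Then I would contract indices with the metric. Contracting the pair $(c,d)$ by $\varepsilon^{cd}$ turns $G_{bc}G_{da}$ into $(G^2)_{ba}$ up to a sign, while on the linear side it produces $G_{ba}$, $\varepsilon_{ba}\,\str(G)$ (which vanishes), and purely numerical multiples involving $\omega=\varepsilon^{cd}\varepsilon_{cd}=\epsilon(N-M)$; the remaining independent contractions, combined with the already established quadratic consequences $\mathcal{K}(G_1+\tilde G_2)=0$ of \eqref{SymCond} and the identities \eqref{eq:Identities}, supply the auxiliary relations I need among $G$, $G^2$, $\varepsilon$ and $\str(G^2)$. To reach degree three I would first contract the spectator index $a$ in \eqref{SASS} against one more copy of $G$ --- that is, multiply by $G^a_{\ e}$ --- and only then perform the $\varepsilon^{cd}$ contraction; this converts the product of three $G$'s into $(G^3)_{be}$, demotes the former linear terms to quadratic ones which are eliminated by the auxiliary relations, and after collecting coefficients yields exactly
$$G^3=(\omega-1)G^2+\Bigl(\tfrac{\epsilon}{2}\str(G^2)+2-\omega\Bigr)G-\tfrac{\epsilon}{2}\str(G^2)\,\varepsilon .$$
As a consistency check, the numerical coefficients must agree with those obtained by the direct Heisenberg-algebra computation for $M_{ab}$ in Proposition \ref{charM}, which is automatic because $M_{ab}$ already satisfies the hypothesis \eqref{SASS}.

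The main obstacle will be the sign bookkeeping: keeping the graded transposition weights of $\tilde\sigma$, the $(-1)^{[\cdot][\cdot]}$ factors in the superanticommutator, and the four-term right-hand side of \eqref{osp2} straight through every metric contraction, and --- more substantively --- choosing \emph{which} pair of the symmetrized indices to contract and on \emph{which} side to attach the extra factor of $G$ so that the cubic term $(G^3)$ genuinely survives rather than degenerating into a lower power. Once that contraction is correctly identified, the rest is routine algebra, reducible with $\str(G)=0$ and \eqref{eq:Identities}.
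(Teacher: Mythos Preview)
Your plan is essentially the same as the paper's: expand the graded symmetrization in \eqref{SASS}, split each superanticommutator as $\{A,B\}_\mp=[A,B]_\pm+2(-1)^{[A][B]}BA$ and feed in \eqref{osp2} together with \eqref{SymmCond1} to obtain a four-index relation of the form ``three $GG$ terms $=$ three $\varepsilon G$ terms'' (the paper's \eqref{gggeg}), and then contract with one more copy of $G$ plus the metric to produce $(G^3)$ on one side and $(G^2),G,\varepsilon\,\str(G^2)$ on the other.

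The one place you diverge from the paper is the choice of contraction. The paper multiplies \eqref{gggeg} from the right by $(-1)^{([a_1]+[a_2])([c_1]+[c_2])}G^{a_2a_1}$, i.e.\ it contracts \emph{two of the symmetrized indices} with the extra $G$; this immediately turns the first $GG$ term into $G_{c_1c_2}\,\str(G^2)$ and leaves only one genuinely triple product, which after a single application of \eqref{osp2} becomes $(G^3)_{c_2c_1}+(2-\omega)(G^2)_{c_2c_1}$. Your scheme (contract the spectator with $G^{a}_{\ e}$, then trace two symmetrized indices with $\varepsilon^{cd}$) is a legitimate alternative and will also work, but it generates more intermediate triple products before everything collapses, so the paper's contraction is the cleaner bookkeeping choice. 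Either way you will also need the ``transposition rule'' $(G^2)_{c_1c_2}=\epsilon(-1)^{[c_1][c_2]+[c_1]+[c_2]}\bigl[(G^2)_{c_2c_1}+(2-\omega)G_{c_2c_1}\bigr]$, which follows from \eqref{SymCond} and \eqref{osp2}, to align the index order before reading off the coefficients.
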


\begin{proof}
The condition \eqref{SASS} can be rewritten as
\begin{align}
& [G_{a_1a_2},G_{c_1c_2}]_\pm +
2(-1)^{([a_1]+[a_2])([c_1]+[c_2])} G_{c_1c_2} G_{a_1 a_2} \notag\\
& + (-1)^{[c_1][a_2]+[c_1][a_1]+[a_1]+[a_2]} [G_{c_1a_1},G_{a_2c_2}]_\pm \notag\\
& + 2(-1)^{[c_1][a_2]+[c_1][a_1]+[a_1]+[a_2]+([a_1]+[a_2])([c_1]+[c_2])}
G_{a_2c_2} G_{c_1a_1} \notag\\
& + (-1)^{[a_1][a_2]+[a_1][c_1]+[a_2]+[c_1]} [G_{a_2c_1},G_{a_1c_2}]_\pm   \notag\\
& +2 (-1)^{[a_1][a_2]+[a_1][c_1]+[a_2]+[c_1]+([a_1]+[a_2])([c_1]+[c_2])} G_{a_1c_2}G_{a_2c_1} =0
\end{align}
and using the super-commutation relations \eqref{osp2} and the symmetry condition \eqref{SymmCond1}, one obtains
\begin{align}
& (-1)^{([a_1]+[a_2])([c_1]+[c_2])} G_{c_1c_2} G_{a_1 a_2} +
(-1)^{[c_1][a_1]+[c_1][c_2]+[a_1][c_2]+[a_1][a_2]+[a_1]+[a_2]} G_{a_2c_2} G_{c_1a_1}  \notag\\
& + (-1)^{[a_2][c_2]+[c_1][c_2]+[a_2]+[c_1]} G_{a_1c_2} G_{a_2c_1} = \epsilon (-1)^{[a_1][c_1]+[a_2][c_1]+[a_2][c_2]} \varepsilon_{a_1c_2} G_{c_1a_2}  \notag\\
& + \epsilon (-1)^{[c_1]+[a_2][c_1]+[a_2][c_2]} \varepsilon_{a_2c_2} G_{a_1c_1} + \epsilon (-1)^{[c_1][a_1]+[a_1][a_2]+[a_1][c_2]+[a_1]+[a_2]} \varepsilon_{c_1c_2} G_{a_2a_1}. \label{gggeg}
\end{align}
Further we use that $G$ is super-traceless $\mathrm{str}(G)=0$. Multiplying \eqref{gggeg} from the right by $(-1)^{([a_1]+[a_2])([c_1]+[c_2])} G^{a_2a_1}$ and summing over $a_1,a_2$, the left hand-side is equal to
\begin{equation}
\mathrm{Left}=\epsilon G_{c_1c_2} \mathrm{str}(G^2) +2(-1)^{[c_1][c_2]+[c_1]+[c_2]+[a_1][a_2]+[c_1]([a_1]+[a_2])} G_{c_2a_1} G_{a_2c_1}G^{a_1a_2}.
\end{equation}
Super-commuting $G_{a_2c_1}$ with $G^{a_1a_2}$, this can be  further rewritten to
\begin{equation}
\mathrm{Left} = (-1)^{[c_1][c_2]+[c_1]+[c_2]} \left\{ -G_{c_2c_1} \mathrm{str}(G^2) +2\epsilon  \left(  (G^{3})_{c_2c_1} + (2-\omega)(G^{2})_{c_2c_1} \right)  \right\}.
\end{equation}
 The right hand-side is after multiplication from the right by $(-1)^{([a_1]+[a_2])([c_1]+[c_2])} G^{a_2a_1}$ and summation over $a_1,a_2$ of the form
\begin{equation}
\mathrm{Right} = 2 (G^2)_{c_1c_2} - \epsilon\cdot \mathrm{str}(G^2) \varepsilon_{c_1c_2}.
\end{equation}
If we use the properties of the super-metric $\varepsilon_{c_1c_2}$ and the transposition rule for $G^2$
\begin{align}
(G^{2})_{c_1c_2} = \epsilon (-1)^{[c_1][c_2]+[c_1]+[c_2]} \left[ (G^2)_{c_2c_1} + (2-\omega) G_{c_2c_1} \right],
\end{align}
the right hand-side can be further rewritten to
\begin{equation}
\mathrm{Right} =  (-1)^{[c_1][c_2]+[c_1]+[c_2]} \left\{ 2\epsilon(G^2)_{c_2c_1} +2\epsilon(2-\omega) G_{c_2c_1}
-  \mathrm{str}(G^2) \varepsilon_{c_2c_1} \right\}.
\end{equation}
Comparing the left and right hand-side, we arrive at the statement of the proposition.
\end{proof}

The fusion of $ L(u) = u I - \frac{1}{2}F^{ab}M_{ba}$ and $ \tilde L(u) = u I + \frac{1}{2} (F^{t})^{ab}M_{ba}$
with respect to the super-spinor representations generated by $F, F^t$
results in an $L$ operator obeying the RLL relation with the vector (fundamental) $R$
matrix (\ref{eq:RLL}). It is qudratic in $u$ and equivalent to the form
\eqref{Lu2} with $N$ of the form (\ref{eq:N}),
$$ L(u) = u^2 \cdot\mathbf{1} + u\cdot M + N, \qquad N = \half (M^2 + \beta M),  $$
 shown above to obey the conditions
A--C. It obeys also the remaining conditions D--F.
The proof  can be done
by direct calculations using the relations (\ref{GG}, \ref{M3}).

\section{Discussion}

Yang-Baxter relations with orthosymplectic supersymmetry, in particular the
ones involving the fundamental $R$ matrix, can be written in a similar form
like the ones with orthogonal or symplectic symmetry. The formulation
presented in this paper provides a systematic treatment and displays
explicitly the features distinguishing the $osp$ case from the $so$ and $sp$
cases.

We have pointed out that the invariant tensors appearing in the fundamental
$R$ matrix represent the Brauer algebra.

$L$ operators can have a simple form in distinguished representations. We
have identified the superspinor representation resulting in an $L$ operator
linear in the spectral parameter being the generalization of the spinor
representation of the $so$ case and the metaplectic representation in the
$sp$ case.

The super spinorial $R$ operator 
 which intertwines super-spinor representations 
 (see Proposition {\bf \ref{propRosc}})
has been constructed by
the generalizing the methods developed
in \cite{CDI}, \cite{CDI2}, \cite{IsKarKir15} for the $so$ and $sp$ cases.

 The  superspinorial $RLL$ relation holds for  $L$ operators  linear
 in the
spectral parameter and acting in the spinor and the vector (fundamental)
representations. It also holds for generalized $L$ operators where the
vector (fundamental) representation is replaced by another one obeying a constraint
 (\ref{GG}) represented 
 in the form of a super anticommutator of the elements $G_{ab}$
 of the matrix of generators.
All these results were summarized in the Proposition {\bf \ref{propRosc}}.

We have investigated the case of the second order Yangian evaluation, in
particular the
solution for the $L$ operators with all terms expressed as function of the
Lie algebra generator matrix $G$. Its second non-trivial term is
proportional to the supertraceless part of $G^2$.  The Lie algebra
representation generated by the matrix elements of $G$ is constraint in such
a way that $G$ obeys a condition in terms of a cubic characteristic
polynomial. The latter condition is related to the super anticommutator
condition appearing in connection with the spinorial Yang Baxter relation.
The class of Lie algebra representations constructed by the Jordan-Schwinger
ansatz based on graded Heisenberg pairs obeys these constraints.

\vspace{0.4cm}

\noindent
{\bf Acknowledgment.} We thank S.Derkachov for valuable discussions.

The work of J.F. was supported by the Grant Agency of the Czech Technical University in Prague, grant No. SGS15/215/OHK4/3T/14
and by the Grant of the Plenipotentiary of the Czech Republic at JINR, Dubna.

The work of A.P.I. was supported by Russian Science Foundation
grant 14-11-00598 (Sections 1-5)
and by RFBR grants 16-01-00562-a, 15-52-05022 Arm-a (Sections 6-8).

The work of D.K. was partially supported by the Armenian
State Committee of Science grant SCS 15RF-039. It was done
within programs of the ICTP Network NET68 and of the
Regional Training Network on Theoretical Physics
sponsored by Volkswagenstiftung Contract nr. 86 260.

 Our collaboration was also supported by JINR (Dubna) via the
programs Heisenberg-Landau (J.F. and R.K.) and
Smorodinski-Ter-Antonyan (D.K.).

\newpage
\appendix

\section{The graded tensor product and  Yang-Baxter relations}
\label{sec:GTP}
\setcounter{equation}{0}

There appear different conventions in the literature regarding
the R-matrices and Yang-Baxter equations.
In this section we intend to relate the Yang-Baxter equation used,
e.g., in \cite{Ragoucy} to the Yang-Baster equation \eqref{eq:YBEgraded}.

Let $\cal V$ be a  vector superspace $\cal V$ with the basis $\{\ket{e_1},\dots,
\ket{e_{N+M}}\}$, where $\{\ket{e_1}\!,\dots ,\!\ket{e_{N}} \! \}$ are the basis vectors of the even part
of $\cal V$ and $\{\ket{e_{N+1}},\dots,\ket{e_{N+M}}\}$ are the basis vectors of the odd part
of $\cal V$. The basis of the dual superspace $\overline{\cal{ V}}$ is
$\{\langle e^1|,\dots, \langle e^{N+M}|\}$ with the even part $\{\langle e^1|,\dots ,\langle e^{N}|\}$ and the
odd part $\{\langle e^{N+1}|,\dots,\langle e^{N+M}|\}$. We demand that these two bases are dual in
the following sense:
\begin{equation}
\langle e^a|e_b\rangle=\delta^a_b, \qquad \forall\, a,b=1,\dots,N+M.
\end{equation}
Unlike the formulation used in the main part of this paper
 the gradation is now carried by the basis vectors, i.e., $\mathrm{grad}(\ket{e_a}) =\mathrm{grad}(\bra{e^a})=[a]$, whereas the coordinates are ordinary numbers from the field $\mathbb{F}$ over which the superspace $\cal V$ is constructed (compare with the approach introduced in this article, especially in section {\bf \ref{supgr}}).
The matrix units and the identity operator on $\cal V$ can be expressed as:
\begin{equation}
E_a^{\ b} =\ket{e_a} \! \langle e^b|, \qquad I=\sum_{a=1}^{N+M} \ket{e_a} \! \langle e^{a}|.
\end{equation}
The gradation of $E_a^{\ b}$ is $[a]+[b]$ and $I$ is the even operator.

The matrix elements of the operator $A:{\cal V} \rightarrow {\cal V}$ are
\begin{equation}
A^a_{\ b} = \langle e^a| A \ket{ e_b}
\end{equation}
and one can immediately check that
\begin{equation}
A = \sum_{a,b} \ket{e_a} \! A^a_{\ b} \langle e^b| = \sum_{a,b} A^a_{\ b} E_a^{\ b}.
\end{equation}

We introduce the graded tensor product of the superspaces ${\cal V}\otimes {\cal V}$
\eqref{gradten}.
The basis of ${\cal V}\otimes {\cal V}$ is
$\{\ket{ e_{b_1}}\otimes \ket{e_{b_2}}\}_{b_1,b_2=1}^{N+M}$.
Its dual basis in  $\overline{\cal V}\otimes \overline{\cal V}$
is $\{ (-1)^{[a_1][a_2]} \langle e^{a_1}| \otimes \langle e^{a_2}| \}_{a_1,a_2=1}^{N+M}$ as
can be easily seen,
\begin{equation}
 (-1)^{[a_1][a_2]} (\langle e^{a_1}| \otimes \langle e^{a_2}|) (\ket{e_{b_1}}\otimes \ket{e_{b_2}}) = (-1)^{[a_1][a_2]+[a_2][b_1]} \langle e^{a_1} \! \ket{e_{b_1}}\otimes \langle  e^{a_2}\!\ket{e_{b_2}} = \delta^{a_1}_{b_1} \delta^{a_2}_{b_2}.
\end{equation}
The operator $R$ acting in ${\cal V}\otimes {\cal V}$ has the components
w.r.t. the above basis of the form
\begin{equation}
R^{a_1 a_2}_{b_1 b_2} = (-1)^{[a_1][a_2]} (\langle e^{a_1}|\otimes \langle e^{a_2}|) R\, (\ket{e_{b_1}}\otimes \ket{e_{b_2}})
\end{equation}
and satisfies
\begin{equation}
R = \sum R^{a_1 a_2}_{b_1 b_2}  (\ket{ e_{a_1}} \otimes \ket{ e_{a_2}})(-1)^{[b_1][b_2]}(\langle e^{b_1}|\otimes \langle e^{b_2}|) = \sum  R^{a_1 a_2}_{b_1 b_2} (-1)^{[b_1][b_2]+[a_2][b_1]} E_{a_1}^{\ b_1}\otimes E_{a_2}^{\ b_2}.
\end{equation}

The graded permutation is defined as
\begin{equation}
P \ket{e_a} \otimes \ket{e_b} = (-1)^{[a][b]} \ket{e_b} \otimes \ket{e_a}.
\end{equation}
In view of the above considerations, it has the following components
\begin{equation}
P^{a_1 a_2}_{b_1 b_2} = (-1)^{[a_1][a_2]} \delta^{a_1}_{b_2} \delta^{a_2}_{b_1}
\end{equation}
as expected (compare with \eqref{PP12}). It can be expressed using the matrix units as
\begin{equation}
P = \sum_{a,b} (-1)^{[b]} E_{a}^{\ b}\otimes E_{b}^{\ a}.
\end{equation}
 The identity operator on ${\cal V}\otimes {\cal V}$ is
\begin{equation}
I = \sum_{a,b} E_{a}^{\ a}\otimes E_b^{\ b} = \sum_{a,b} (-1)^{[a][b]} (\ket{e_a}\otimes \ket{e_b})(\langle e^a| \otimes \langle e^b|).
\end{equation}

This formalism can be obviously extended to ${\cal V}^{\otimes n}$  for arbitrary $n$.
Due to the Yang-Baxter relation
we need to discuss the situation ${\cal V}^{\otimes 3}$.
Its basis is $\{\ket{e_{b_1}}\otimes \ket{e_{b_2}}\otimes  \ket{e_{b_3}}\}_{b_1\!,b_2\!,b_3=1}^{N+M}$
and the corresponding basis of the dual superspace $\overline{\cal V}^{\otimes 3}$ is
$$\left\{ (-1)^{[a_1][a_2]+[a_1][a_3]+[a_2][a_3]} \langle e^{a_1}|
\otimes \langle e^{a_2}|\otimes  \langle e^{a_3}|\right\}_{a_1,a_2,a_3=1}^{N+M}.$$
Let us remark that the identity operator in ${\cal V}^{\otimes 3}$
is
\begin{equation}\label{Iaux3}
I = \sum_{a,b,c} E_{a}^{a} \otimes E_b^{\ b} \otimes E_{c}^{\ c} = \sum_{a,b,c} (-1)^{[a][b]+[a][c]+[b][c]} (\ket{e_a}\otimes \ket{e_b}\otimes \ket{e_c})(\langle e^a|\otimes \langle e^b| \otimes \langle e^c|).
\end{equation}
It is useful to use the shorthand notation
\begin{equation}\label{eshort}
\langle e^{a_1a_2a_3}| = \langle e^{a_1}|\otimes \langle e^{a_2}|\otimes \langle e^{a_3}|, \qquad \ket{e_{b_1b_2b_3}} = \ket{e_{b_1}}\otimes \ket{e_{b_2}}\otimes  \ket{ e_{b_3}}.
\end{equation}

The YB equation appearing, e.g., in \cite{Ragoucy} is of the form
\begin{equation} \label{YBeRag}
R_{12}R_{13}R_{23} = R_{23}R_{13}R_{12}.
\end{equation}
We show here that if we write it in components,
we obtain our form of the Yang-Baxter relation \eqref{eq:YBEgraded}
provided that the R-matrix is even
(see the definition of the even R-matrix \eqref{Reven1}).
The left hand side of \eqref{YBeRag} has the component form
\begin{align}
&(R_{12}R_{13}R_{23})^{a_1a_2a_3}_{b_1b_2b_3} = (-1)^{[a_1][a_2]+[a_1][a_3]+[a_2][a_3]}  \langle e^{a_1a_2a_3}| R_{12}R_{13}R_{23} \ket{ e_{b_1b_2b_3}}
\intertext{and we embed the identity operator \eqref{Iaux3}
in ${\cal V}^{\otimes 3}$ between $R_{12}$ and $R_{13}$}
&=(-1)^{[a_1][a_2]+[a_1][a_3]+[a_2][a_3]+[c_1][c_2]+[c_1][c_3]+[c_2][c_3]}
\langle e^{a_1a_2a_3}| R_{12} \ket{e_{c_1c_2c_3}} \notag\\
& \hspace{10cm} \times \langle e^{c_1c_2c_3}| R_{13} R_{23}  \ket{e_{b_1b_2b_3}}
\notag \\
&= (-1)^{[a_1][a_3]+[a_2][a_3]+[c_1][c_2]}
R^{a_1a_2}_{c_1c_2} \langle e^{c_1c_2a_3}| R_{13}R_{23} \ket{e_{b_1b_2b_3}} \notag
\intertext{where we used the properties of the graded tensor product
\eqref{gradten}. We embed the identity operator
\eqref{Iaux3} between $R_{13}$ and $R_{23}$ and obtain }
&(R_{12}R_{13}R_{23})^{a_1a_2a_3}_{b_1b_2b_3} =(-1)^{[a_1][a_3]+[a_2][a_3]+[c_1][c_2]+[d_1][d_2]+[d_1][d_3]+[d_2][d_3]}
R^{a_1a_2}_{c_1c_2} \notag\\
& \hspace{7cm} \times \langle e^{c_1c_2a_3}|R_{13} \ket{e_{d_1d_2d_3}} \langle e^{d_1d_2d_3}| R_{23}
\ket{ e_{b_1b_2b_3}} \notag \\
&=(-1)^{[a_1][a_3]+[a_2][a_3]+[c_1][c_2]+[c_1][a_3]+[c_2][a_3]+[d_1][d_3]+[c_2][d_3]}
R^{a_1a_2}_{c_1c_2} R^{c_1a_3}_{d_1d_3}\langle e^{d_1c_2d_3}|R_{23} \ket{e_{b_1b_2b_3}}. \notag \\
\intertext{For the even R-matrix we obtain}
&(R_{12}R_{13}R_{23})^{a_1a_2a_3}_{b_1b_2b_3} = (-1)^{[c_1][c_2]+[d_1][d_3]+[c_2][d_3]} R^{a_1a_2}_{c_1c_2}
R^{c_1a_3}_{d_1d_3} \langle e^{d_1c_2d_3}|R_{23} \ket{e_{b_1b_2b_3}} \notag \\
&\hspace{3.1cm}= R^{a_1a_2}_{c_1c_2}\, (-1)^{[c_1][c_2]}\,  R^{c_1a_3}_{b_1d_3} \,
(-1)^{[b_1][c_2]}\, R^{c_2d_3}_{b_2b_3}
\end{align}
which is exactly the left hand side of \eqref{eq:YBEgraded}.
Similarly, the right hand side of \eqref{YBeRag} has the component form
\begin{equation}
(R_{23}R_{13}R_{12})^{a_1a_2a_3}_{b_1b_2b_3} =
R^{a_2a_3}_{c_2c_3}(-1)^{[a_1][c_2]} R^{a_1c_3}_{d_1b_3}(-1)^{[d_1][c_2]}
R^{d_1c_2}_{b_1b_2}
\end{equation}
which coincides with the right hand side of \eqref{eq:YBEgraded}.
Thus, the equivalence of \eqref{YBeRag} and \eqref{eq:YBEgraded} is established.
We recall that this equivalence holds due to  the R-matrix being even.

\section{Properties of operators ${\cal P}$, ${\cal K}$ \label{PKprop}}
\setcounter{equation}{0}

We use here the concise matrix notation introduced
 in sections {\bf \ref{supgr}, \ref{sec2}}
  (in bosonic case this notation was
 proposed in \cite{FRT}).
 Matrices (\ref{osp07}) satisfy identities
$$
 {\mathcal{P}}_{12} = {\mathcal{P}}_{21} \; , \;\;\;
 {\mathcal{K}}_{12} =
 (-)^{12}{\mathcal{K}}_{21}  (-)^{12} \; , \;\;\;
 (-)^{1} {\mathcal{K}}_{12}  =
 (-)^{2} {\mathcal{K}}_{12} \; , \;\;\;
 {\mathcal{K}}_{12} (-)^{1}  =
  {\mathcal{K}}_{12} (-)^{2} \; ,
$$
\begin{equation}
 \label{ident00}
 \begin{array}{c}
{\mathcal{P}}_{12}{\mathcal{P}}_{12}= {\bf 1} \; , \;\;\;
{\mathcal{K}}_{12}{\mathcal{K}}_{12}=
\omega {\mathcal{K}}_{12} \; , \;\;\;
{\mathcal{K}}_{12}{\mathcal{P}}_{12} =
{\mathcal{P}}_{12}{\mathcal{K}}_{12} = \epsilon {\mathcal{K}}_{12},
\end{array}
 \end{equation}
 where $\omega = \epsilon(N-M)$ and
 $(-)^{i} = (-1)^{[i]} \delta^{a_i}_{b_i}$ is the matrix of super-trace in the $i$-th super-space ${\cal V}_{(N|M)}$.
 Then we have
   \begin{equation}
 \label{ident16}
 \begin{array}{c}
  (-)^1 {\mathcal{P}}_{12} = {\mathcal{P}}_{12} (-)^2
   \; , \;\;\;
{\mathcal{P}}_{12}{\mathcal{P}}_{23}=
(-)^{12}(-)^{23} {\mathcal{P}}_{13} {\mathcal{P}}_{12}=
 {\mathcal{P}}_{23} {\mathcal{P}}_{13} (-)^{12}(-)^{23}\; ,
  \\ [0.2cm]
 {\mathcal{P}}_{12}{\mathcal{K}}_{13}=
(-)^{12} {\mathcal{K}}_{23}(-)^{12} {\mathcal{P}}_{12} \; , \;\;\; {\mathcal{P}}_{12}(-)^{12} {\mathcal{K}}_{13} (-)^{12}=
{\mathcal{K}}_{23} {\mathcal{P}}_{12} \; ,
\end{array}
 \end{equation}
 \begin{equation}
 \label{ident15}
 \begin{array}{c}
 \epsilon {\mathcal{K}}_{12}{\mathcal{P}}_{31}=
{\mathcal{K}}_{12}(-)^{12} {\mathcal{K}}_{32} (-)^{12} \; , \;\;\;
\epsilon {\mathcal{P}}_{31}{\mathcal{K}}_{12}=
 (-)^{12} {\mathcal{K}}_{32} (-)^{12} {\mathcal{K}}_{12}\; ,
 \\ [0.2cm]
{\mathcal{K}}_{12}{\mathcal{K}}_{31} = \epsilon
{\mathcal{K}}_{12}(-)^{12}{\mathcal{P}}_{32}(-)^{12} \; , \;\;\;
{\mathcal{K}}_{31}{\mathcal{K}}_{12} = \epsilon
 (-)^{12}{\mathcal{P}}_{32}(-)^{12}{\mathcal{K}}_{12}.
\end{array}
 \end{equation}
 Identities (\ref{ident16}) follow from the representation
 (\ref{PP12}): ${\mathcal{P}}_{12}= (-)^{12} P_{12}=
 P_{12}(-)^{12}$, where $P_{12}$ is the usual permutation operator. Identities (\ref{ident15}) follow
 from the definitions (\ref{osp07}),
 (\ref{PP12}), (\ref{KK12}) of the operators ${\mathcal{P}}$
 and ${\mathcal{K}}$. We prove only the last equality
 in (\ref{ident15}) since the other identities
 in (\ref{ident15}) can be proved in the same way.
 We denote incoming matrix indices by $a_1,a_2,a_3$ and outcoming indices by
$c_1,c_2,c_3$ while dummy indices are $b_i$ and $d_i$.
Then we have
 $$
 \begin{array}{c}
 ({\mathcal{K}}_{31}
 {\mathcal{K}}_{12})^{a_1 a_2 a_3}_{c_1 c_2 c_3} =
 \varepsilon^{a_3 a_1} \varepsilon_{c_3 b_1}
 \varepsilon^{b_1 a_2} \varepsilon_{c_1 c_2} =
 \varepsilon^{a_3 a_1} \delta_{c_3}^{a_2} \varepsilon_{c_1 c_2}
  = \delta_{c_3}^{a_2} \delta^{a_3}_{b_2}
  \epsilon (-)^{[a_1][b_2]} \varepsilon^{a_1b_2}
  \varepsilon_{c_1 c_2} = \\ [0.2cm]
  = \epsilon (-1)^{[a_2][a_3]}
  ({\mathcal{P}}_{23})^{a_2a_3}_{b_2c_3}
  (-1)^{[a_1][b_2]}
  ({\mathcal{K}}_{12})^{a_1b_2}_{\;\; c_1 c_2} =
  ((-)^{23} {\mathcal{P}}_{23} (-)^{12}
  {\mathcal{K}}_{12})^{a_1 a_2 a_3}_{\;\; c_1 c_2 c_3} \; ,
  \end{array}
 $$
 and in view of the relation $(-)^{23} {\mathcal{K}}_{31}=
 (-)^{12} {\mathcal{K}}_{31}$
 which follows from (\ref{SuperMet2})
 we obtain the last formula  in (\ref{ident15}).

By means of the relations (\ref{ident16}), (\ref{ident15})
 one can immediately check eqs. (\ref{osp08}),
 (\ref{osp09}) and also deduce
 \begin{equation}
 \label{ident01}
{\mathcal{P}}_{12}{\mathcal{P}}_{23}{\mathcal{P}}_{12}=
{\mathcal{P}}_{23}{\mathcal{P}}_{12}{\mathcal{P}}_{23}.
 \end{equation}
 \begin{equation}
 \label{ident02}
{\mathcal{K}}_{12}{\mathcal{K}}_{23}{\mathcal{K}}_{12}=
{\mathcal{K}}_{12}, \;\;\;\;
{\mathcal{K}}_{23}{\mathcal{K}}_{12}{\mathcal{K}}_{23}=
{\mathcal{K}}_{23},
\end{equation}
 \begin{equation}
 \label{ident05}
{\mathcal{P}}_{12}{\mathcal{K}}_{23}{\mathcal{K}}_{12}=
{\mathcal{P}}_{23}{\mathcal{K}}_{12} \; ,  \;\;\;
{\mathcal{K}}_{12}{\mathcal{K}}_{23}{\mathcal{P}}_{12}=
{\mathcal{K}}_{12}{\mathcal{P}}_{23},
\end{equation}
\begin{equation}
 \label{ident03}
{\mathcal{P}}_{23}{\mathcal{K}}_{12}{\mathcal{K}}_{23}=
{\mathcal{P}}_{12}{\mathcal{K}}_{23} \; , \;\;\;
{\mathcal{K}}_{23}{\mathcal{K}}_{12}{\mathcal{P}}_{23}=
{\mathcal{K}}_{23}  {\mathcal{P}}_{12} \; .
\end{equation}
Identity (\ref{ident01}) follows from
the relations in the first line of (\ref{ident16}).
We consider few examples in (\ref{ident02})-(\ref{ident03})
 in details.
We start to prove the first relation in (\ref{ident02}):
$$
({\mathcal{K}}_{12}{\mathcal{K}}_{23}
{\mathcal{K}}_{12})^{a_1a_2a_3}_{c_1c_2c_3} =
\varepsilon^{a_1a_2}\varepsilon_{b_1b_2}
\varepsilon^{b_2a_3}\varepsilon_{d_2c_3}
\varepsilon^{b_1d_2}\varepsilon_{c_1c_2} =
\varepsilon^{a_1a_2}\delta_{b_1}^{a_3}\delta_{c_3}^{b_1}
\varepsilon_{c_1c_2} =
{\cal K}^{a_1a_2}_{\;\; c_1c_2}\delta_{c_3}^{a_3}\; .
$$
The second relation in (\ref{ident02})
can be proved in the same way. Then
we prove the first
 equation in (\ref{ident05}). For the left hand side
  of (\ref{ident05}) one has:
$$
\begin{array}{c}
({\mathcal{P}}_{12}{\mathcal{K}}_{23}
{\mathcal{K}}_{12})^{a_1a_2a_3}_{\;\; c_1c_2c_3} =
(-1)^{[a_1][a_2]}\delta^{a_1}_{b_2}\delta^{a_2}_{b_1}
\varepsilon^{b_2a_3}\varepsilon_{d_2c_3}
\varepsilon^{b_1d_2} \varepsilon_{c_1c_2}=
(-1)^{[a_1][a_2]}\varepsilon^{a_1a_3}
\delta^{a_2}_{c_3} \varepsilon_{c_1c_2} = \\ [0.2cm]
 = \delta^{a_2}_{c_3}  \delta^{a_3}_{b_2}
 (-1)^{[a_1][c_3]} \varepsilon^{a_1b_2} \varepsilon_{c_1c_2}
 = \delta^{a_2}_{c_3}  \delta^{a_3}_{b_2}
 (-1)^{[b_2][c_3]} \varepsilon^{a_1b_2} \varepsilon_{c_1c_2} =
 ({\mathcal{P}}_{23}
 {\mathcal{K}}_{12})^{a_1a_2a_3}_{\;\; c_1c_2c_3} \; ,
\end{array}
$$
and similarly one deduces other relations
in (\ref{ident05}) and (\ref{ident03}).
From the identities (\ref{ident01}) -- (\ref{ident03})
 we also deduce the following relations
 \begin{equation}
 \label{ident12}
 {\mathcal{K}}_{12}{\mathcal{P}}_{23}{\mathcal{K}}_{12}=
\epsilon{\mathcal{K}}_{12}, \;\;\;
{\mathcal{K}}_{23}{\mathcal{P}}_{12}{\mathcal{K}}_{23}=
\epsilon{\mathcal{K}}_{23}.
\end{equation}
 \begin{equation}
 \label{ident11}
{\mathcal{P}}_{12}{\mathcal{K}}_{23}{\mathcal{P}}_{12}=
{\mathcal{P}}_{23}{\mathcal{K}}_{12}{\mathcal{P}}_{23}.
 \end{equation}
 \begin{equation}
 \label{ident04}
{\mathcal{P}}_{12}{\mathcal{P}}_{23}{\mathcal{K}}_{12}=
{\mathcal{K}}_{23}{\mathcal{P}}_{12}{\mathcal{P}}_{23},
\;\;\;\;
{\mathcal{K}}_{12}{\mathcal{P}}_{23}{\mathcal{P}}_{12}=
{\mathcal{P}}_{23}{\mathcal{P}}_{12}{\mathcal{K}}_{23}.
\end{equation}
Indeed, if we act from the left
on both sides of the first relation
in (\ref{ident05}) by ${\mathcal{K}}_{12}$
and use (\ref{ident00}), (\ref{ident02}) we obtain
the first relation in (\ref{ident12}). In the same way
one can deduce from the first relation in (\ref{ident03})
the second relation in (\ref{ident12}). Now we act
on both sides of (\ref{ident05}) by ${\mathcal{P}}_{23}$
 from the right and use the last equation in (\ref{ident03}).
  As a result we arrive at the identity (\ref{ident11}).
  Finally the relations (\ref{ident04})  trivially follow from eq. (\ref{ident11}).

At the end of this appendix we stress that
identities (\ref{ident00}),
(\ref{ident01}) -- (\ref{ident03}) are images
of the defining relations (\ref{defBrauer2})
for the Brauer algebra in the
representation (\ref{Brauer}). The $R$-matrix
(\ref{RBrauer01}) is the image of the element (\ref{RBrauer})
 and the Yang-Baxter equation (\ref{eq:YBEbraid})
 is the image of the identity (\ref{YBEbr}).
Thus,  it follows from proposition
 {\bf \ref{Prop11}} that
 the $R$-matrix (\ref{RBrauer01}) is a solution
 of the braided version of the Yang-Baxter equation (\ref{eq:YBEbraid}).

\section{Direct proof of proposition \ref{propFc}}
\label{AppInv}
\setcounter{equation}{0}

We shall use the advantage of the generating functions language developed in subsection
{\bf \ref{sec:auxvar}}.
and shall  work with two sets of auxiliary variables $\kappa^a,\kappa'^b$
with the corresponding
derivatives $\partial^a,\partial'^b$. Then
\begin{align}
& \left[\varepsilon_{a_1b_1}\dots \varepsilon_{a_k b_k} c_1^{(a_1}\cdots c_1^{ a_k)}
c_2^{(b_k} \cdots c_2^{b_1)} ,c_1^{(a}c_1^{b)}+c_2^{(a}c_2^{b)}\right] = & \notag \\
& =\varepsilon_{a_1b_1}\dots \varepsilon_{a_kb_k}\left\{ (-1)^{([a]+[b])
([b_1]+\cdots+[b_k])} \left[c_1^{(a_1}\cdots c_1^{a_k)}, c_1^{(a}c_1^{b)}\right]_\pm
c_2^{(b_k}\cdots c_2^{b_1)} +
 \right. & \notag \\
& \quad \left. + c_1^{(a_1}\cdots c_1^{a_k)}
\left[ c_2^{(b_k}\cdots c_2^{b_1)}, c_2^{(a}c_2^{b)} \right]_\pm \right\} =
\varepsilon_{a_1b_1}\dots \varepsilon_{a_kb_k} \times \\
& \quad \times \Big\{ \underline{ (-1)^{([a]+[b])([b_1]+\cdots+[b_k])}
\partial^{a_1}\cdots\partial^{a_k}} \left( \underline{\epsilon (-1)^{[a]}
\kappa^a\partial^b} -(-1)^{[a][b]+[b]}\kappa^b\partial^a\right)
\partial'^{b_k}\cdots\partial'^{b_1} & \notag\\
& \quad+ \underline{ \partial^{a_1}\cdots\partial^{a_k}
\partial'^{b_k}\cdots\partial'^{b_1}} \left( \epsilon (-1)^{[a]}
\kappa'^a\partial'^b -\underline{ (-1)^{[a][b]+[b]}\kappa'^b\partial'^a} \right) \Big\}
e^{(\kappa\cdot c_1)} e^{(\kappa'\cdot c_2)} \Big|_{\kappa,\kappa'=0}, & \notag
\end{align}
where we used the supercommutation relations \eqref{eq:CommBasis}.
We  need two identities:
\begin{align}
 \partial^{a_1}\cdots\partial^{a_k} \kappa^a &=
\sum_{j=1}^k  (-\epsilon)^{k-j} (-1)^{\sum_{i=j+1}^k [a_i] [a]}  \
\varepsilon^{a a_i} \  \partial^{a_1}\cdots\partial^{a_{j-1}}\partial^{a_{j+1}}
\cdots\partial^{a_k} + & \notag\\
& \quad + (-\epsilon)^k (-1)^{[a]\sum_{i=1}^k [a_i]} \kappa^a \partial^{a_1}
\cdots\partial^{a_k}, & \\
 \partial^{b_k}\cdots\partial^{b_1} \kappa^b &= \sum_{j=1}^k  (-\epsilon)^{j-1}
(-1)^{\sum_{i=1}^{j-1} [b_i] [b]} \  \varepsilon^{b b_j} \
\partial^{b_k}\cdots\partial^{b_{j+1}}\partial^{b_{j-1}}\cdots\partial^{b_1}+ &
\notag \\
& \quad+ (-\epsilon)^k (-1)^{[b]\sum_{i=1}^k [b_i]} \kappa^b \partial^{b_k}
\cdots\partial^{b_1}. &
\end{align}
We show now that the two underlined terms cancel. We write them here without the
factor $e^{(\kappa\cdot c_1)} e^{(\kappa'\cdot c_2)} |_{\kappa,\kappa'=0}.$
The first underlined term is
\begin{align}
& \epsilon (-1)^{[a]} \varepsilon_{a_1b_1}\dots \varepsilon_{a_kb_k}
(-1)^{([a]+[b])([b_1]+\cdots+[b_k])} \partial^{a_1}\cdots\partial^{a_k}
\kappa^a\partial^b \partial'^{b_k}\cdots\partial'^{b_1} = & \notag\\
&=\epsilon (-1)^{[a]} \varepsilon_{a_1b_1}\dots \varepsilon_{a_kb_k}
(-1)^{([a]+[b])([b_1]+\cdots+[b_k])} \;\; \times & \notag \\
& \quad \times \sum_{j=1}^k (-\epsilon)^{k-j} (-1)^{\sum_{i=1}^k [a][a_i]}
\varepsilon^{a a_j} \partial^{a_1}\cdots \partial^{a_{j-1}} \partial^{a_{j+1}}
\cdots \partial^{a_k} \partial^b \partial'^{b_k}\cdots\partial'^{b_1} +\mathcal{Z}_1
= & \notag\\
& = \epsilon (-1)^{[a][b]} \sum_{j=1}^k (-1)^{([a]+[b])([a_1]+\cdots+[a_{j-1}])}
\varepsilon_{a_{1}b_{1}} \dots \varepsilon_{a_{j-1}b_{j-1}} \varepsilon_{a_{j+1}b_{j+1}}
\dots \varepsilon_{a_{k}b_{k}} & \notag\\
& \qquad\qquad \qquad \times\partial^{a_1}\cdots \partial^{a_{j-1}}
\partial^b \partial^{a_{j+1}} \cdots \partial^{a_k} \partial'^{b_k}\cdots
\partial'^{b_{j+1}}\partial'^a \partial'^{b_{j-1}}\cdots\partial'^{b_1} +\mathcal{Z}_1.
&
\end{align}
The second underlined term is
\begin{align}
& -(-1)^{[a][b]+[b]}\varepsilon_{a_1b_1}\dots \varepsilon_{a_kb_k}
\partial^{a_1}\cdots \partial^{a_k} \partial'^{b_k}\cdots\partial'^{b_1}
\kappa'^b\partial'^a = & \notag\\
& = -(-1)^{[a][b]+[b]} \varepsilon_{a_1b_1}\dots \varepsilon_{a_kb_k} \partial^{a_1}
\cdots \partial^{a_k} \;\; \times & \notag \\
& \quad \times
\sum_{j=1}^k  (-\epsilon)^{j-1}  (-1)^{\sum_{i=1}^{j-1} [b_i] [b]} \
\varepsilon^{b b_j} \ \partial'^{b_k}\cdots\partial'^{b_{j+1}}\partial'^{b_{j-1}}
\cdots\partial'^{b_1} \partial'^a +\mathcal{Z}_2 = & \notag\\
& = -\epsilon (-1)^{[a][b]} \sum_{j=1}^k (-1)^{([a]+[b])([b_1]+\cdots+[b_{j-1}])}
\varepsilon_{a_{1}b_{1}} \dots \varepsilon_{a_{j-1}b_{j-1}} \varepsilon_{a_{j+1}b_{j+1}}
\dots \varepsilon_{a_{k}b_{k}} & \notag\\
& \qquad\qquad\qquad\times \partial^{a_1}\cdots \partial^{a_{j-1}}
\partial^b \partial^{a_{j+1}} \cdots \partial^{a_k} \partial'^{b_k}\cdots
\partial'^{b_{j+1}}\partial'^a \partial'^{b_{j-1}}\cdots\partial'^{b_1} +
\mathcal{Z}_2.&
\end{align}
Let us remark that $\mathcal{Z}_1,\mathcal{Z}_2$ are proportional to
$\kappa,\kappa'$ respectively and, therefore, vanish.
As we see, the two underlined terms really cancel. The two non-underlined terms
cancel, too.


\end{document}